\documentclass[12pt]{article}
\usepackage{amsmath,amsfonts,amssymb, amsthm}
\usepackage{hyperref} 
\hypersetup{colorlinks=true, citecolor=blue}
\usepackage{multirow} 
\usepackage{caption} 
\usepackage{kotex}
\usepackage{xcolor} 
\usepackage[sort]{natbib}
\usepackage{fullpage} 
\usepackage{float} 
\usepackage{chngpage} 
\usepackage{lscape}
\usepackage{mathtools}
\usepackage[blocks]{authblk}
\usepackage{booktabs} 
\usepackage{pbox} 
\usepackage{algpseudocode, algorithm}
\usepackage{enumerate}
\usepackage{bigints}
\usepackage{tikz}

\usepackage{graphicx}
\usepackage{url}
\usepackage{stmaryrd}

\usepackage{textcomp} 

\newtheorem{thm}{Theorem}

\newtheorem{prop}{Proposition}
\newtheorem{fact}{Fact}
\newtheorem{lem}{Lemma}

\newtheorem{assumption}{Assumption}

\newcommand{\bA}{\boldsymbol{A}}
\newcommand{\bB}{\boldsymbol{B}}

\newcommand{\bI}{\boldsymbol{\rm I}}

\newcommand{\bM}{\boldsymbol{M}}

\newcommand{\bP}{\boldsymbol{P}}
\newcommand{\bQ}{\boldsymbol{Q}}

\newcommand{\bS}{\boldsymbol{S}}

\newcommand{\bU}{\boldsymbol{U}}
\newcommand{\bV}{\boldsymbol{V}}

\newcommand{\bX}{\boldsymbol{X}}

\newcommand{\br}{\boldsymbol{r}}

\newcommand{\bt}{\boldsymbol{t}}
\newcommand{\bu}{\boldsymbol{u}}

\newcommand{\bw}{\boldsymbol{w}}
\newcommand{\bx}{\boldsymbol{x}}
\newcommand{\by}{\boldsymbol{y}}
\newcommand{\bz}{\boldsymbol{z}}

\newcommand{\bzero}{\boldsymbol{0}}

\newcommand{\bbeta}{\boldsymbol{\beta}}

\newcommand{\btheta}{\boldsymbol{\theta}}
\newcommand{\brho}{\boldsymbol{\rho}}

\newcommand{\bSigma}{\boldsymbol{\Sigma}}

\newcommand{\diag} {{{\rm {diag}}}}
\newcommand{\tr} {{{\rm {tr}}}}

\DeclareMathOperator*{\argmin}{arg\,min}
\DeclareMathOperator*{\Argmin}{Arg\,min}

\newcommand{\pkg}[1]{\texttt{#1}} 
\newcommand{\indep}{\perp\!\!\!\!\perp}

\title{Linear Shrinkage Convexification of Penalized Linear Regression With Missing Data}

\author[1,2]{Seongoh Park}

\author[3]{Seongjin Lee}

\author[4]{Nguyen Thi Hai Yen}
\author[4]{Nguyen Phuoc Long}

\author[5]{Johan Lim\footnote{To whom all correspondence should be addressed. Email: \texttt{johanlim@snu.ac.kr}}}

\affil[1]{School of Mathematics, Statistics and Data Science, Sungshin Women's University, Seoul, Korea} 
\affil[2]{Data Science Center, Sungshin Women's University, Seoul, Korea}
\affil[3]{Statistics and Operations Research, University of North Carolina at Chapel Hill, North Carolina, U.S.} 
\affil[4]{Department of Pharmacology and PharmacoGenomics Research Center, Inje University College of Medicine, Busan, Korea} 
\affil[5]{Department of Statistics, Seoul National University, Seoul, Korea}

\date{}

\begin{document}

\maketitle

\begin{abstract}
\noindent \medskip 

One of the common challenges faced by researchers in recent data analysis is missing values. In the context of penalized linear regression, 
which has been extensively explored over several decades, missing values introduce bias and yield a non-positive definite covariance matrix of the 
covariates, rendering the least square loss function non-convex. In this paper, we propose a novel procedure called the linear shrinkage positive 
definite (LPD) modification to address this issue. The LPD modification aims to modify the covariance matrix of the covariates in order to ensure consistency 
and positive definiteness. Employing the new covariance estimator, we are able to transform the penalized regression problem into a convex one, 
thereby facilitating the identification of sparse solutions. Notably, the LPD modification is computationally efficient and can be expressed analytically. 
In the presence of missing values, we establish the selection consistency and prove the convergence rate of the $\ell_1$-penalized regression estimator with LPD, showing an $\ell_2$-error convergence rate of square-root of $\log p$ over $n$ {\color{black}by a factor of $(s_0)^{3/2}$ ($s_0$: the number of non-zero coefficients).} 
To further evaluate the effectiveness of our approach, we analyze real data from the 
Genomics of Drug Sensitivity in Cancer (GDSC) dataset. This dataset provides incomplete measurements of drug sensitivities of cell lines and their protein expressions. 
We conduct a series of penalized linear regression models with each sensitivity value serving as a response variable and protein expressions as explanatory variables.

\noindent{Keyword:} General missing dependency, lasso, positive definiteness.

\end{abstract}

\section{Introduction}

Regularized or penalized linear regression has been largely explored for decades, motivated from a variety of modern applied fields \citep{Lee:2003,Ghosh:2005,Daye:2012,Han:2020} where the sample size is much smaller than the number of variables to be analyzed. Among different regularizations in linear regression such as ridge \citep{Hoerl:1970}, lasso \citep{Tibshirani:1996, Zou:2006}, Dantzig selector \citep{Candes:2007}, elastic net \citep{Zou:2005}, SCAD \citep{Fan:2001}, the lasso regression has gained its popularity because its statistical properties \citep{Zhao:2006, Lee:2015,Zou:2006,Geer:2009,Fu:2000} and computational aspects \citep{Efron:2004,Osborne:2000,Friedman:2007} are well established.

Though the technology for data collection has exceptionally advanced in recent years, one common issue that researchers face in data analyses is missing values.
Our motivating example is drug response data (\url{https://www.cancerrxgene.org/}, Release v8.4, July 2022) and the pan-cancer proteomic profile of 8,498 proteins from 949 human cancer cell lines (28 tissue types, more than 40 cancer types) \citep{Goncalves:2022}. This study was to measure the sensitivities (IC50/AUC) of cells to different drugs and aimed to find the association between drug responses and protein levels. Missing data are widely seen in mass spectrometry (MS)-based proteomics \citep{Bobbie-Jo:2015} or metabolomics \citep{Wei:2018}. Causes for missing values could be biological or technical (e.g., stochastic fluctuations during data acquisition) and of random or not at-random \citep{Karpievitch:2012}. Unless treated appropriately, incomplete data often lead to biased results and hamper study reproducibility \citep{Dabke:2021}. For instance, for the lasso regression \cite{Sorensen:2015} showed that a naive approach using the incomplete data without correction does not satisfy estimation consistency (see Proposition 1 therein). 


%

Many researchers have come up with different solutions to address this issue under linear regression models.
First, the expectation-maximization (EM) algorithm is developed by \cite{Stadler:2010} where they aimed to find the sparse inverse covariance matrix and used it in the sparse linear regression. However, the EM algorithm is model-specific and known to converge slowly. 
Alternatively, variable selection can be combined with multiple imputation that is commonly used in practice. For example, one can perform majority votes based on selection results from multiply imputed datasets \citep{Heymans:2007,Wood:2008,Lachenbruch:2011,Long:2015}. To avoid the ad-hoc rules for combining different sets of selected variables, \cite{Wan:2015} and \cite{Li:2023} considered stacking imputed datasets and selected the same variables across all datasets, which is termed as a stacked method in \cite{Du:2022}. In \cite{Chen:2013}, they proposed the group-wise selection approach to consistently choose variables across imputed datasets, which is named a grouped method in \cite{Du:2022}. These methods exhibited satisfactory performance in simulated and real data analyses; however, theoretical evidences are elusive.

%
%
%
%
%
%

To fill this gap, researchers have paid attention on de-biasing approaches. These are based on the observation that a loss function, for example, mean squared error, is biased if data are not completely observed. Thus, related work adjusted it by adding or multiplying de-biasing constants to the covariance part or Gram matrix (e.g. see (\ref{eq:IPW_correction})) and solved the corrected optimization problem with different penalization methods; for example, \cite{Liang:2009} used the SCAD penalty, and \cite{Loh:2012} adopted the lasso penalty. Following \cite{Loh:2012} where estimation consistency is proved, \cite{Sorensen:2015} additionally showed sign consistency under the irrepresentable condition adapted to their contexts.
This line of work, however, has a computational issue that the modified loss function is no longer convex. It was sidestepped in \cite{Rosenbaum:2010} and \cite{Wang:2019} by using Dantzig selector that is always defined as a linear programming regardless of the modification.

A more fundamental remedy for the non-convexity is to modify the corrected covariance factor $\widehat{\bSigma}$ to be positive definite (PD). To this end, \cite{Datta:2017} found the closest PD matrix to $\widehat{\bSigma}$ using the element-wise maximum norm:
\begin{equation}\label{eq:PD_elem_max}
\tilde{\bSigma}^{CoCo} = \argmin_{\bSigma\succ 0} \|\widehat{\bSigma} - \bSigma \|_{\max}.	
\end{equation}
Using it, they solved the $\ell_1$-penalized regression problem, which is named CoColasso, and proved estimation and selection consistency under regular conditions including the irrepresentable condition. This area of research has been recently studied further.
Though handling the measurement error not missing data, \cite{Zheng:2018} and \cite{Zhang:2022} proposed to use different penalty functions, a combination of $\ell_1$- and concave penalty, and $\ell_0$-penalty, respectively, to ensure better theoretical properties of estimators (i.e. faster oracle inequality).
\cite{Escribe:2021} considered partially corrupted data where some of explanatory variables are corrupted under some measurement error model and the others are not. Thus, they only solved (\ref{eq:PD_elem_max}) for a smaller dimension at which the measurement errors are found. 
On the other hand, in solving (\ref{eq:PD_elem_max}), \cite{Takada:2019} suggested to downweight components at which samples are highly missing. To do so, they used a weighted version of Frobenius norm.


However, solving (\ref{eq:PD_elem_max}) is computationally demanding in general because it does not have a closed form solution. More specifically, the eigen-decomposition of a $p$-dimensional symmetric matrix and projection of a $p^2$-dimensional vector to $\ell_1$-ball are repeated until convergence \citep{Datta:2017,Han:2014}. \cite{Takada:2019} used the (weighted) Frobenius norm to find the closest PD matrix in which the eigen-decomposition is also repeated.
Because of this, the existing methods mentioned above may not be practically useful. The heavy workload can greatly impede further inference procedures using regularized estimators such as bolasso (bootstrapped enhanced lasso, \cite{Bach:2008}) and a modified residual bootstrapped lasso, which are based on resampling procedures (\cite{Chatterjee:2011,Chatterjee:2013} or stability selection \citep{Meinshausen:2010}). Moreover, there is a need for solving the penalized regression recursively; e.g. online learning procedure \citep{Duchi:2009,Langford:2008,Xiao:2009}.

In this paper, we propose the linear shrinkage positive definite (LPD) modification of the covariance matrix for the high-dimensional regression problem with incomplete data.
The key idea is to reduce the class of PD matrices over which the minimization (\ref{eq:PD_elem_max}) is taken. We consider the linear shrinkage class defined in (\ref{eqn:lin-class}). In other words, we shrink the non-PD $\widehat{\bSigma}^{\rm IPW}$ (corrected estimator defined in (\ref{eq:IPW_correction})) to $\mu \bI$ as $\alpha \widehat{\bSigma}^{\rm IPW} + (1-\alpha) \mu \bI$ for some $\alpha$ and $\mu$. 
The proposed way is easy and straightforward due to its simple form, and above all, computationally fast since the optimal $\alpha$ and $\mu$ have explicit forms (see (\ref{eqn:alpha-sol}) and Proposition \ref{prop:LPD_solution}). 
Based on the new covariance estimators, we convexify the penalized regression problem and thus can easily find the sparse solution $\widehat{\bbeta}^{\rm LPD}$ to (\ref{eqn:lasso-problem}). 
Furthermore, under the irrepresentable condition, we establish the selection consistency and prove the rate of convergence by $O_p \left(\sqrt{\log p/n} \right)$ in $\ell_2$-error, which is comparable to what was previously achieved by CoColasso \citep{Datta:2017}. One of the key tools to prove the results is the non-asymptotic inequality of the IPW estimator (Theorem \ref{thm:IPW_spectral} in Supplementary Materials \ref{sec:IPW_spectral}), which can be of independent interest.
Our numerical study also reveals the proposed one performs comparatively in the finite sample scenarios.
We also analyze real data from Genomics of Drug Sensitivity in Cancer (GDSC) where sensitivity to different drugs and protein expressions was measured but incompletely. We separately run a list of penalized linear regression models with each of sensitivity values as a response variable and protein expressions as explanatory variables, which would have not been feasible if our estimation procedure were not scalable like CoColasso.


The remainder of the paper is organized as follows. In Section 2, we define different classes of linear shrinkage estimators from different matrix norms. Then, we describe how to use the modified Gram matrix in the lasso regression and verify theoretical properties of the resulting lasso estimator under some conditions. In Section 3, we examine the finite sample performance of the proposed method compared to existing methods through simulated data. In Section 4, the proposed regularized regression is applied to incomplete data from Genomics of Drug Sensitivity in Cancer (GDSC) to identify the most predictive proteins for two example drugs. In Section 5, we conclude this paper with a discussion of limitations and potential extensions.

\section{Convexification of Lasso using LPD} 

\subsection{Problem formulation}

We assume a linear relationship between explanatory variables $\bx_i= (x_{i1}, \ldots, x_{ip})^\top$ and a response variable $y_i$, which is represented by regression coefficients $\bbeta=(\beta_1, \ldots, \beta_p)^\top$:
\begin{equation}\label{eq:reg_model}
y_i = \bx_i^\top \bbeta + \epsilon_i, \quad i=1,\ldots, n,
\end{equation}
where $\epsilon_i$ is an error term independent of $\bx_i$, and samples are independent across $i=1,\ldots, n$.
For ease of exposition, we assume all the variables are centered; $\mathbb{E}x_{ij} = \mathbb{E}\epsilon_i=0$ and thus $\mathbb{E}y_i=0$.
Due to the missing structure, we can only observe $\tilde{y}_i, \tilde{\bx}_i = (\tilde{x}_{i1}, \ldots, \tilde{x}_{ip})^\top$ where
\begin{equation}\label{eq:data_str}
\tilde{y}_i = \begin{cases}
y_i, & \text{if } y_i \text{ is observed},\\
0, & \text{otherwise},
\end{cases}
\quad
\tilde{x}_{ij} = \begin{cases}
x_{ij}, & \text{if } x_{ij} \text{ is observed},\\
0, & \text{otherwise}.
\end{cases}
\end{equation}
Adopting matrix notations, we write $\tilde{\by} = (\tilde{y}_1, \ldots, \tilde{y}_n)^\top$ and $\tilde{\bX} = [\tilde{\bx}_1, \ldots, \tilde{\bx}_n]^\top$.
The penalized regression problem of our interest would be defined by minimizing the residual sum of squares 
$$
\min_{\bbeta} \frac{1}{2n} \| \tilde{\by} - \tilde{\bX}\bbeta \|_2^2 + J_{\lambda}  ( {\bbeta})
$$
for some penalty function $J_\lambda$ indexed by a tuning parameter $\lambda >0$. The problem can be depicted with covariance terms, $\bS= {\tilde{\bX}}^{\top} \tilde{\bX}/n$ and $\br = {\tilde{\bX}}^{\top}\tilde{\by} / n$, i.e.
\begin{equation}\label{eq:reg_lse_complete}
\min_{\bbeta }\frac{1}{2}{\bbeta}^{\top} \bS {\bbeta} - \br^{\top} {\bbeta} + J_{\lambda}( {\bbeta}) \equiv g(\bbeta; \bS, \br, J_\lambda).
\end{equation}
However, bias caused by missing values in $\bS$ and $\br$ renders the optimal solution of the above inconsistent. A straightforward remedy is to adjust the bias through an inverse probability weighting (IPW) technique and to use the corrected estimators: i.e. $\bS \leftarrow \widehat{\bSigma}^{\rm IPW}, \br \leftarrow \hat{\brho}^{\rm IPW}$. The IPW estimators are defined by correcting every component with an observation probability:
\begin{equation}\label{eq:IPW_correction}
\widehat{\bSigma}^{\rm IPW} = \bS * \left[\dfrac{1}{\pi^{xx}_{jk}}, 1 \le j,k\le p \right], \quad
\hat{\brho}^{\rm IPW} = \br * \left[\frac{1}{\pi^{xy}_j},  1\le j \le p \right],	
\end{equation}
where $*$ is the element-wise product between two matrices (or vectors) of the same size. $\pi^{xx}_{jk}$ is a probability that the $(j,k)$-th explanatory variables are observed, and $\pi^{xy}_{j}$ that the $j$-th explanatory variable and response variable are observed. They are precisely defined in Assumption \ref{assm:multi_bern}.
The idea of replacing the sample covariances by the IPW estimators has been used in covariance/precision matrix estimation \citep{Park:2022_stat,Lounici:2014,Park:2019,Park:2021,Pavez:2021,Cai:2016:JMA}.
However, $\widehat{\bSigma}^{\rm IPW}$ is not PD in general, and thus $g(\bbeta; \widehat{\bSigma}^{\rm IPW}, \hat{\brho}^{\rm IPW}, J_\lambda)$ in (\ref{eq:reg_lse_complete}) is not convex, even if $J_\lambda$ is convex (e.g. lasso penalty). Thus, we use a PD alternative based on the linear shrinkage method \citep{Ledoit:2004, Choi:2019}, which finds a PD matrix closest to the non-PD in the linear shrinkage class. It solves 
\begin{equation} \label{eqn:dist-min}
\Phi_{\mu, \alpha}(\widehat{\bSigma}^{\rm IPW}) \in 
\Argmin_{\Phi_{\mu, \alpha} \in \mathcal{C}_{\epsilon}(\widehat{\bSigma}^{\rm IPW})} \left\| \widehat{\bSigma}^{\rm IPW} - \Phi_{\mu, \alpha} \right\|,
\end{equation} 
for some matrix norm $\|\cdot \|$, where $\mathcal{C}_{\epsilon}$ is a class of the linear shrinkage matrices defined in (\ref{eqn:PD_general}). Hereafter, we name the PD modification using the linear shrinkage method as LPD and denote the solution $\Phi_{\mu, \alpha}(\widehat{\bSigma}^{\rm IPW})$ by $\widehat{\bSigma}^{\rm LPD}$ for notational simplicity. 
In the following sections, we give a detailed account of explicit forms of LPDs in different matrix norms (Section 2.2). In the next section (Section 2.3), we study theoretical properties of the solution of the lasso regression:
\begin{equation} \label{eqn:lasso-problem} 
\min_{\bbeta} ~ \frac{1}{2}{\bbeta}^{\top} \widehat{\bSigma}^{\rm LPD} {\bbeta} - \bbeta^{\top} \hat{\brho}^{\rm IPW}  + \lambda  \| {\bbeta}\|_1,
\end{equation} 
where $\widehat{\bSigma}^{\rm LPD}$ is applied as the Gram matrix.

We end this section by introducing the results of \cite{Lee:2015} where the authors study a generalized framework for the regularized M-estimators that includes our problem (\ref{eqn:lasso-problem}). To prove the rate of convergence in terms of $\ell_2$-error and consistent recovery of the support, they assumed three conditions (i) restricted strong convexity (RSC), (ii) irrepresentability condition (IR), and (iii) bounded gradient condition (BG). We refer to {Supplementary Materials \ref{sec:proof_Lee_result}} or the original reference for more details about the formulation. 
In our context, the IR and BG conditions are simplified to the condition (C1) and (C2) of Proposition \ref{prop:LPD_solution}, while {\color{black}the RSC condition is reduced to (C3) of it due to the linear shrinkage structure.}

To describe the results, we introduce notations. Consider the model space $M_{\mathcal{A}} = \{\bbeta \in \mathbb{R}^p: \beta_j=0, j \in \mathcal{A}^c \}$ where $\mathcal{A} \subset [p]$ is the support of true parameter $\bbeta^*$. We divide a square matrix using the support $\mathcal{A}$ and denote by $\bA_{\mathcal{A} \mathcal{A}}, \bA_{\mathcal{A} \mathcal{A}^c}, \bA_{\mathcal{A}^c \mathcal{A}}, \bA_{\mathcal{A}^c \mathcal{A}^c}$, each of which restricts rows and columns of $\bA$ on corresponding index sets. We denote by $\lambda_{\min}(\bA)$ or $\lambda_{\max}(\bA)$ the smallest or largest eigenvalue of $\bA$, respectively.
Then, we can easily derive the following based on the results in \cite{Lee:2015}. 
Remark that the norm in (C1) is the matrix $\ell_\infty$-norm (i.e. maximum of column-wise sum) and the one in (C2) is the element-wise maximum norm of a vector.
\begin{prop} \label{prop:consistency}
	{\color{black}Assume $\lambda_{\min}(\widehat{\bSigma}^{\rm IPW}) < 0$.
		For $\epsilon>0$ such that $\epsilon < \lambda_{\min}(\bSigma)$, define by $\widehat{\bSigma}^{\rm LPD}$ the LPD of $\widehat{\bSigma}^{\rm IPW}$ over the class $\mathcal{C}_{\epsilon}(\widehat{\bSigma}^{\rm IPW})$.} Suppose there exists constants $\tilde{\tau} \in (0,1)$ and $\lambda>0$ such that:
	\begin{align*}
	&\text{(C1)} \quad \left\|{\widehat{\bSigma}^{\rm LPD}}_{\mathcal{A}^c \mathcal{A}}({\widehat{\bSigma}^{\rm LPD}}_{\mathcal{A} \mathcal{A}})^{-1} \right\|_\infty \le 1-\tilde{\tau}, \\
	&\text{(C2)} \quad \frac{4(2-\tilde{\tau})}{\tilde{\tau}} \left\|\widehat{\bSigma}^{\rm LPD} \bbeta^* - \hat{\brho}^{\rm IPW} \right\|_\infty < \lambda,\\
	&\text{(C3)} \quad 
	\min_{\bt: \bt \neq 0, \bt_{\mathcal{A}^c}=0} \bt^{\top} \widehat{\bSigma}^{\rm LPD} \bt / \bt^\top \bt \ge \min\{0.5\lambda_{\min}(\bSigma_{\mathcal{A}\mathcal{A}}), \mu \},
	\end{align*}
	Then, the followings hold:
	\begin{align*}
	&\text{(R1)} \quad \text{The minimizer }\widehat{\bbeta}^{\rm LPD} \text{ of (\ref{eqn:lasso-problem}) is unique,} \\
	&\text{(R2)} \quad \|\widehat{\bbeta}^{\rm LPD} - {\bbeta}^*\|_2 \le
	\frac{4}{\color{black} \min\{\lambda_{\min}(\bSigma_{\mathcal{A}\mathcal{A}}), \mu \}} \left( 1 + \frac{\tilde{\tau}}{4} \right)\sqrt{|\mathcal{A}|} \lambda,\\
	&\text{(R3)}  \quad \hat{\beta}^{\rm LPD}_{j} = 0, \quad j \in \mathcal{A}^c.
	\end{align*}
	
\end{prop}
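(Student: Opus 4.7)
The plan is to recognize (\ref{eqn:lasso-problem}) as a quadratic-loss regularized M-estimator and then apply the unified framework of \cite{Lee:2015} essentially as a black box, so the entire proof reduces to matching our three conditions (C1)--(C3) to the three structural assumptions of that framework (irrepresentability (IR), bounded gradient (BG), restricted strong convexity (RSC)) and then reading off (R1)--(R3) from the corresponding theorem. Concretely, I would set $L_n(\bbeta) = \tfrac{1}{2}\bbeta^{\top}\widehat{\bSigma}^{\rm LPD}\bbeta - \bbeta^{\top}\hat{\brho}^{\rm IPW}$ and $R(\bbeta)=\lambda\|\bbeta\|_1$, so that the Hessian of $L_n$ is exactly $\widehat{\bSigma}^{\rm LPD}$ and the gradient at the truth is $\widehat{\bSigma}^{\rm LPD}\bbeta^{*}-\hat{\brho}^{\rm IPW}$.

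With these identifications, condition (C1) is literally the IR condition written in terms of the matrix $\ell_\infty$-norm on the Hessian blocks indexed by the active set $\mathcal{A}$, and (C2) is the BG condition with the standard scaling $4(2-\tilde\tau)/\tilde\tau$ on the $\ell_\infty$-norm of $\nabla L_n(\bbeta^{*})$ below $\lambda$. For (C3), the general RSC requirement of \cite{Lee:2015} asks that the restricted quadratic form of the Hessian be bounded below by a positive constant $\kappa$ on the model cone $\{\bt:\bt_{\mathcal{A}^c}=0\}$; I would argue that because $\widehat{\bSigma}^{\rm LPD}$ lies in the linear shrinkage class $\mathcal{C}_{\epsilon}(\widehat{\bSigma}^{\rm IPW})$, i.e.\ it has the form $\alpha\widehat{\bSigma}^{\rm IPW}+(1-\alpha)\mu\bI$, the quadratic form splits into an IPW piece whose infimum on the support reduces via a population comparison to $0.5\,\lambda_{\min}(\bSigma_{\mathcal{A}\mathcal{A}})$ and an identity piece contributing $\mu$, giving the displayed $\kappa=\min\{0.5\,\lambda_{\min}(\bSigma_{\mathcal{A}\mathcal{A}}),\mu\}$ in (C3). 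This is precisely the place where the linear shrinkage structure is exploited and where the assumption $\epsilon<\lambda_{\min}(\bSigma)$ is used.

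Having verified IR, BG, RSC with these explicit constants, (R1) and (R3) follow verbatim from \cite{Lee:2015}: strict convexity on the restricted support from RSC together with the KKT decomposition under IR yields both uniqueness of the minimizer and exact support recovery ($\hat\beta^{\rm LPD}_{j}=0$ for $j\in\mathcal{A}^c$). For (R2), Lee's $\ell_2$-error bound takes the form $\|\widehat{\bbeta}-\bbeta^{*}\|_2 \le \kappa^{-1}(1+\tilde\tau/4)\sqrt{|\mathcal{A}|}\,\lambda$ up to the factor that absorbs the BG/IR constants; substituting $\kappa$ from (C3) reproduces the stated bound with prefactor $4(1+\tilde\tau/4)/\min\{\lambda_{\min}(\bSigma_{\mathcal{A}\mathcal{A}}),\mu\}$.

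The main obstacle I anticipate is not any of the plug-in steps above but rather bookkeeping in the RSC reduction: one has to ensure that the lower bound $\mu$ coming from the identity shift is genuinely uniform over the restricted cone, and that the coefficient of the IPW piece is large enough (via $\alpha$ and the assumption $\epsilon<\lambda_{\min}(\bSigma)$) to guarantee the alternative bound $0.5\,\lambda_{\min}(\bSigma_{\mathcal{A}\mathcal{A}})$ on the event we work on; the two-sided minimum in (C3) is precisely what allows one to cover both regimes without a finer case analysis. Everything else is a direct quotation of \cite{Lee:2015}.
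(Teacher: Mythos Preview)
Your proposal is correct and matches the paper's own proof almost step for step: the paper likewise casts (\ref{eqn:lasso-problem}) as a regularized M-estimator with loss $\ell(\bbeta)=\tfrac12\bbeta^\top\widehat{\bSigma}^{\rm LPD}\bbeta-\bbeta^\top\hat{\brho}^{\rm IPW}$, verifies that Lee's IR, BG, and RSC reduce respectively to (C1), (C2), and (C3) (using the linear-shrinkage form $\alpha\widehat{\bSigma}^{\rm IPW}+(1-\alpha)\mu\bI$ and the convex-combination bound $\alpha\lambda_{\min}(\widehat{\bSigma}^{\rm IPW}_{\mathcal{A}\mathcal{A}})+(1-\alpha)\mu\ge\min\{\lambda_{\min}(\widehat{\bSigma}^{\rm IPW}_{\mathcal{A}\mathcal{A}}),\mu\}$ for RSC), and then reads off (R1)--(R3) from \cite{Lee:2015} with $\kappa_\rho=\sqrt{|\mathcal{A}|}$, $\kappa_{\rm IC}\le 2-\tilde\tau$, and $m=\min\{0.5\lambda_{\min}(\bSigma_{\mathcal{A}\mathcal{A}}),\mu\}$. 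The only extra bookkeeping the paper does that you gloss over is working through Lee's support-function/gauge formalism to show that IR and BG in that abstract language indeed collapse to the familiar matrix-$\ell_\infty$ forms in (C1) and (C2), but this is routine.
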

\noindent 
The proof of Proposition \ref{prop:consistency} is postponed to Supplementary Materials \ref{sec:proof_Lee_result}, which is offered solely for completeness. We do not assert any contribution to it.

\subsection{Explicit forms of LPD} 

In the estimation of high dimensional covariance matrix \citep{Bickel:2008_banding,Bickel:2008_thresholding, Rothman:2012}, structural assumptions on true covariance matrix are often made, and many regularized estimators are proposed accordingly. However, the regularization typically does not impose PDness, which makes the resulting estimate not PD in general. Several efforts are made to find an estimator that attains both sparsity and PDness \citep{Bien:2011,Lam:2009,Liu:2014,Rothman:2012,Xue:2012,Choi:2019}. Among them, the fixed 
support positive definite modification (FSPD) by \cite{Choi:2019} is initially designed to make a covariance matrix estimator PD while preserving its support as its name indicates. However, FSPD is still tempting even for cases where we do not have structural assumptions on covariance matrices but need PDness.
Since it is computationally easy and is applicable to any non-PD matrix, we adopt this idea for estimating the PD gram matrix under the missing data structure.

Let $\bA$ be a real symmetric matrix to be modified PD. For a given $\epsilon>0$, we define the class of LPD by
\begin{equation} \label{eqn:lin-class} 
\mathcal{C}_{\epsilon}(\bA) = \left\{ \alpha \bA + (1-\alpha)\mu \bI : \alpha \in (0,1), \mu \in \mathbb{R}, \alpha \lambda_{\min}(\bA) + (1-\alpha)\mu \ge \epsilon \right\}.
\end{equation}
Following \cite{Choi:2019} and \cite{Cho:2021}, we minimize a distance induced by any matrix norm $|| \cdot||$:
\begin{equation}\label{eqn:PD_general}
\min_{\Phi_{\mu, \alpha} \in \mathcal{C}_{\epsilon}(\bA)} \left\| \bA - \Phi_{\mu, \alpha} \right\|.
\end{equation} 
Note that the minimization is taken over $(\mu, \alpha)$, and the distance in (\ref{eqn:dist-min}) is indeed rewritten as 
\begin{align*}
\left\| \alpha \bA + (1-\alpha) \mu \bI - \bA \right\| = (1-\alpha) \left\| \mu \bI - \bA \right\|.
\end{align*}
In the meantime, if $\lambda_{\min}(\bA) < \epsilon \le \mu$, the constraint can be expressed as
\begin{equation} \nonumber 
\alpha \lambda_{\min}(\bA)  + (1-\alpha) \mu \ge \epsilon 
\iff  \alpha \le \frac{\mu-\epsilon}{\mu - \lambda_{\min}(\bA)}.
\end{equation}
We thus know that the optimal solution $\alpha^*$ for fixed $\mu \ge \epsilon$ is
\begin{equation}\label{eqn:alpha-sol}
\alpha^*  = \alpha^*(\mu) =\frac{\mu - \epsilon}{\mu - \lambda_{\min}(\bA)}.
\end{equation}
regardless of the type of the norm. On the other hand, the solution to $\mu$ depends on the distance we use. The following proposition summarizes the results. We define matrix norms as $||\bA||_2 = \sqrt{\lambda_{\max}(\bA^\top \bA)}$, $||\bA||_F = \sqrt{\tr(\bA^\top \bA)/d_2}$, $||\bA||_{\infty}=\max_{i \in [d_1]} \sum_{j=1}^{d_2} |a_{ij}|$, $||\bA||_{\max}=\max_{i \in [d_1], j \in [d_2]} |a_{ij}|$ for any real matrix $\bA \in \mathbb{R}^{d_1 \times d_2}$.
\begin{prop}\label{prop:LPD_solution}
	For a given symmetric matrix $\bA = (a_{ij})_{1 \le i,j \le p}$ with positive diagonals, assume $\lambda_{\min}(\bA) {< 0} < \epsilon \le \mu$. The linear shrinkage $\Phi_{\mu, \alpha^*}$ of $\bA$ achieves the minimum at different values of $\mu$ according to different matrix norms.
	
	\begin{enumerate}
		\item (Spectral norm, Lemma 2 of \cite{Choi:2019})
		$$
		\left\| \bA - \Phi_{\mu, \alpha^*} \right\|_2 = \epsilon -\lambda_{\min}(\bA)
		$$
		for any	$\mu \ge \max\{\epsilon, (\lambda_{\max}(\bA) + \lambda_{\min}(\bA)) / 2\}$.
		
		\item ((Scaled) Frobenius norm, Lemma 3 of \cite{Choi:2019})		
		$$
		\left\| \bA - \Phi_{\mu^*_F, \alpha^*} \right\|_F = (\epsilon -\lambda_{\min}(\bA))\sqrt{\mu^*_F}
		$$
		where  $\mu^*_F = \sum_{j=1}^p (\lambda_j (\bA) - \bar{\lambda})^2 / \sum_{j=1}^p (\lambda_j (\bA) - \lambda_{\min}(\bA))^2$ and $\bar{\lambda}$ is an average of the eigenvalues of $\bA$, $\lambda_1(\bA), \ldots, \lambda_p (\bA)$.

		\item ($\ell_\infty$-norm, Lemma 3 of \cite{Cho:2021})
		
		$$
		\begin{array}{l}
		\left\| \bA - \Phi_{\mu, \alpha^*} \right\|_{\infty} \\
		= \begin{cases}
		\searrow \epsilon -\lambda_{\min}(\bA) \text{ as } \mu \to \infty, & \text{if } \lambda_{\min}(\bA) + M_2 > 0, \\
		\epsilon -\lambda_{\min}(\bA), & \text{for any } \mu \ge (M_1 - M_2) / 2, \\
		& \qquad \text{if } \lambda_{\min}(\bA) + M_2 = 0,\\
		(\epsilon -\lambda_{\min}(\bA))\dfrac{(M_1 + M_2)/2}{(M_1 - M_2)/2 - \lambda_{\min}(\bA)}, & \text{at } \mu = (M_1 - M_2) / 2, \\
		& \qquad  \text{if } \lambda_{\min}(\bA) + M_2 < 0,
		\end{cases}
		\end{array}
		$$
		where $M_1 = \max_j \big(a_{jj} + \sum_{i:i\neq j} |a_{ij}| \big)$ and $M_2 = \max_j \big(-a_{jj} + \sum_{i:i\neq j} |a_{ij}|\big)$. Note that if $\lambda_{\min}(\bA) + M_2 > 0$, there is no solution.
		
		\item (Element-wise maximum norm)
		$$
		\begin{array}{l}
		\left\| \bA - \Phi_{\mu, \alpha^*} \right\|_{\max} \\
		= \begin{cases}
		\dfrac{(\epsilon -\lambda_{\min}(\bA))(a_{d,\max} - a_{d,\min})/2}{(a_{d,\max} + a_{d,\min})/2 - \lambda_{\min}(\bA)}, &
		\text{at } \mu = (a_{d,\max} + a_{d,\min})/2,\\
		& \quad \text{if } (a_{d,\max} - a_{d,\min})/2  > a_{\text{off}, \max},\\
		\dfrac{(\epsilon -\lambda_{\min}(\bA)) a_{\text{off}, \max}}{a_{d,\min} + a_{\text{off}, \max} - \lambda_{\min}(\bA)}, & 
		\text{at } \mu = a_{d,\min} + a_{\text{off}, \max},\\
		& \quad \text{if } (a_{d,\max} - a_{d,\min})/2  \le a_{\text{off}, \max}.
		\end{cases}
		\end{array}
		$$
		where $a_{d,\max} = \max_j a_{jj}$, $a_{d,\min} = \min_j a_{jj}$, and $a_{\text{off}, \max} = \max_{i\neq j} |a_{ij}|$.		
	\end{enumerate}
\end{prop}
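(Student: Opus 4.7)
The proposition has four items; items 1--3 are cited from prior results of \cite{Choi:2019} and \cite{Cho:2021}, so the substantive task is to establish item 4 for the element-wise maximum norm. The plan is as follows.

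First, I would reduce \eqref{eqn:PD_general} to a scalar minimization over $\mu \ge \epsilon$ by substituting the optimal $\alpha^*(\mu)$ from \eqref{eqn:alpha-sol}, which gives
\begin{equation*}
\| \bA - \Phi_{\mu,\alpha^*} \|_{\max} = (1 - \alpha^*(\mu))\, \|\mu \bI - \bA\|_{\max} = \frac{\epsilon - \lambda_{\min}(\bA)}{\mu - \lambda_{\min}(\bA)} \cdot \|\mu \bI - \bA\|_{\max}.
\end{equation*}
Because $\mu \bI - \bA$ has diagonal entries $\mu - a_{jj}$ and off-diagonal entries $-a_{ij}$, its element-wise maximum norm simplifies to
\begin{equation*}
g(\mu) \;:=\; \|\mu \bI - \bA\|_{\max} \;=\; \max\bigl\{\, |\mu - a_{d,\max}|,\ |\mu - a_{d,\min}|,\ a_{\text{off},\max}\,\bigr\},
\end{equation*}
which is a convex, piecewise-linear function of $\mu$. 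The task then becomes minimizing $f(\mu) := g(\mu)/(\mu - \lambda_{\min}(\bA))$ over $\mu \ge \epsilon$.

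Next, I would split into the two cases dictated by the sign of $(a_{d,\max} - a_{d,\min})/2 - a_{\text{off},\max}$, i.e.\ whether the ``V'' formed by the two diagonal terms protrudes above the constant $a_{\text{off},\max}$ at its vertex $\mu_0 := (a_{d,\max}+a_{d,\min})/2$. In \textbf{Case A} ($(a_{d,\max}-a_{d,\min})/2 > a_{\text{off},\max}$), near $\mu_0$ the active piece of $g$ is $a_{d,\max}-\mu$ for $\mu \le \mu_0$ and $\mu - a_{d,\min}$ for $\mu \ge \mu_0$; a direct derivative computation gives $f'(\mu) = (\lambda_{\min}(\bA) - a_{d,\max})/(\mu-\lambda_{\min}(\bA))^2 < 0$ below $\mu_0$ and $f'(\mu) = (a_{d,\min} - \lambda_{\min}(\bA))/(\mu-\lambda_{\min}(\bA))^2 > 0$ above $\mu_0$, using the positivity of the diagonals and $\lambda_{\min}(\bA)<0$. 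Hence the minimum is at $\mu=\mu_0$, yielding the first branch of the formula. In \textbf{Case B} ($(a_{d,\max}-a_{d,\min})/2 \le a_{\text{off},\max}$), $g(\mu) \equiv a_{\text{off},\max}$ throughout the flat segment $[a_{d,\max}-a_{\text{off},\max},\, a_{d,\min}+a_{\text{off},\max}]$; on this interval $f(\mu) = a_{\text{off},\max}/(\mu-\lambda_{\min}(\bA))$ is strictly decreasing, so the minimizer over the segment is the right endpoint $\mu_1 := a_{d,\min}+a_{\text{off},\max}$. The identical derivative argument as in Case~A shows that $f$ is strictly increasing to the right of $\mu_1$ and strictly decreasing to the left of $a_{d,\max}-a_{\text{off},\max}$, so $\mu_1$ is the global minimizer, giving the second branch.

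Finally, I would check feasibility $\mu \ge \epsilon$: in both branches the candidate $\mu$ is at least $a_{d,\min}>0$, which dominates $\epsilon$ in the regime of interest $\epsilon < \lambda_{\min}(\bSigma)$. The principal obstacle is the piecewise bookkeeping: one must correctly identify, for each regime of $\mu$, which of the three terms in $g(\mu)$ is active and then rule out minimizers in the unbounded tails. Once the convex piecewise-linear structure of $g$ is laid out and the dichotomy at $\mu_0$ is used to split cases, the monotonicity of $f$ on each affine piece follows from a short derivative calculation and the signs are fixed by $\lambda_{\min}(\bA)<0<a_{d,\min}$.
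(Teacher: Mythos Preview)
Your proposal is correct and follows essentially the same route as the paper's proof: reduce to minimizing $\Psi(\mu)=\|\mu\bI-\bA\|_{\max}/(\mu-\lambda_{\min}(\bA))$, split into the same two cases according to the sign of $(a_{d,\max}-a_{d,\min})/2-a_{\text{off},\max}$, identify which of the three candidates in the max is active on each sub-interval, and read off monotonicity. The paper breaks Case~B into four explicit sub-ranges of $\mu$ whereas you use the convex piecewise-linear structure of $g$ plus a derivative sign check, but the content is identical; your feasibility remark ($\mu^\ast\ge a_{d,\min}>0$) is likewise implicit in the paper, which simply writes $\min_{\mu\ge\epsilon}$ without further comment.
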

\noindent
We only provide a proof of the last case of Proposition \ref{prop:LPD_solution}, which is in Supplementary Materials \ref{sec:pf_LPD_solution}, and for the others we refer readers to the original references. 
It should be noted that in some cases, for example, when the spectral norm is used, any choice of $\mu$ beyond some threshold is sufficient for the optimality of shrinkage. Thus, one may simply pick $\mu$ that is large enough depending on the context of the data considered. However, the choice is not sensitive in practice, which is verified in our simulation study where different candidates of $\mu$ are compared.

\subsection{Main results for consistency}

In this section, we check the two conditions in Proposition \ref{prop:consistency}, and compute the convergence rate of $\widehat{\bbeta}^{\rm LPD}$ in $\ell_2$-norm. Prior to it, we state the assumptions and data structure more precisely.

%
%
%
%

We introduce binary random variables that indicate whether each entry of data is observed or not: $\delta^y_{i}=\text{I}(y_i \text{ is observed})$, $\delta^x_{ij}=\text{I}(x_{ij} \text{ is observed})$, $i=1,\ldots, n$, $j=1,\ldots, p$. Then, we can concisely express the observed data by the product of the indicator variable and the data, i.e. 
$\tilde{y}_i = \delta^y_i y_i, \tilde{x}_{ij}= \delta^x_{ij} x_{ij}$, which is equivalent to (\ref{eq:data_str}).

We define the sub-Gaussian (or $\psi_2$-) norm of a random variable $X$ in $\mathbb{R}$ by
$$
||X||_{\psi_2} = \sup_{p \ge 1} \dfrac{(\mathbb{E}|X|^p)^{1/p}}{\sqrt{p}},
$$
and $X$ is called sub-Gaussian if its $\psi_2$-norm is bounded. Under the regression setting (\ref{eq:reg_model}), we assume the following.
\begin{assumption}\label{assm:sub_G}
	For $i=1,\ldots, n$, $\max\limits_{1\le j \le p} ||x_{ij}/\sqrt{\sigma_{jj}}||_{\psi_2} \le K^x$ and $||\epsilon_i/\sqrt{\sigma_{\epsilon\epsilon}}||_{\psi_2} \le K^{\epsilon}$, where $\sigma_{jj}=\text{Var}(x_{1j}), \sigma_{\epsilon\epsilon}=\text{Var}(\epsilon_1)$.
\end{assumption}
\noindent
Assume the indicators are Bernoulli variables with general dependency structure \citep{Dai:2013, Park:2021}, that is:
\begin{assumption}\label{assm:multi_bern}
	For $i=1,\ldots, n$, $(\delta^y_{i}, \delta^x_{i1}, \ldots, \delta^x_{ip})$ is from the multivariate Bernoulli distribution with the first two moments written by
	$$
	\mathbb{E}\delta^x_{ij} = \pi^{xx}_{jj}, \quad
	\mathbb{E}\delta^x_{ij}\delta^y_{i} = \pi^{xy}_{j}, \quad
	\mathbb{E}\delta^x_{ij}\delta^x_{ik} = \pi^{xx}_{jk}.
	$$
	More general moment is denoted as $\mathbb{E}\delta^x_{ij_1}\delta^x_{ij_2}\delta^x_{ij_3}\cdots = \pi^{xx}_{j_1j_2j_3\ldots}$.
\end{assumption}
\noindent
The missing mechanism we use is the missing completely at random (MCAR). In the current data structure, we can specify the assumption as follows.
\begin{assumption}\label{assm:mcar}
	The data and indicator variables are independent, i.e. 
	$$
	\{\epsilon_i, \bx_i\} \indep \{\delta^y_i, \delta^x_{i1}, \ldots, \delta^x_{ip}\}, \quad i=1,\ldots, n.
	$$
\end{assumption}
\noindent
The last assumption is about the class of covariance matrices for the covariates. Without loss of generality, assume the variables of interest (i.e. in the set $\mathcal{A}$) are located in front and the covariance matrix $\bSigma$ is decomposed in blocks accordingly.
\begin{assumption}\label{assm:class_cov}
	Assume the population covariance matrix $\bSigma=\text{Cov}(\bx_i)$ satisfies 
	\begin{enumerate}[(a)]
		\item $\bSigma_{\mathcal{A}\mathcal{A}}$ is positive definite, and
		
		\item the irrepresentability condition for $\bSigma$ is satisfied with respect to the support set $\mathcal{A}$, i.e., there exists $\tau \in (0,1)$ such that
		$\|\bSigma_{\mathcal{A}^c \mathcal{A}} \bSigma_{\mathcal{A} \mathcal{A}}^{-1} \|_{\infty} \le 1 - \tau$.
	\end{enumerate}	
\end{assumption}
\noindent
The first condition that the smallest eigenvalue is away from zero is not very restrictive, and the other condition is known to be sufficient and ``almost'' necessary for selection consistency \citep{Geer:2009, Lee:2015,Wainwright:2009}.

{
	Throughout this section, we define the LPD estimator as follows.
	If $\lambda_{\min}(\widehat{\bSigma}^{\rm IPW}) > 0$, construct the LPD estimator $\Phi_{\mu, \alpha}(\widehat{\bSigma}^{\rm IPW})$ by choosing $\alpha=1$ {(and any real-valued $\mu$).} {Otherwise, for $\epsilon>0$ such that $\epsilon < \lambda_{\min}(\bSigma)$}, set $\alpha =  (\mu - \epsilon) / (\mu - \lambda_{\min}(\widehat{\bSigma}^{\rm IPW}))$ and choose any $\mu$ greater than $2\epsilon$.}
Based on the assumptions, we present results that guarantee the two conditions (C1) and (C3) in Proposition \ref{prop:consistency} with high probability. 
\begin{thm}[Irrepresentability condition and RSC condition]\label{thm:IR_LPD_combined}
	Let Assumption \ref{assm:sub_G}, \ref{assm:multi_bern}, \ref{assm:mcar}, \ref{assm:class_cov} hold.
	Assume $\widehat{\bSigma}^{\rm IPW}_{\mathcal{A} \mathcal{A}}$ is non-singular. 
	Then, the LPD estimator satisfies the irrepresentability condition for some constant $\tilde{\tau} \in (0,1)$ with probability greater than $1-3/p^u$ for $u > 0$ if the sample size satisfies 
	$$
	\dfrac{n}{\pi_{\max}^{(4)} \log p} \ge c \bigg\{ 
	\dfrac{\tr(\bSigma) \max\{(K^x)^2, 1\} \sqrt{u+1}}{\min\{\tau/\left\|\bSigma_{\mathcal{A}\mathcal{A}}^{-1}\right\|_{\infty}, {  \lambda_{\min}(\bSigma_{\mathcal{A}\mathcal{A}})} \}}
	\bigg\}^2, \quad n > c \,\pi_{\max}^{(4)} (u + 1)^3 \log^3(p \vee n),
	$$
	for some $c>0$. 
	Here, $\pi_{\max}^{(4)} = \max_{k_1, k_2,\ell_1, \ell_2} \pi^{xx}_{k_1 k_2 \ell_1 \ell_2}/(\pi^{xx}_{k_1 \ell_1} \pi^{xx}_{k_2\ell_2})$. Moreover, under the same conditions, (C3) of Proposition \ref{prop:consistency} holds; {if $\lambda_{\min}(\widehat{\bSigma}^{\rm IPW}) > 0$, $\mu$ is excluded in the lower bound of (C3).}
\end{thm}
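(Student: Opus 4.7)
The plan is to exploit the additive structure $\widehat{\bSigma}^{\rm LPD} = \alpha\,\widehat{\bSigma}^{\rm IPW} + (1-\alpha)\mu \bI$ and reduce both conditions to a single deviation bound on $\widehat{\bSigma}^{\rm IPW}-\bSigma$ supplied by Theorem~\ref{thm:IPW_spectral}. Because the shrinkage toward $\mu\bI$ is block-diagonal in any ordering, the off-support block is unaffected while the support block gains a ridge-type stabilization:
\[
\widehat{\bSigma}^{\rm LPD}_{\mathcal{A}^c\mathcal{A}} = \alpha\,\widehat{\bSigma}^{\rm IPW}_{\mathcal{A}^c\mathcal{A}}, \qquad \widehat{\bSigma}^{\rm LPD}_{\mathcal{A}\mathcal{A}} = \alpha\,\widehat{\bSigma}^{\rm IPW}_{\mathcal{A}\mathcal{A}} + (1-\alpha)\mu \bI.
\]
I would first dispatch the easy case $\lambda_{\min}(\widehat{\bSigma}^{\rm IPW})>0$ in which $\alpha=1$ and the LPD collapses to the IPW estimator, and in the genuine shrinkage case use the explicit formula $1-\alpha = (\epsilon - \lambda_{\min}(\widehat{\bSigma}^{\rm IPW}))/(\mu - \lambda_{\min}(\widehat{\bSigma}^{\rm IPW}))$, which is small whenever the global spectral deviation of $\widehat{\bSigma}^{\rm IPW}$ from $\bSigma$ is small.

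\textbf{Step 1 -- Restricted strong convexity (C3).} For any $\bt$ with $\bt_{\mathcal{A}^c}=0$,
\[
\bt^\top \widehat{\bSigma}^{\rm LPD}\bt = \alpha\,\bt_{\mathcal{A}}^\top \widehat{\bSigma}^{\rm IPW}_{\mathcal{A}\mathcal{A}}\bt_{\mathcal{A}} + (1-\alpha)\mu\|\bt\|_2^2.
\]
Applying Theorem~\ref{thm:IPW_spectral} to the $\mathcal{A}\mathcal{A}$-block and invoking the sample-size bound through its $\lambda_{\min}(\bSigma_{\mathcal{A}\mathcal{A}})$-piece of the denominator yields $\lambda_{\min}(\widehat{\bSigma}^{\rm IPW}_{\mathcal{A}\mathcal{A}}) \ge \tfrac{1}{2}\lambda_{\min}(\bSigma_{\mathcal{A}\mathcal{A}})$ with the desired probability. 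A convex-combination inequality then gives $\bt^\top \widehat{\bSigma}^{\rm LPD}\bt \ge \min\{\tfrac{1}{2}\lambda_{\min}(\bSigma_{\mathcal{A}\mathcal{A}}),\mu\}\|\bt\|_2^2$, which is (C3); when $\alpha=1$ the $(1-\alpha)\mu$ term vanishes and $\mu$ drops out of the lower bound as asserted.

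\textbf{Step 2 -- Irrepresentability (C1).} Setting $\bA:=\widehat{\bSigma}^{\rm IPW}_{\mathcal{A}\mathcal{A}}$, $\bC:=\widehat{\bSigma}^{\rm IPW}_{\mathcal{A}^c\mathcal{A}}$, and $\bB:=\alpha\bA+(1-\alpha)\mu\bI$, the quantity to control is $\|\alpha\bC\bB^{-1}\|_\infty$. Adding and subtracting $\alpha\bSigma_{\mathcal{A}^c\mathcal{A}}\bB^{-1}$ and using the algebraic identity $\alpha\bSigma_{\mathcal{A}\mathcal{A}}\bB^{-1} - \bI = [\alpha(\bSigma_{\mathcal{A}\mathcal{A}}-\bA) - (1-\alpha)\mu\bI]\bB^{-1}$, I get the master decomposition
\[
\alpha\bC\bB^{-1} - \bSigma_{\mathcal{A}^c\mathcal{A}}\bSigma_{\mathcal{A}\mathcal{A}}^{-1} = \alpha(\bC-\bSigma_{\mathcal{A}^c\mathcal{A}})\bB^{-1} + \bSigma_{\mathcal{A}^c\mathcal{A}}\bSigma_{\mathcal{A}\mathcal{A}}^{-1}\bigl[\alpha(\bSigma_{\mathcal{A}\mathcal{A}}-\bA) - (1-\alpha)\mu\bI\bigr]\bB^{-1}.
\]
Each factor would then be estimated in $\|\cdot\|_\infty$: the deviation pieces $\|\bC-\bSigma_{\mathcal{A}^c\mathcal{A}}\|_\infty$ and $\|\bA-\bSigma_{\mathcal{A}\mathcal{A}}\|_\infty$ are obtained from the IPW spectral bound restricted to the small-support slices (paying only $\sqrt{|\mathcal{A}|}$); $\|\bSigma_{\mathcal{A}^c\mathcal{A}}\bSigma_{\mathcal{A}\mathcal{A}}^{-1}\|_\infty\le 1-\tau$ is Assumption~\ref{assm:class_cov}(b); $\|\bB^{-1}\|_\infty$ is controlled by a perturbation argument around $\|\bSigma_{\mathcal{A}\mathcal{A}}^{-1}\|_\infty$; and the stray $(1-\alpha)\mu\|\bSigma_{\mathcal{A}\mathcal{A}}^{-1}\|_\infty$ term is driven to $0$ using the explicit formula for $1-\alpha$ together with the global spectral concentration. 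Tuning constants so the total perturbation is below $\tau/2$ yields (C1) with $\tilde{\tau}=\tau/2$.

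\textbf{Main obstacle.} The principal difficulty is that (C1) is stated in the matrix $\|\cdot\|_\infty$ norm while Theorem~\ref{thm:IPW_spectral} only supplies a spectral bound, and a naive conversion via $\|\cdot\|_\infty\le\sqrt{p}\,\|\cdot\|_2$ would ruin the target $\sqrt{\log p/n}$ rate. The remedy is to apply the spectral inequality only on the $|\mathcal{A}|$-dimensional slices, paying a mere $\sqrt{|\mathcal{A}|}$, while simultaneously invoking a global spectral bound on $\widehat{\bSigma}^{\rm IPW}-\bSigma$ in order to keep $1-\alpha$ small. The need to verify these two bounds on the same event is precisely what forces the two separate sample-size conditions in the theorem -- the $\tr(\bSigma)$-laden main bound for the global concentration and the auxiliary $\pi_{\max}^{(4)}(u+1)^3\log^3(p\vee n)$ lower bound needed for the regularity assumptions underlying the IPW spectral inequality.
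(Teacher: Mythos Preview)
Your proposal is correct and follows the same overall architecture as the paper: the same case split on the sign of $\lambda_{\min}(\widehat{\bSigma}^{\rm IPW})$, the same block identities $\widehat{\bSigma}^{\rm LPD}_{\mathcal{A}^c\mathcal{A}}=\alpha\widehat{\bSigma}^{\rm IPW}_{\mathcal{A}^c\mathcal{A}}$ and $\widehat{\bSigma}^{\rm LPD}_{\mathcal{A}\mathcal{A}}=\alpha\widehat{\bSigma}^{\rm IPW}_{\mathcal{A}\mathcal{A}}+(1-\alpha)\mu\bI$, the same control of $1-\alpha$ via Weyl's inequality and the global spectral bound of Theorem~\ref{thm:IPW_spectral}, and an identical convex-combination argument for (C3).

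The one substantive difference is the algebra behind (C1). You compare $\alpha\bC\bB^{-1}$ to $\bSigma_{\mathcal{A}^c\mathcal{A}}\bSigma_{\mathcal{A}\mathcal{A}}^{-1}$ by an \emph{additive} decomposition and then bound $\|\bB^{-1}\|_\infty$ around $\|\bSigma_{\mathcal{A}\mathcal{A}}^{-1}\|_\infty$. The paper instead pulls out $(\widehat{\bSigma}^{\rm IPW}_{\mathcal{A}\mathcal{A}})^{-1}$ first and treats the ridge as a \emph{multiplicative} Neumann perturbation, writing $\bB^{-1}=(\widehat{\bSigma}^{\rm IPW}_{\mathcal{A}\mathcal{A}})^{-1}\bigl(\bI+\alpha^{-1}(1-\alpha)\mu(\widehat{\bSigma}^{\rm IPW}_{\mathcal{A}\mathcal{A}})^{-1}\bigr)^{-1}$; this yields the clean fraction bound (Lemma~\ref{lem:IR_LPD})
\[
\bigl\|\widehat{\bSigma}^{\rm LPD}_{\mathcal{A}^c\mathcal{A}}(\widehat{\bSigma}^{\rm LPD}_{\mathcal{A}\mathcal{A}})^{-1}\bigr\|_\infty
\le \frac{\eta_1\delta_2+\eta_2}{1-\eta_1\delta_1-\alpha^{-1}(1-\alpha)\mu\eta_1},
\]
from which the sufficient condition $\delta_1+\delta_2+\mu(\mu-\epsilon)^{-1}\delta_3<\tau/\eta_1$ drops out directly. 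Your additive route reaches the same destination with a few more terms to track; the paper's multiplicative form is a bit tidier because it isolates the ridge contribution in the denominator. As for your ``main obstacle,'' the paper resolves it in exactly the same spirit, bounding the block $\ell_\infty$ deviations $\delta_1,\delta_2$ by the full spectral deviation $\|\widehat{\bSigma}^{\rm IPW}-\bSigma\|_2$ and then invoking Theorem~\ref{thm:IPW_spectral} once.
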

\noindent
To prove the theorem, we first show in Theorem \ref{thm:IR_LPD_neg} and \ref{thm:IR_LPD_pos} that the irrepresentability condition holds for $\widehat{\bSigma}^{\rm LPD}$ if $\bSigma$ is in the small neighborhood of the IPW estimator in terms of $\ell_\infty$, $2$-norms. The probability of being in the neighborhood is calculated in the proof of Theorem \ref{thm:IR_LPD_combined}. Technical details can be found in Supplementary Materials \ref{sec:pf_thm_IR}. 
In Lemma 6 of \cite{Datta:2017}, they also showed similar results: if a surrogate estimator $\tilde{\bSigma}$, which is the LPD estimator in our context, is close enough to $\bSigma$, then $\tilde{\bSigma}_{\mathcal{A}^c \mathcal{A}}\tilde{\bSigma}_{\mathcal{A} \mathcal{A}}^{-1}$ is to $\bSigma_{\mathcal{A}^c \mathcal{A}}\bSigma_{\mathcal{A} \mathcal{A}}^{-1}$. In the theorem below, we use a new notation $\|\bB\|_{\infty, \mathcal{A}} = \max\limits_{1 \le j \le p} \sum\limits_{k\in \mathcal{A}} |b_{jk}|$.

The following guarantees (C2) of Proposition \ref{prop:consistency} with high probability.
\begin{thm}[Bound on the gradient]\label{thm:BG}
	Let Assumption \ref{assm:sub_G}, \ref{assm:multi_bern}, \ref{assm:mcar} hold.
	Then, if $n$ and $p$ satisfy 
	$$
	n > c \max\Big\{\log p / \pi^{xy}_{\min}, \pi^{(4)}_{\max} \log^3(p \vee n) \Big\}
	$$ 
	for some $c>0$, the gradient vector of the mean squared error satisfies the upper bound with probability greater than $1-9/p$
	$$
	\left\|\Phi_{\mu, \alpha}(\widehat{\bSigma}^{\rm IPW})\bbeta^* - \hat{\brho}^{\rm IPW}\right\|_{\infty} \le 
	L |\mathcal{A}|\sqrt{\dfrac{\log p}{n}},
	$$
	where $L>0$ is a function of parameters given by
	$$
	\begin{array}{l}
	L = C_1 \beta_{\max}^*  \max\{(K^x)^2, 1\}  \sqrt{\pi_{\max}^{(4)}} \cdot h_1(\mu; \bSigma, \mathcal{A}) + 
	C_2 \dfrac{\max\Big\{\sqrt{\sigma_{\max} \sigma_{\epsilon\epsilon}} K^x  K^{\epsilon}, \sigma_{\max} (K^x)^2\Big\}}{\sqrt{\pi^{xy}_{\min}}},
	\end{array}
	$$
	for some positive constants $C_1, C_2$.
	Here, 
	$\pi_{\max}^{(4)} = \max\limits_{k_1, k_2,\ell_1, \ell_2} \pi^{xx}_{k_1 k_2 \ell_1 \ell_2}/(\pi^{xx}_{k_1 \ell_1} \pi^{xx}_{k_2\ell_2})$, $ \pi^{xy}_{\min} = \min_{k} \pi_{k}^{xy}$, $\beta_{\max}^* = \max\limits_{1 \le j \le p} |\beta^*_j|$, and $h_1(\mu; \bSigma, \mathcal{A}) =  \tr(\bSigma)\big(1 + \|\bSigma \|_{\infty, \mathcal{A}}/\mu \big)$ if $\lambda_{\min}(\widehat{\bSigma}^{\rm IPW}) \le 0$ and $\sigma_{\max}$ otherwise.
\end{thm}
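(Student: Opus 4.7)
The plan is to use the identity $\bSigma\bbeta^* = \brho$ (with $\brho := \mathbb{E}[\bx_1 y_1]$, which holds because variables are centered and $y_i = \bx_i^\top \bbeta^* + \epsilon_i$ with $\epsilon_i \indep \bx_i$), and then to decompose the gradient into the three natural error sources:
\begin{equation*}
\Phi_{\mu,\alpha}(\widehat{\bSigma}^{\rm IPW})\bbeta^* - \hat{\brho}^{\rm IPW}
= \underbrace{\alpha(\widehat{\bSigma}^{\rm IPW}-\bSigma)\bbeta^*}_{\text{(I) covariance error}}
+ \underbrace{(1-\alpha)(\mu\bI - \bSigma)\bbeta^*}_{\text{(II) shrinkage bias}}
- \underbrace{(\hat{\brho}^{\rm IPW}-\brho)}_{\text{(III) cross-product error}}.
\end{equation*}
Each summand will be controlled in $\ell_\infty$ on a high-probability event, and the final bound will follow from a union bound.

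Term (III) is the most standard: for each coordinate $j$, $\hat{\rho}^{\rm IPW}_j - \rho_j$ is an average of i.i.d.\ centered sub-exponential random variables $\delta^x_{ij}\delta^y_i x_{ij}y_i/\pi^{xy}_j$. Splitting $y_i = \bx_i^\top\bbeta^* + \epsilon_i$, the $\psi_1$-norm of each summand is at most $C\max\{\sqrt{\sigma_{\max}\sigma_{\epsilon\epsilon}}K^x K^{\epsilon},\sigma_{\max}(K^x)^2\}/\sqrt{\pi^{xy}_{\min}}$ under Assumptions \ref{assm:sub_G} and \ref{assm:mcar}. Applying Bernstein's inequality and taking a union bound over $j\in[p]$ yields the $C_2$-piece of $L$, and the sample-size requirement $n > c\log p/\pi^{xy}_{\min}$ is precisely what is needed to stay in the sub-Gaussian regime of Bernstein. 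For term (I) I would first bound $\|(\widehat{\bSigma}^{\rm IPW}-\bSigma)\bbeta^*\|_\infty \le \|\widehat{\bSigma}^{\rm IPW}-\bSigma\|_{\max}\|\bbeta^*\|_1 \le |\mathcal{A}|\beta^*_{\max}\|\widehat{\bSigma}^{\rm IPW}-\bSigma\|_{\max}$ and then apply Bernstein entry-wise to $\delta^x_{ij}\delta^x_{ik}x_{ij}x_{ik}/\pi^{xx}_{jk}$; a union bound over the $p^2$ entries gives $\|\widehat{\bSigma}^{\rm IPW}-\bSigma\|_{\max} \lesssim (K^x)^2 \sigma_{\max}\sqrt{\pi^{(4)}_{\max}\log p/n}$. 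On the event $\{\lambda_{\min}(\widehat{\bSigma}^{\rm IPW})>0\}$ we have $\alpha=1$, term (II) vanishes identically, and term (I) produces exactly the $h_1=\sigma_{\max}$ branch of the theorem.

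The main obstacle is the complementary case $\lambda_{\min}(\widehat{\bSigma}^{\rm IPW})\le 0$, where $1-\alpha$ does not vanish deterministically and term (II) must be tied back to a random event. The key observation is that on this event, by Weyl's inequality, $-\lambda_{\min}(\widehat{\bSigma}^{\rm IPW}) \le \|\widehat{\bSigma}^{\rm IPW}-\bSigma\|_2 - \lambda_{\min}(\bSigma) \le \|\widehat{\bSigma}^{\rm IPW}-\bSigma\|_2$, and moreover $\|\widehat{\bSigma}^{\rm IPW}-\bSigma\|_2 \ge \lambda_{\min}(\bSigma) > \epsilon$. Combined with the choice $\mu\ge 2\epsilon$, this gives $1-\alpha = (\epsilon-\lambda_{\min}(\widehat{\bSigma}^{\rm IPW}))/(\mu-\lambda_{\min}(\widehat{\bSigma}^{\rm IPW})) \le 2\|\widehat{\bSigma}^{\rm IPW}-\bSigma\|_2/\mu$. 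Since $\|(\mu\bI - \bSigma)\bbeta^*\|_\infty \le \beta^*_{\max}(\mu + \|\bSigma\|_{\infty,\mathcal{A}})$, term (II) is bounded by $2\beta^*_{\max}(1+\|\bSigma\|_{\infty,\mathcal{A}}/\mu)\,\|\widehat{\bSigma}^{\rm IPW}-\bSigma\|_2$. Invoking the IPW spectral inequality (Theorem \ref{thm:IPW_spectral}), which under $n>c\pi^{(4)}_{\max}\log^3(p\vee n)$ yields $\|\widehat{\bSigma}^{\rm IPW}-\bSigma\|_2 \lesssim (K^x)^2\tr(\bSigma)\sqrt{\pi^{(4)}_{\max}\log p/n}$ with high probability, produces exactly the factor $\tr(\bSigma)(1+\|\bSigma\|_{\infty,\mathcal{A}}/\mu)$ in $h_1$. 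Replacing $\sigma_{\max}$ by $\tr(\bSigma)$ in the term-(I) bound on this same event unifies it with the shrinkage contribution, and a final union bound over the three concentration events (each of probability $\ge 1-3/p$) gives the overall $1-9/p$ high-probability guarantee and the advertised constant $L$.
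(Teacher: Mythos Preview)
Your proposal is correct and follows essentially the same route as the paper. The paper substitutes $y_i=\bx_i^\top\bbeta^*+\epsilon_i$ into $\hat{\brho}^{\rm IPW}$ at the outset to write $\hat{\brho}^{\rm IPW}=\bV\bbeta^*+\bw$ (so that your term~(III) equals $(\bV-\bSigma)\bbeta^*+\bw$), then bounds the resulting four pieces with the same concentration tools---using the semi-norm $\|\cdot\|_{\infty,\mathcal{A}}$ where you use $|\mathcal{A}|\,\|\cdot\|_{\max}$---and your key observation $1-\alpha \lesssim \|\widehat{\bSigma}^{\rm IPW}-\bSigma\|_2/\mu$ via Weyl is precisely the paper's Lemma~\ref{lem:alpha_bound}, combined with Theorem~\ref{thm:IPW_spectral} exactly as you propose.
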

\noindent
Proof of the theorem can be found in Supplementary Materials \ref{sec:pf_thm_BG}. 
\cite{Loh:2013, Loh:2017} also required the bounded gradient condition (see Theorem 1 in \cite{Loh:2013} or \cite{Loh:2017}). Also, one remarks that dependency of the bound on $\beta^*_{\max}$ is similarly observed in the literature of missing data (see SNR conditions in \cite{Chen:2013, Datta:2017}; Theorem 1 in \cite{Rosenbaum:2010}). 

Combining these results with Proposition \ref{prop:consistency}, we present the properties of the solution $\widehat{\bbeta}^{\rm LPD}$ of (\ref{eqn:lasso-problem}). 
\begin{thm} \label{thm:consistency_LPD}
	Let Assumption \ref{assm:sub_G}, \ref{assm:multi_bern}, \ref{assm:mcar}, \ref{assm:class_cov} hold.
	Assume $\widehat{\bSigma}^{\rm IPW}_{\mathcal{A} \mathcal{A}}$ is non-singular. 
	We choose the tuning parameter $\lambda \propto L |\mathcal{A}|(\log p/n)^{1/2}$ for the lasso regression. 
	If $n$ and $p$ satisfy 
	$$
	\dfrac{n}{\pi_{\max}^{(4)} \log p} \ge c \bigg\{ 
	\dfrac{\tr(\bSigma) \max\{(K^x)^2, 1\}}{\min\{\tau/\left\|\bSigma_{\mathcal{A}\mathcal{A}}^{-1}\right\|_{\infty}, { \lambda_{\min}(\bSigma_{\mathcal{A}\mathcal{A}}) } \}}
	\bigg\}^2,
	\quad 
	n > c \max\Big\{
	\dfrac{\log p}{\pi^{xy}_{\min}}, \pi^{(4)}_{\max} \log^3(p \vee n)
	\Big\}
	$$
	for some $c>0$,
	then there exist some $C>0, d>0, \tilde{\tau} \in (0,1)$ such that we can guarantee with probability greater than $1 - d/p$
	\begin{align*}
	&(R1) \quad \text{The minimizer }\widehat{\bbeta}^{\rm LPD} \text{ is unique.} \\
	&(R2) \quad \|\widehat{\bbeta}^{\rm LPD} - {\bbeta}^*\|_2 \le 
	C\frac{L}{ \tilde{\tau} \cdot h_2(\mu, \lambda_{\min}(\bSigma_{\mathcal{A}\mathcal{A}}))} 
	\sqrt{\dfrac{|\mathcal{A}|^{3}\log p}{n}} \\
	&(R3) \quad \hat{\beta}^{\rm LPD}_{j} = 0, \quad j \in \mathcal{A}^c.
	\end{align*}
	{
		Here, $h_2(\mu, \lambda_{\min}(\bSigma_{\mathcal{A}\mathcal{A}})) = \min\{ \lambda_{\min}(\bSigma_{\mathcal{A}\mathcal{A}}), \mu \}$ if $\lambda_{\min}(\widehat{\bSigma}^{\rm IPW}) \le 0$ and $\lambda_{\min}(\bSigma_{\mathcal{A}\mathcal{A}})$ otherwise.}
	The factor $L$ appears in Theorem \ref{thm:BG}.
\end{thm}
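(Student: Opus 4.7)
The plan is to verify the three conditions (C1), (C2), (C3) of Proposition \ref{prop:consistency} on a single high-probability event and then read off the three conclusions deterministically. The two sample-size conditions in the hypothesis are designed precisely so that Theorem \ref{thm:IR_LPD_combined} supplies (C1) and (C3), while Theorem \ref{thm:BG} supplies (C2); the proof therefore reduces to (i) checking that the assumed sample sizes imply the hypotheses of those two theorems, (ii) intersecting the two resulting events, and (iii) substituting the prescribed $\lambda$ into the deterministic conclusion of Proposition \ref{prop:consistency}.

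First, I would invoke Theorem \ref{thm:IR_LPD_combined} with $u = 1$. The first sample-size bound of the current theorem matches the first requirement of that theorem up to absorbing a $\sqrt{2}$ into the constant $c$, and the condition $n \gtrsim \pi^{(4)}_{\max} \log^3(p \vee n)$ follows from the second bound in the current hypothesis. Thus, on an event $\mathcal{E}_1$ with $\mathbb{P}(\mathcal{E}_1) \ge 1 - 3/p$, there exists $\tilde{\tau} \in (0,1)$ (independent of $n$ and $p$) such that (C1) holds for $\widehat{\bSigma}^{\rm LPD}$, and on the same event the restricted-eigenvalue lower bound (C3) is in force, with the piecewise form of $h_2$ reflecting whether $\widehat{\bSigma}^{\rm IPW}$ happens to be positive definite (so that $\alpha = 1$ and $\mu$ drops out) or not.

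Next, I would invoke Theorem \ref{thm:BG}, whose sample-size requirement is literally the second hypothesis of the current theorem. This yields, on an event $\mathcal{E}_2$ with $\mathbb{P}(\mathcal{E}_2) \ge 1 - 9/p$, the gradient bound $\|\widehat{\bSigma}^{\rm LPD} \bbeta^* - \hat{\brho}^{\rm IPW}\|_\infty \le L |\mathcal{A}|\sqrt{\log p / n}$. Setting $\lambda = \frac{4(2-\tilde{\tau})}{\tilde{\tau}} L |\mathcal{A}|\sqrt{\log p / n}$, which is of the announced order $L |\mathcal{A}|\sqrt{\log p / n}$ because $\tilde{\tau}$ is a fixed constant, turns the gradient bound into condition (C2). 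A union bound over $\mathcal{E}_1 \cap \mathcal{E}_2$ gives probability at least $1 - 12/p$, which is of the required form $1 - d/p$ with $d = 12$.

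On the event $\mathcal{E}_1 \cap \mathcal{E}_2$ all hypotheses of Proposition \ref{prop:consistency} hold, so (R1) and (R3) are immediate. For (R2), I would substitute the prescribed $\lambda$ into the bound
\begin{equation*}
\|\widehat{\bbeta}^{\rm LPD} - \bbeta^*\|_2 \le \frac{4}{h_2(\mu, \lambda_{\min}(\bSigma_{\mathcal{A}\mathcal{A}}))} \left(1 + \frac{\tilde{\tau}}{4}\right) \sqrt{|\mathcal{A}|}\, \lambda,
\end{equation*}
and collect factors: the $|\mathcal{A}|$ inside $\lambda$ combines with $\sqrt{|\mathcal{A}|}$ to produce $|\mathcal{A}|^{3/2}$, giving the claimed rate $C L \sqrt{|\mathcal{A}|^3 \log p / n}/(\tilde{\tau}\, h_2)$. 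The only nontrivial aspect is bookkeeping---ensuring that $\tilde{\tau}$, $c$, $L$, $d$ can all be chosen uniformly in $n, p$, and that the implicit requirement $\epsilon < \lambda_{\min}(\bSigma)$ used in constructing the LPD is compatible with $\epsilon \le \mu$---and there is no genuinely new probabilistic estimate beyond those already established in Theorems \ref{thm:IR_LPD_combined} and \ref{thm:BG}.
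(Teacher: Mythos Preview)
Your proposal is correct and follows essentially the same route as the paper: the paper's proof of Theorem~\ref{thm:consistency_LPD} is simply the one-line remark ``Combining these results with Proposition~\ref{prop:consistency},'' and your write-up makes that combination explicit---invoking Theorem~\ref{thm:IR_LPD_combined} (with $u=1$) for (C1) and (C3), Theorem~\ref{thm:BG} for (C2), taking a union bound, and substituting the prescribed $\lambda$ into the deterministic bound of Proposition~\ref{prop:consistency}. One tiny point of bookkeeping: condition (C2) is a \emph{strict} inequality, so you should take $\lambda$ to be any fixed multiple strictly larger than $\tfrac{4(2-\tilde\tau)}{\tilde\tau} L|\mathcal{A}|\sqrt{\log p/n}$ rather than equal to it; this does not affect the order of the bound and is absorbed into the constant $C$.
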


\noindent
We have some remarks regarding this main result.
First, the results hold regardless of the choice of matrix norms in (\ref{eqn:dist-min}) because the optimal choice of $\alpha$ in LPD is independent of the matrix norms. Also, no terms are involved with $\epsilon$ in the theorems, though the actual performance of LPD can change according to different $\epsilon$ due to the numerical stability.

Second, the constant $L$ depends on $\tr(\bSigma)$, which is an order of $p$ in general. {This trace term is introduced when we control the magnitude of the gradient vector of the loss function based on the LPD. This condition related to the gradient vector is commonly used in literature (e.g. (3.1) of \cite{Loh:2012}). We believe that the additional factor is the expense we need to pay for convexification of the loss function.}
However, {as in the literature on covariance estimation (\cite{Lounici:2014,Mendelson:2020,Koltchinskii:2017})}, we can express the trace of $\bSigma$ by the effective rank that measures intrinsic dimension of a symmetric matrix, defined by $\textbf{r}(\bSigma)=\tr(\bSigma)/ ||\bSigma||_2$. Note that $\textbf{r}(\bSigma) \le \text{rank}(\bSigma) \le p$ for general matrices, but the effective rank would be much smaller than $p$ if $\bSigma$ is approximately low-rank. See more discussion in Section 2.2 of \cite{Lounici:2014} or Remark 5.53 of \cite{Vershynin:2011}. 
Hence, the constant $L$ would not depend on $p$ if we consider a class of covariance matrices satisfying that
(1) approximately low-rank, or $\text{r}(\bSigma):=\text{tr}(\bSigma) / ||\bSigma||_2 \le R$ (independent of $p$) and (2) the largest eigenvalue is bounded, or $||\bSigma||_2 \le B$ (independent of $p$).
Then, Theorem \ref{thm:consistency_LPD} states that {under this class of distributions for covariates,} the sample size $n \gtrsim \log p$ is enough to guarantee that the solution $\widehat{\bbeta}^{\rm LPD}$ is (R1) unique, (R2) $\ell_2$-consistent, and (R3) has no false positive with probability close to $1$.

Third, we would like to compare our result with the ones previously obtained in \cite{Datta:2017} and \cite{Loh:2012}.  To facilitate a fair comparison, we reorganize all the results into the following format:
if the sample size and dimension satisfies $n/\log p > \mathcal{M}$, then with probability at least $1 - c / p$, it holds that	
$$
||\widehat{\beta} - \beta^*||_2 \le 
C \cdot \mathcal{L} \cdot |\mathcal{A}|^{\mathcal{K}} \sqrt{\dfrac{\log p}{n}},
$$
where $c, C>0$ are some positive constants. Here, $\widehat{\beta}$ is a coefficient estimator from one of \cite{Datta:2017}, \cite{Loh:2012}, or the proposed,  and $\beta^*$ is the true value to be estimated. The specific forms of $\mathcal{K}$, $\mathcal{L}$, and $\mathcal{M}$ depend on parameters such as (but not limited to) (1) observation probability, (2) tail thickness (or sub-Gaussian parameter) of the response variable, (3) tail thickness of the covariates, (4) covariance matrix of the covariates. While the triplet $(\mathcal{K},\mathcal{L}, \mathcal{M})$ is not directly comparable as each paper uses slightly different assumptions, we aim to highlight the general tendencies.

The convergence rate $\mathcal{L}$ commonly depends on (1) observation probability, (2) tail thickness (or sub-Gaussian parameter) of the response variable, (3) tail thickness of the covariates, (4) magnitude of the true value $\beta^*$, and (5) well-conditionedness of $\bSigma$. Regarding (5), the result from \cite{Loh:2012} is $\mathcal{L} \propto 1 / \lambda_{\min}(\bSigma)$, while \cite{Datta:2017} obtained $\mathcal{L} \propto 1 / \Omega$, where  
$$
\Omega := \min_{x \in \mathcal{R}} x^\top \bSigma x, \qquad \mathcal{R}=\{x: ||x||_2=1, ||x_{\mathcal{A}^c}||_1 \le 3 ||x_{\mathcal{A}}||_1\},
$$
which is related to the compatibility condition. In contrast, our result satisfies $\mathcal{L} \propto 1 / \{\tilde{\tau} \cdot (\lambda_{\min}(\bSigma_{\mathcal{A}\mathcal{A}}) \wedge  \mu )\}$, where $\tilde{\tau}$ is a constant from the irrepresentability condition of the LPD estimator. 
{Similar quantities have appeared from restricted strong convexity in the related context (\cite{Negahban:2012}), typically with the same order of 1 in the denominator.}
The rate from \cite{Loh:2012} would get worse if the covariance matrix from covariates on $\mathcal{A}^c$ is ill-conditioned, while the other two are not affected. Additionally, while our result depends on $\mu$  (the tuning parameter of LPD procedure), this dependency is negligible if $\mu$ is chosen sufficiently large, i.e., $\mu > \lambda_{\min}(\bSigma_{\mathcal{A}\mathcal{A}})$. Lastly, our result has dependency on $\tr(\bSigma)$, i.e. $\mathcal{L} \propto  \tr(\bSigma)$.

The constant $\mathcal{M}$ characterizes the sample size required to guarantee the derived convergence rate. Across all three methods, the constant depends on (1) observation probability, (2) tail thickness of the covariates, and (3) well-conditionedness of $\bSigma$. The dependency on (3) is similar to that of $\mathcal{L}$. More specifically, 
$$
\mathcal{M}_{\text{Loh}} \propto 1 / \lambda_{\min}(\bSigma)^2, \quad
\mathcal{M}_{\text{Datta}} \propto 1 / \min\{C_1\tau^2, C_2\Omega^2\}, \quad
\mathcal{M}_{\text{{Park}}} \propto 1 / \{\tau \cdot \lambda_{\min}(\bSigma_{\mathcal{A}\mathcal{A}})\}^2
$$
where $C_1, C_2>0$ are constants.
In \cite{Datta:2017}, $\mathcal{M}$ also depends on $\beta_{\max}^*$ and the tail thickness of the response variable. In our case, $\mathcal{M} \propto \tr(\bSigma)$, which can be explained similarly to its appearance in $\mathcal{L}$.

The constant $\mathcal{K}$ represents the order of sparsity in the convergence rate. Both \cite{Datta:2017} and our result share the same order $\mathcal{K}=3/2$, while \cite{Loh:2012} achieves a smaller order $\mathcal{K}=1$. 
{The order of sparsity may have room for improvement in proof techniques, as the exponent $\mathcal{K}=1/2$ in $|\mathcal{A}|^{\mathcal{K}}$ is commonly observed in the high-dimensional regression literature (e.g. \cite{Geer:2009,Wainwright:2009,Negahban:2012}). In contrast, our result yields  $\mathcal{K}=3/2$, which is attributed to the linear shrinkage of the non-PD matrix. This can also be seen as a cost incurred for convexification.}

In conclusion, this comparison shows that our method still gurantees similar results from the previous work, but with an extra term $\tr(\bSigma)$. Theoretically, this difference is the price we need to pay for convexification and faster computation. However, for a smaller class of covariance matrices (e.g., low-rank and bounded largest eigenvalue), this term becomes negligible.

\subsection{Estimation of unknown parameters}

It should be noted that our results are based on two implicit assumptions. First, we assume the observation probabilities are known, as in other error-in-variable literatures (\cite{Datta:2017,Sorensen:2015}). Second, following a convention in a regression framework, we also assume covariates are centered, i.e. mean-zero. However, these may not be the case in real-world data, and thus we would like to leave some remarks regarding these assumptions.

For estimating the observation probabilities, it is natural to use the empirical proportions (i.e. the proportion of observed pairs) under MCAR, due to the law of large numbers. In other words, we suggest using $\hat{\pi}_{jk}^{xx} = \sum_{i=1}^n \delta_{ij}^x\delta_{ik}^x / n$ and $\hat{\pi}_{j}^{xy} = \sum_{i=1}^n \delta_{ij}^x\delta_{i}^y / n$.
Then, the new IPW estimator is 
$$
\widehat{\bSigma}^{\text{IPW},\hat{\pi}} = \Big((\widehat{\bSigma}^{\text{IPW}})_{jk} \dfrac{\pi_{jk}^{xx}}{\hat{\pi}_{jk}^{xx}}, ~ 1\le j, k \le p \Big).
$$
We have found throughout our numerical study that the penalized regression based on the above estimator performs quite well.

Next, we consider the case when covariates may have non-zero means. The most straightforward way is to center each covariate by the IPW mean estimator $\hat{\mu}_j = \dfrac{\sum_{i=1}^n \tilde{x}_{ij}}{n\pi^{xx}_{jj}}$. As used in \cite{Kolar:2012} and \cite{Cai:2016:JMA}, this type of IPW estimator is defined by
$$
\widehat{\bSigma}^{\text{IPW},2}_{jk} = \sum_{i=1}^n \delta_{ij}^x\delta_{ik}^x(\tilde{x}_{ij}-\hat{\mu}_j) (\tilde{x}_{ik}-\hat{\mu}_k) /(n \pi_{jk}^{xx}).
$$
However, this is not unbiased (in finite sample), which often complicates theoretical analyses (e.g. concentration inequality). To address it, we proposed another type of IPW estimator in our earlier work (\cite{Park:2021}):
$$
\widehat{\bSigma}^{\text{IPW},3}_{jk}=
\dfrac{\sum_{i=1}^n \tilde{x}_{ij} \tilde{x}_{ik}}{n\pi_{jk}^{xx}} - 
\dfrac{\sum_{i \neq i'}^n \tilde{x}_{ij} \tilde{x}_{i'k}}{n (n-1) \pi_{jj}^{xx} \pi_{kk}^{xx}}.
$$

We remark that our theory is based on two types of concentration inequalities for IPW estimators: one is about the element-wise maximum norm and the other is the spectral norm. The former has been investigated in our earlier work (\cite{Park:2022_stat}), but the latter has not yet in literature. Though we tried to derive the non-asymptotic inequality based on the spectral norm, it is not as simple as the other. 
We think including such an analysis in this paper would be unnecessarily complicated, and thus leave it as our future work.

\section{Numerical study}

We showcase the empirical performance of the proposed estimator LPD based on different simulation parameters (e.g. dimension $p$, missing rate of observations, covariance structure for variables). 
Our analysis consists of three parts. In the first part, we compare several methods including two existing ones and the proposed one based on different choices of $\mu$. In the second, we examine  how sensitive the models are to missing values. In the third, we time an algorithm of each method to see their scalability. 

It has to be noted that a simulation study performed by \cite{Romeo:2019} compared a group of methods available until then, but only considered additive measurement error models. In the meantime, our simulation study deals with missing data cases, which is clearly different from what was covered in their work.

\subsection{Setting}

We adopt experimental settings of \cite{Sorensen:2019} where they generate responses from the normal model, i.e.
$$
\tilde{\by} \sim N_n(\tilde{\bX} \bbeta^*, \sigma_y^2 \bI),
$$
and each row of the design matrix $\tilde{\bX}$ from $N(\bzero, \bSigma)$ where the covariance structure is the compound symmetry ($\bSigma_{ij}=0.5^{\text{I}(i\neq j)}$). The dimension $p$ of covariates varies over $p=200, 500$. 
The regression coefficients $\bbeta^*$ have non-zero values at random positions while keeping the proportion of them at $s=0.05, 0.1$ (i.e. $s$ is the level of sparsity). The non-zero coefficients are all equal to $1$. We fix $n=200$ and $\sigma_y = 3$.

Responses and covariates are subject to missing completely at random (MCAR). More specifically, we define matrices of missing indicators: $\bM_y = (\delta^y_i)$ and $\bM_X = (\delta^x_{ij})$ where $\delta^y_i \sim \text{Ber}(\theta)$, $\delta^x_{i,3j} \sim \text{Ber}(\theta)$, $j=1,\ldots, \lfloor p/3 \rfloor$, independently.
Then, the corrupted data are
$$
\by = \tilde{\by} * \bM_y, \quad \bX = \tilde{\bX} * \bM_X,
$$
where $*$ is the element-wise product. Other missing mechanisms (MAR, MNAR) will be discussed in Section \ref{sec:simulation_missing}. We control the observation probability $\theta=0.7, 0.9$. We generate 100 independent datasets to consider random variability.

Given incomplete data $(\by, \bX)$, we compute three comparative estimators: (1) linear shrinkage positive definite lasso (LPD), (2) convex conditioned lasso (CoCo) \citep{Datta:2017}, and (3) non-convex lasso (NCL) \citep{Loh:2012}. We use the R package
named \pkg{BDcocolasso} (available at \url{https://github.com/celiaescribe/BDcocolasso}) implemented by \cite{Escribe:2021} to obtain the second estimator and \pkg{hdme} \citep{Sorensen:2019} to obtain the third.
Additionally, we add two types of lasso regression in comparison. 
One uses the complete data $(\tilde{\by}, \tilde{\bX})$ and is named (4) ``true lasso'', while the other runs the lasso regression with mean imputed data and is named (5) ``naive lasso''. 
We do not include the complete-case analysis as none of the samples are completely observed in high-dimensional missing data. For instance, in the real data we analyzed, every cell line has at least 48 missing values, making the straighforward approach impractical.

In terms of LPD, we can consider a set of variants based on different choices of $\mu$, but found that LPD using $\ell_\infty$-norm empirically works well and is robust to different setups. Hence, for readability, we only report the corresponding results in this section, while the entire results are provided in Supplementary Materials \ref{sec:supp_method} and \ref{sec:supp_missing}.

The penalized regression methods mentioned earlier have hyperparameters to be tuned. To choose a penalty parameter $\lambda$ of CoCo and LPD, we use the corrected cross-validation proposed in \cite{Datta:2017}, that is, the cross-validation approach adjusted for corrupted data. Simply put, the idea is to minimize the mean square prediction error where a non-PD covariance matrix estimate is replaced by the PD matrix. More details can be found in Supplementary Materials \ref{app:corr_cv}. The grids are evenly spaced in log scale within the interval $[R/10000, R]$ where $R=2 ||\br_{\text{naive}}||_{\max}$ and $\br_{\text{naive}}$ is the naive lasso estimator. If $R=0$ (i.e. $\br_{\text{naive}}=0$), then we set $R$ by $||\bX^\top \by / n||_{\max}$. For NCL, we need to decide the radius $b$ such that the solution satisfies $||\widehat{\bbeta}||_1 \le b$. We search the optimal radius over the grid in $[R/10000, R]$ with $R=2 ||\br_{\text{naive}}||_1$. The number of grid points is $100$ throughout. Using the optimal tuning parameter, we re-fit each model and have the estimates of coefficients.

We measure six criteria to assess performance of each method. Following \cite{Datta:2017}, we compute the prediction error (PE) and mean squared error (MSE), which is respectively defined 
$$
PE(\widehat{\bbeta}) = (\widehat{\bbeta} - \bbeta^*)^\top \bSigma (\widehat{\bbeta} - \bbeta^*), \quad MSE(\widehat{\bbeta}) = (\widehat{\bbeta} - \bbeta^*)^\top (\widehat{\bbeta} - \bbeta^*).
$$
The number of covariates corrected/incorrectly identified (TP and FP) are also counted.
To see an overall accuracy of variable selection, we also compute the (partial) area under the ROC curve (pAUC) and F$_1$-score (harmonic mean of precision and recall) denoted by F1.
We also measure the time each method would take to finish. This includes the tuning parameter search. 

\subsection{Method}\label{sec:simulation_method}

In this experiment, we compare different regression methods. To reduce the workload of simulations, we fix $\theta=0.9$ under MCAR.

\begin{table}[H]
	\tiny
	\centering
	\begin{tabular}{|c|c|c|c|c|c|c|}
		\hline & \multicolumn{6}{|c|}{$ p=200,s=0.05 $}\\
		\cline{2-7}
		& PE & MSE & pAUC & F$_1$ & TP & FP \\
		\hline TL & 1.892 (0.601)& 3.653 (1.162)& 0.953 (0.032)& 0.439 (0.065)& 9.700 (0.482)& 25.370 (6.935) \\ 
		NL & 3.710 (1.279)& 6.186 (1.950)& 0.873 (0.075)& 0.397 (0.076)& 8.560 (1.157)& 25.590 (7.732) \\ 
		CoCo & 3.490 (1.276)& 6.641 (2.424)& 0.816 (0.073)& 0.398 (0.083)& 8.370 (1.236)& 24.650 (6.658) \\ 
		NCL & 5.162 (1.337)& 6.447 (1.820)& 0.519 (0.083)& 0.439 (0.118)& 8.140 (1.477)& 21.800 (15.525) \\ 
		LPD & 3.352 (1.000)& 6.320 (1.824)& 0.873 (0.070)& 0.369 (0.066)& 8.790 (1.104)& 29.710 (7.312) \\ 
		\hline & \multicolumn{6}{|c|}{$ p=500,s=0.05 $}\\
		\cline{2-7}
		& PE & MSE & pAUC & F$_1$ & TP & FP \\
		\hline TL & 6.073 (1.243)& 11.940 (2.433)& 0.815 (0.044)& 0.420 (0.054)& 22.980 (1.239)& 63.190 (16.677) \\ 
		NL & 16.327 (4.124)& 26.382 (4.161)& 0.555 (0.084)& 0.298 (0.060)& 13.130 (3.084)& 49.950 (9.090) \\ 
		CoCo & 15.738 (3.154)& 30.083 (5.651)& 0.600 (0.044)& 0.290 (0.062)& 12.530 (3.119)& 48.810 (9.018) \\ 
		NCL & 27.640 (7.481)& 26.873 (3.507)& 0.506 (0.062)& 0.218 (0.055)& 14.810 (5.025)& 105.450 (55.242) \\ 
		LPD & 13.375 (2.323)& 25.482 (3.883)& 0.717 (0.064)& 0.262 (0.050)& 15.250 (3.141)& 76.730 (16.213) \\ 
		\hline & \multicolumn{6}{|c|}{$ p=200,s=0.1 $}\\
		\cline{2-7}
		& PE & MSE & pAUC & F$_1$ & TP & FP \\
		\hline TL & 3.240 (0.841)& 6.263 (1.631)& 0.915 (0.034)& 0.535 (0.060)& 19.600 (0.651)& 34.570 (8.335) \\ 
		NL & 10.299 (3.229)& 15.240 (3.293)& 0.761 (0.068)& 0.438 (0.062)& 14.400 (2.340)& 31.500 (5.458) \\ 
		CoCo & 9.361 (2.429)& 17.288 (4.059)& 0.723 (0.055)& 0.437 (0.070)& 13.880 (2.341)& 29.950 (6.660) \\ 
		NCL & 16.726 (3.676)& 17.447 (2.445)& 0.617 (0.046)& 0.398 (0.099)& 14.170 (2.775)& 42.950 (26.712) \\ 
		LPD & 8.477 (2.144)& 15.565 (3.406)& 0.774 (0.060)& 0.419 (0.057)& 14.970 (2.115)& 36.940 (7.678) \\ 
		\hline & \multicolumn{6}{|c|}{$ p=500,s=0.1 $}\\
		\cline{2-7}
		& PE & MSE & pAUC & F$_1$ & TP & FP \\
		\hline TL & 14.001 (2.440)& 27.630 (4.914)& 0.683 (0.049)& 0.477 (0.048)& 43.950 (2.488)& 91.930 (18.908) \\ 
		NL & 48.644 (11.035)& 77.535 (11.147)& 0.391 (0.057)& 0.269 (0.055)& 16.770 (3.928)& 57.530 (9.157) \\ 
		CoCo & 47.577 (8.028)& 91.880 (15.888)& 0.548 (0.033)& 0.259 (0.051)& 15.560 (3.529)& 54.000 (8.060) \\ 
		NCL & 76.542 (26.472)& 65.129 (11.035)& 0.489 (0.039)& 0.241 (0.036)& 24.940 (7.538)& 129.610 (44.213) \\ 
		LPD & 37.225 (5.155)& 71.559 (9.319)& 0.606 (0.043)& 0.267 (0.045)& 21.020 (4.259)& 86.310 (15.103) \\ 
		\hline
	\end{tabular}		
	\caption{\footnotesize Method comparison for $p=200, 500$ and $s=0.05, 0.1$. Each performance measure is averaged over $R=100$ repetitions (standard deviation in parenthesis).}
\end{table}

Compared with the existing methods (CoCo, NCL), LPD is less sparser and has more TP and FP. LPD is proved to be successful in estimation (low MSE), prediction (low PE), and variable selection (high pAUC, high TP). Though the difference is negligible considering standard deviation, LPD performs best in almost all scenarios of the finite sample setting. This result is of great importance since LPD is much faster than its competitors (see Table \ref{tab:timing}). 
The naive lasso (NL) seems to have smaller MSE and higher F$_1$-score than LPD, but it sharply deteriorates when $p$ increases. Compared to it, LPD performs nearly best for all cases considered.

Though its more restrictive structure in LPD than CoCo, it shows the superior performance in the finite sample study. We believe this is because LPD preserves the off-diagonal elements of the initial estimator. That is, LPD does not change information about the covariance part. In constrast, CoCo focuses on element-wise approximation, which may lose such information. As a result, CoCo has good theoretical support, but LPD offers a more practical solution.

\subsection{Missing rate and missing mechanism}\label{sec:simulation_missing}

We try different missing rates and mechanisms to investigate the robustness of each method under other scenarios of missing data generation. This is similar to the idea of sensitivity analysis in missing data literature \citep{Kolar:2012,Buuren_2018}. 
We generate missing values by the three mechanisms known as missing completely at random (MCAR), missing at random (MAR), and missing not at random (MNAR). Following \cite{Kolar:2012}, every third variable ($j=1,\ldots, \lfloor p/3 \rfloor$) is subject to missing; for MAR case, $\delta^x_{i,3j} = 0$ if $X_{i,3j-2} < \Phi^{-1}(1 - \theta)$ and for MNAR case, $\delta^x_{i,3j} = 0$ if $X_{i,3j} < \Phi^{-1}(1 - \theta)$. Here, we fix  $s=0.05$ and $p=200$.

Table \ref{tab:sensitivity} confirms that a higher rate of missing in data can lead to worse performance. Also, the performance gets poorer as the missing mechanism changes from MCAR to MAR, MNAR, but interestingly, the results on relative performance are not much different.

\begin{table}[H]
	\tiny
	\centering
	\begin{tabular}{|c|c|c|c|c|c|c|}
		\hline & \multicolumn{6}{|c|}{$ \theta=0.9,MAR $}\\
		\cline{2-7}
		& PE & MSE & pAUC & F$_1$ & TP & FP \\
		\hline TL & 1.942 (0.571)& 3.691 (1.107)& 0.949 (0.038)& 0.446 (0.066)& 9.670 (0.551)& 24.670 (7.210) \\ 
		NL & 3.707 (1.101)& 6.107 (1.580)& 0.865 (0.071)& 0.385 (0.079)& 8.460 (1.158)& 26.530 (7.657) \\ 
		CoCo & 3.289 (0.881)& 6.233 (1.663)& 0.830 (0.067)& 0.389 (0.079)& 8.380 (1.170)& 25.750 (7.627) \\ 
		NCL & 4.844 (1.255)& 6.206 (1.675)& 0.542 (0.082)& 0.426 (0.111)& 8.030 (1.298)& 22.850 (16.160) \\ 
		LPD & 3.184 (0.841)& 6.002 (1.552)& 0.869 (0.066)& 0.368 (0.068)& 8.550 (1.132)& 28.780 (6.761) \\ 
		\hline & \multicolumn{6}{|c|}{$ \theta=0.7,MAR $}\\
		\cline{2-7}
		& PE & MSE & pAUC & F$_1$ & TP & FP \\
		\hline TL & 1.941 (0.569)& 3.730 (1.148)& 0.954 (0.034)& 0.430 (0.071)& 9.760 (0.515)& 26.910 (8.568) \\ 
		NL & 9.443 (2.763)& 8.787 (1.483)& 0.727 (0.095)& 0.297 (0.073)& 5.770 (1.517)& 23.730 (7.760) \\ 
		CoCo & 6.090 (1.645)& 11.032 (2.784)& 0.665 (0.086)& 0.306 (0.084)& 5.510 (1.487)& 20.890 (5.597) \\ 
		NCL & 8.309 (15.835)& 10.031 (2.132)& 0.462 (0.080)& 0.320 (0.099)& 5.140 (1.491)& 18.170 (11.910) \\ 
		LPD & 5.191 (1.226)& 9.061 (1.631)& 0.752 (0.090)& 0.285 (0.067)& 6.570 (1.423)& 30.120 (6.181) \\ 
		\hline & \multicolumn{6}{|c|}{$ \theta=0.9,MNAR $}\\
		\cline{2-7}
		& PE & MSE & pAUC & F$_1$ & TP & FP \\
		\hline TL & 1.980 (0.601)& 3.769 (1.187)& 0.951 (0.038)& 0.429 (0.069)& 9.680 (0.566)& 26.620 (8.318) \\ 
		NL & 4.002 (1.048)& 6.672 (1.419)& 0.843 (0.069)& 0.358 (0.070)& 7.950 (1.336)& 27.040 (6.280) \\ 
		CoCo & 3.740 (0.922)& 7.076 (1.738)& 0.808 (0.066)& 0.350 (0.072)& 7.930 (1.273)& 28.570 (8.519) \\ 
		NCL & 5.231 (1.128)& 7.049 (1.458)& 0.582 (0.069)& 0.365 (0.106)& 7.590 (1.342)& 28.800 (19.985) \\ 
		LPD & 3.551 (0.797)& 6.688 (1.455)& 0.843 (0.067)& 0.342 (0.059)& 8.120 (1.225)& 30.030 (6.617) \\ 
		\hline & \multicolumn{6}{|c|}{$ \theta=0.7,MNAR $}\\
		\cline{2-7}
		& PE & MSE & pAUC & F$_1$ & TP & FP \\
		\hline TL & 1.898 (0.512)& 3.625 (1.005)& 0.947 (0.039)& 0.432 (0.063)& 9.670 (0.514)& 26.030 (7.661) \\ 
		NL & 10.300 (3.496)& 9.439 (1.842)& 0.695 (0.084)& 0.285 (0.087)& 5.280 (1.422)& 22.770 (7.777) \\ 
		CoCo & 6.574 (2.109)& 11.897 (3.978)& 0.656 (0.073)& 0.292 (0.089)& 5.200 (1.421)& 20.990 (5.502) \\ 
		NCL & 7.167 (2.308)& 9.909 (2.016)& 0.473 (0.077)& 0.312 (0.099)& 5.000 (1.524)& 18.650 (13.253) \\ 
		LPD & 5.301 (1.144)& 9.334 (1.682)& 0.749 (0.081)& 0.256 (0.064)& 6.210 (1.438)& 33.130 (7.688) \\ 
		\hline
	\end{tabular}
	\caption{\footnotesize Sensitivity analysis for $\theta=0.7, 0.9$ and different missing mechanisms. Each performance measure is averaged over $R=100$ repetitions (standard deviation in parenthesis).}
	\label{tab:sensitivity}
\end{table}

\subsection{Timing}
For both LPD and CoCo, the first step is to modify the estimate of covariance matrix to be PD, and the second step is to solve the penalized regression (e.g. (\ref{eqn:lasso-problem}) for LPD) with the modified estimate. We separately measure the time elapsed for the steps, positive definite modification (PD) and lasso regression (Lasso), which is shown in Table \ref{tab:timing}. We use $\ell_\infty$-norm for LPD since the other norms take roughly the same amount of time. In this experiment, we fix the tuning parameter $\lambda$ at the middle of endpoints of search grids. 

In step ``Lasso'', both methods solve a strictly convex quadratic programming problem, which is very fast. It took less than a second for both methods and does not have much difference between the two methods.
However, in step ``PD'', CoCo takes much longer than LPD, for example, around 50 seconds when $p=1000$ compared to $0.128$ seconds for LPD. Thus, ``PD'' step is dominant in the whole process of CoCo, while it does not scale up the total time of LPD.

%
%
%
%
%


\begin{table}[H]
	\footnotesize
	\centering
	\begin{tabular}{|c|c|c|c|c|}
		\hline
		Method & Step & $p=$200 & $p=$500 & $p=$1000\\
		\hline
		CoCo & Lasso & 0.146 & 0.507 & 0.538\\
		
		CoCo & PD & 0.174 & 3.849 & 49.587\\
		\hline
		LPD & Lasso & 0.103 & 0.382 & 0.515\\
		
		LPD & PD & 0.004 & 0.033 & 0.128\\
		\hline
	\end{tabular}
	\caption{The elapsed times (unit: second) for (1) lasso estimation at a fixed tuning parameter (Lasso) and (2) positive definite modification (PD). We average over 100 independent datasets generated under $n=200$, $s=0.05$, and $p$ varying over $200, 500, 1000$.}
	\label{tab:timing}
\end{table}
%
%

\section{Real data: Genomics of Drug Sensitivity in Cancer} 
\begin{figure}[H]
	\centering
	\includegraphics[page=1,width=0.7\linewidth]{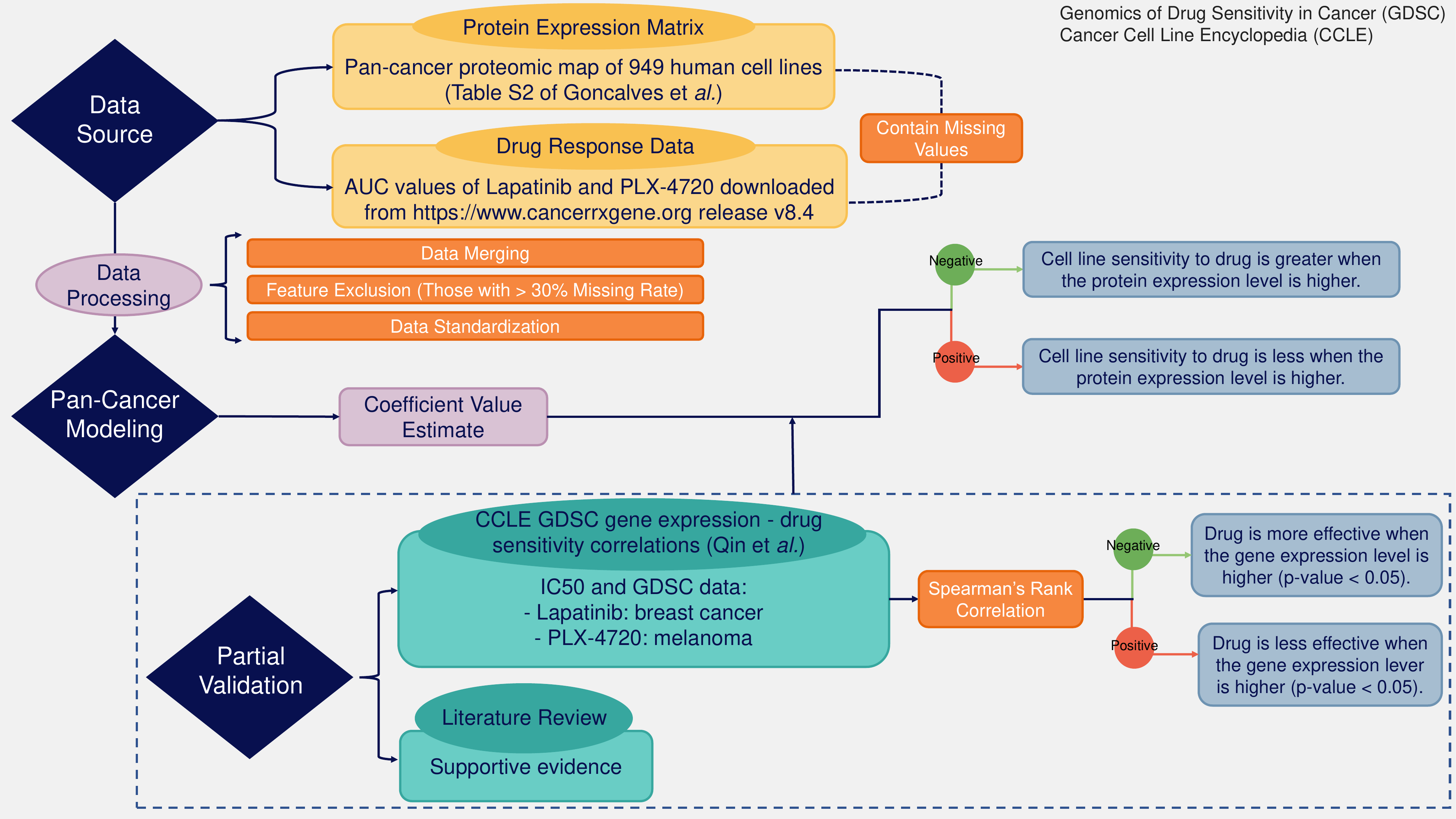}
	\caption{The overview of the pan-cancer drug sensitivity analysis and partial validation.}
	\label{fig:analysis_map}
\end{figure}

In this section, we studied the performance of the proposed method through drug response data available from Genomics of Drug Sensitivity in Cancer (GDSC). In this dataset, cancer cell lines (samples) are treated with different drugs or compounds. Sensitivity to some drugs was measured by the area under the dose–response curve ($\text{AUC}_{\text{RS}}$) (a response variable), which is to be modeled by the protein levels of cells (explanatory variables). 
A small $\text{AUC}_{\text{RS}}$ value indicates a strong drug response of the cell line to the drug. A large value of $\text{AUC}_{\text{RS}}$ means no or limited response of the cell line to the tested drug \citep{Vis:2016}.
Among many, we used the protein expression data from $949$ human cancer cell lines. We aimed to discover a list of (small portion of) proteins (biomarkers) that help explain the drug sensitivity for the anti-cancer drug of interest. These lists may also be used to identify cell lines that respond to some drugs more actively than others.

In the dataset, $949$ cell lines and $8,498$ protein expressions were incompletely measured, but we deleted proteins in which more than $30\%$ of values were missing, resulting in the bottom left of Figure \ref{fig:data_missing}. Then, the final data we used to analyze is $n = 867$ cell lines and $p = 4,183$ proteins. It has $7.15\%$ of  missing values in average across cell lines (see the top of Figure \ref{fig:data_missing}). 
However, every cell line has at least $48$ missing values (see the bottom right of Figure \ref{fig:data_missing}), meaning the listwise deletion is not feasible.

\begin{figure}[H]
	\centering
	\includegraphics[page=1,width=0.8\linewidth]{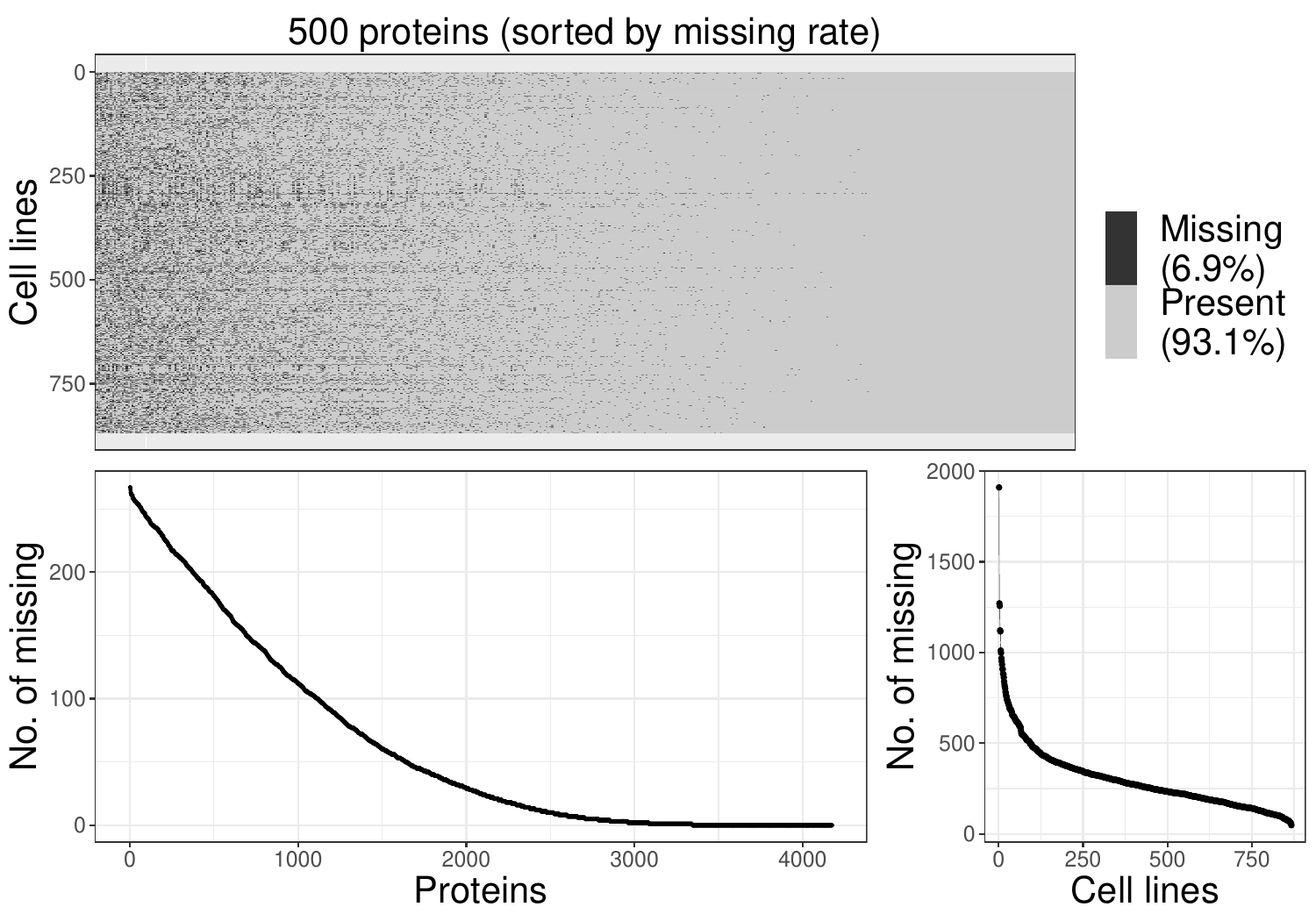}
	\caption{
		In the top figure, missing values are marked as black in the data matrix with randomly chosen 500 proteins.
		The two bottom figures show the number of missing values in either proteins (left) or cell lines (right).}
	\label{fig:data_missing}
\end{figure}

We used Lapatinib (an approved drug in treating HER2-positive breast cancers, an inhibitor of EGFR (also known as ERBB1 and HER1) \citep{Xu:2017} and HER2 (also known as ERBB2)) and PLX-4720 (selective inhibitor of BRAFV600E) as two examples to showcase the application of our method in examining the pan-cancer drug responses and exploring potential protein biomarkers of cancer vulnerabilities. 

Before running our proposed method based on $\ell_\infty$-norm, we standardized $\text{AUC}_{\text{RS}}$ and protein expressions using sample means and standard deviations calculated ignoring missing values. The grid search for the tuning parameter was similarly performed as in the simulation study; the naive lasso estimator $\br_{\text{naive}}$ was fit and used to decide the range of grids $[R/10000, R]$ with $R=2 ||\br_{\text{naive}}||_{\max}$ in which 100 evenly spaced grid points were considered. The cross-validation error curves are given in the left of Figure \ref{fig:data_cverror}.
\begin{figure}[H]
	\centering
	\includegraphics[page=1,width=0.8\linewidth]{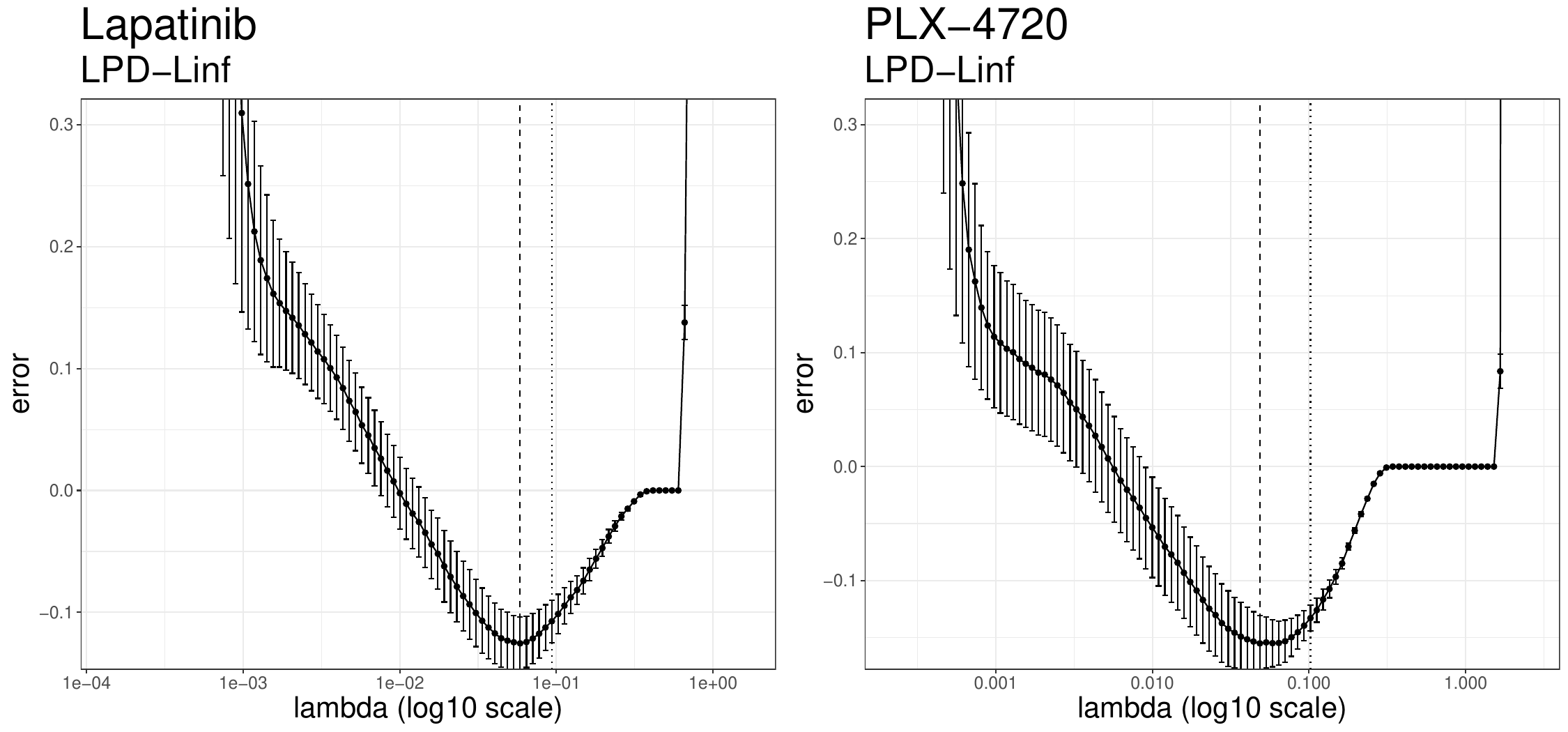}
	\caption{The corrected cross-validation error (solid line). The two vertical lines indicate the optimal tuning parameter (dashed line) and 1-se rule (dotted line), respectively. The error bar is deviated from the center by one standard error.}
	\label{fig:data_cverror}
\end{figure}

We attempted to interpret the estimated coefficients. For simplicity, we applied the 1-se rule (the dotted line in Figure \ref{fig:data_cverror}) that chose a slightly larger tuning parameter and pursued a sparser solution whose accuracy was still acceptable. Table \ref{tab:data_coef_sign} below shows the number of non-zero coefficients and their signs.
\begin{table}[H]
	\footnotesize
	\centering
	\begin{tabular}{ccc}
		\hline
		Drug & Sign & Count \\ 
		\hline
		Lapatinib & (--) &  48 \\ 
		Lapatinib & (+) &  40 \\ 
		Lapatinib & zero & 4088 \\ 
		PLX-4720 & (--) &  58 \\ 
		PLX-4720 & (+) &  29 \\ 
		PLX-4720 & zero & 4089 \\ 
		\hline
	\end{tabular}
	\caption{Signs of the estimated coefficients from the 1-se rule.}
	\label{tab:data_coef_sign}
\end{table}
\noindent
In our analysis, a negative association (coefficient) with $\text{AUC}_{\text{RS}}$ suggests greater sensitivity (of a cell line) when the protein level is high. A tool developed by \cite{Qin:2017} aiming at the discovery of drug sensitivity and gene expression association was used to assist us in demonstrating the robustness of our method. In \cite{Qin:2017}, a positive correlation with the IC50 indicates that the drug is less effective when the expression of a targeted gene is higher and vice versa. However, it is essential to note that the concordance between proteomics and transcriptomics can be weak \citep{Wu:2013}. Integrating the information obtained from each data modality may help predict the effects of gene/protein levels on anti-cancer drug activity \citep{Goncalves:2022}.

For the case of Lapatinib, we found 48 proteins that showed a significant negative association with the $\text{AUC}_{\text{RS}}$. Interestingly, EGFR, the canonical target of Lapatinib, was also found to be among the selected proteins. Among 48 proteins, nine showed concordance with the expression of nine genes (\textit{BAIAP2}, \textit{FAM83H}, \textit{HDHD3}, \textit{HSD17B8}, \textit{KRT19}, \textit{MIEN1}, \textit{PLXNB2}, \textit{REEP6}, and \textit{SEC16A}) affecting the activity of Lapatinib estimated by \cite{Qin:2017} using IC50 and GDSC gene expression data. It has been known that \textit{MIEN1} is amplified along \textit{ERBB2} and exhibits oncogenic potential \citep{Omenn:2014}. It is linked to cisplatin resistance and is highly expressed in Lapatinib-sensitive breast cancer cells than Lapatinib-resistant breast cancer cells \citep{Kumar:2019}.

PLX-4720 has shown \textit{in vitro} and \textit{in vivo} efficacy in treating thyroid cancer and melanoma \citep{Coperchini:2019}. In our analysis, 58 proteins showed a negative association with $\text{AUC}_{\text{RS}}$. 
Regarding thyroid cancer, 8 corresponding genes (\textit{FAHD2A}, \textit{FKBP10}, \textit{GSN}, \textit{QDPR}, \textit{RAB27A}, \textit{RETSAT}, \textit{S100A13}, \textit{TIMM50}) also had negative Spearman's rank correlation coefficient in the analysis by \cite{Qin:2017} (using IC50 and GDSC gene expression data).
Ten out of 12 genes (\textit{AMDHD2}, \textit{CTSB}, \textit{ENDOD1}, \textit{HIBADH}, \textit{KANK2}, \textit{PML}, \textit{RPS27L}, \textit{SP100}, \textit{STX7}, and \textit{TIMMDC1}) showed negative Spearman's rank correlation coefficient in the analysis for melanoma by \cite{Qin:2017}.
These generally concordant results suggest the relevance of our pan-cancer regression modeling approach.

\section{Conclusion} 
This paper tackles the penalized linear regression problem with missing observations where the estimated Gram matrix of covariates is non-PD in general. To handle it, we present a significantly simpler approach for positive definite modification of non-PD matrices inspired by linear shrinkage of covariance matrix. Due to its closed forms, the procedure is scalable even for high-dimensional regression, while the lasso solution based on it still enjoys the same rate of convergence and selection consistency. Through analyzing simulated and real data, we verify that the proposed method has a greater advantage in computational aspect compared to existing methods while ensuring 
theoretical properties such as selection consistency. 

We acknowledged some potential to extend our method to the MAR case by modeling the observation probability $\pi_{i,jk}^{xx}=\pi(\boldsymbol{x}_{i,\text{obs}}; \boldsymbol{\eta})$ using the (fully) observed data. It can be shown that the corresponding IPW estimator is unbiased under the MAR assumption, but its concentration inequalities are more difficult to derive due to the dependency of observed data. This extension is interesting for future work. Moreover, we expressed the estimation performance with the minimum pairwise sample size. \cite{Zheng:2023} came up with measuring individual dependency on missing observations in a different context (estimation of the graphical model). Under suitable assumptions on the graph structure of explanatory variables (e.g. sparsity), representing the individual dependency would give more insights for the regression coefficients. This needs more investigation on the simultaneous estimation of covariance matrix and regression coefficients, and thus we leave it as future work.

As the quadratic loss is closely connected to the Gaussian distribution, a natural extension of our work is to exponential families, i.e. the generalized linear model (GLM). Seemingly, it looks challenging to define a Gram matrix in this context due to the non-linear link function. However, when fitting the genearlized linear model, an adjusted dependent variable is used in the process of an iterative (re-)weighted least squares (\cite{James:2009}). Moreover, one may find that the adjusted dependent variable can be seen as the sum of a linear predictor (evaluated at the current iteration) and the Pearson residual. Based on this observation, we may construct Gram matrices defined between linear predictors and/or Pearson residuals. We plan to explore this extension in future.

{
	To address the sub-optimal convergence rate caused by the trace term in our theories, there might be room for improvement. Currently, we transit the deviation of the smallest eigenvalue of the IPW estimator (see Lemma \ref{lem:alpha_bound}) to the spectral norm using Weyl's inequality; $|\lambda_{\min}(\widehat{\bSigma}^{\rm IPW}) - \lambda_{\min}(\bSigma)| \le || \widehat{\bSigma}^{\rm IPW} - \bSigma ||_2$. However, this inequality may not be tight in a certain class $\tilde{\mathcal{C}}$ of the covariance matrix. If a sharper upper bound of the left-hand side, ideally not depending on the trace term, could be achieved, then the theoretical results could be further improved.
}

\section*{Acknowledgements}

Seongoh Park was supported by the government of the Republic of Korea (MSIT) and the National Research Foundation of Korea (NRF-1711200203); the Sungshin Women’s University Research Grant of H20240073. Johan Lim was supported by the government of the Republic of Korea (MSIT) and the National Research Foundation of Korea (NRF-2021R1A2C1010786)

\bibliographystyle{apalike}
\bibliography{references}

\begin{thebibliography}{}

\bibitem[Bach, 2008]{Bach:2008}
Bach, F.~R. (2008).
\newblock Bolasso: Model consistent lasso estimation through the bootstrap.
\newblock In {\em Proceedings of the 25th International Conference on Machine
  Learning}, ICML '08, page 33–40, New York, NY, USA. Association for
  Computing Machinery.

\bibitem[Bickel and Levina, 2008a]{Bickel:2008_thresholding}
Bickel, P.~J. and Levina, E. (2008a).
\newblock {Covariance regularization by thresholding}.
\newblock {\em The Annals of Statistics}, 36(6):2577 -- 2604.

\bibitem[Bickel and Levina, 2008b]{Bickel:2008_banding}
Bickel, P.~J. and Levina, E. (2008b).
\newblock {Regularized estimation of large covariance matrices}.
\newblock {\em The Annals of Statistics}, 36(1):199 -- 227.

\bibitem[Bien and Tibshirani, 2011]{Bien:2011}
Bien, J. and Tibshirani, R.~J. (2011).
\newblock {Sparse estimation of a covariance matrix}.
\newblock {\em Biometrika}, 98(4):807--820.

\bibitem[Cai and Zhang, 2016]{Cai:2016:JMA}
Cai, T.~T. and Zhang, A. (2016).
\newblock Minimax rate-optimal estimation of high-dimensional covariance
  matrices with incomplete data.
\newblock {\em Journal of Multivariate Analysis}, 150:55--74.

\bibitem[Candes and Tao, 2007]{Candes:2007}
Candes, E. and Tao, T. (2007).
\newblock {The Dantzig selector: Statistical estimation when p is much larger
  than n}.
\newblock {\em The Annals of Statistics}, 35(6):2313 -- 2351.

\bibitem[Chatterjee and Lahiri, 2011]{Chatterjee:2011}
Chatterjee, A. and Lahiri, S.~N. (2011).
\newblock Bootstrapping lasso estimators.
\newblock {\em Journal of the American Statistical Association},
  106(494):608--625.

\bibitem[Chatterjee and Lahiri, 2013]{Chatterjee:2013}
Chatterjee, A. and Lahiri, S.~N. (2013).
\newblock {Rates of convergence of the Adaptive LASSO estimators to the Oracle
  distribution and higher order refinements by the bootstrap}.
\newblock {\em The Annals of Statistics}, 41(3):1232 -- 1259.

\bibitem[Chen and Caramanis, 2013]{Chen:2013}
Chen, Y. and Caramanis, C. (2013).
\newblock Noisy and missing data regression: Distribution-oblivious support
  recovery.
\newblock In Dasgupta, S. and McAllester, D., editors, {\em Proceedings of the
  30th International Conference on Machine Learning}, volume~28 of {\em
  Proceedings of Machine Learning Research}, pages 383--391, Atlanta, Georgia,
  USA. PMLR.

\bibitem[Cho et~al., 2021]{Cho:2021}
Cho, S., Katayama, S., Lim, J., and Choi, Y.-G. (2021).
\newblock Positive-definite modification of a covariance matrix by minimizing
  the matrix $\ell_{\infty}$norm with applications to portfolio optimization.
\newblock {\em AStA Advances in Statistical Analysis}, 105(4):601--627.

\bibitem[Choi et~al., 2019]{Choi:2019}
Choi, Y.-G., Lim, J., Roy, A., and Park, J. (2019).
\newblock Fixed support positive-definite modification of covariance matrix
  estimators via linear shrinkage.
\newblock {\em Journal of Multivariate Analysis}, 171:234--249.

\bibitem[Coperchini et~al., 2019]{Coperchini:2019}
Coperchini, F., Croce, L., Denegri, M., Awwad, O., Ngnitejeu, S.~T., Muzza, M.,
  Capelli, V., Latrofa, F., Persani, L., Chiovato, L., and Rotondi, M. (2019).
\newblock The braf-inhibitor plx4720 inhibits cxcl8 secretion in brafv600e
  mutated and normal thyroid cells: a further anti-cancer effect of
  braf-inhibitors.
\newblock {\em Scientific Reports}, 9(1):4390.

\bibitem[Dabke et~al., 2021]{Dabke:2021}
Dabke, K., Kreimer, S., Jones, M.~R., and Parker, S.~J. (2021).
\newblock A simple optimization workflow to enable precise and accurate
  imputation of missing values in proteomic data sets.
\newblock {\em Journal of Proteome Research}, 20(6):3214--3229.
\newblock PMID: 33939434.

\bibitem[Dai et~al., 2013]{Dai:2013}
Dai, B., Ding, S., and Wahba, G. (2013).
\newblock Multivariate bernoulli distribution.
\newblock {\em Bernoulli}, 19(4):1465--1483.

\bibitem[Datta and Zou, 2017]{Datta:2017}
Datta, A. and Zou, H. (2017).
\newblock {CoCoLasso for high-dimensional error-in-variables regression}.
\newblock {\em The Annals of Statistics}, 45(6):2400 -- 2426.

\bibitem[Daye et~al., 2012]{Daye:2012}
Daye, Z.~J., Chen, J., and Li, H. (2012).
\newblock High-dimensional heteroscedastic regression with an application to
  eqtl data analysis.
\newblock {\em Biometrics}, 68(1):316--326.

\bibitem[Du et~al., 2022]{Du:2022}
Du, J., Boss, J., Han, P., Beesley, L.~J., Kleinsasser, M., Goutman, S.~A.,
  Batterman, S., Feldman, E.~L., and Mukherjee, B. (2022).
\newblock Variable selection with multiply-imputed datasets: Choosing between
  stacked and grouped methods.
\newblock {\em Journal of Computational and Graphical Statistics},
  31(4):1063--1075.

\bibitem[Duchi and Singer, 2009]{Duchi:2009}
Duchi, J. and Singer, Y. (2009).
\newblock Efficient online and batch learning using forward backward splitting.
\newblock {\em Journal of Machine Learning Research}, 10(99):2899--2934.

\bibitem[Efron et~al., 2004]{Efron:2004}
Efron, B., Hastie, T., Johnstone, I., and Tibshirani, R. (2004).
\newblock {Least angle regression}.
\newblock {\em The Annals of Statistics}, 32(2):407 -- 499.

\bibitem[Escribe et~al., 2021]{Escribe:2021}
Escribe, C., Lu, T., Keller-Baruch, J., Forgetta, V., Xiao, B., Richards,
  J.~B., Bhatnagar, S., Oualkacha, K., and Greenwood, C. M.~T. (2021).
\newblock Block coordinate descent algorithm improves variable selection and
  estimation in error-in-variables regression.
\newblock {\em Genetic Epidemiology}, 45(8):874--890.

\bibitem[Fan and Li, 2001]{Fan:2001}
Fan, J. and Li, R. (2001).
\newblock Variable selection via nonconcave penalized likelihood and its oracle
  properties.
\newblock {\em Journal of the American Statistical Association},
  96(456):1348--1360.

\bibitem[Friedman et~al., 2007]{Friedman:2007}
Friedman, J., Hastie, T., H{\"o}fling, H., and Tibshirani, R. (2007).
\newblock {Pathwise coordinate optimization}.
\newblock {\em The Annals of Applied Statistics}, 1(2):302 -- 332.

\bibitem[Fu and Knight, 2000]{Fu:2000}
Fu, W. and Knight, K. (2000).
\newblock {Asymptotics for lasso-type estimators}.
\newblock {\em The Annals of Statistics}, 28(5):1356 -- 1378.

\bibitem[Ghosh and Chinnaiyan, 2005]{Ghosh:2005}
Ghosh, D. and Chinnaiyan, A.~M. (2005).
\newblock Classification and selection of biomarkers in genomic data using
  lasso.
\newblock {\em Journal of Biomedicine and Biotechnology}, 2005:147--154.

\bibitem[Gon{\c c}alves et~al., 2022]{Goncalves:2022}
Gon{\c c}alves, E., Poulos, R.~C., Cai, Z., Barthorpe, S., Manda, S.~S., Lucas,
  N., Beck, A., Bucio-Noble, D., Dausmann, M., Hall, C., Hecker, M., Koh, J.,
  Lightfoot, H., Mahboob, S., Mali, I., Morris, J., Richardson, L.,
  Seneviratne, A.~J., Shepherd, R., Sykes, E., Thomas, F., Valentini, S.,
  Williams, S.~G., Wu, Y., Xavier, D., MacKenzie, K.~L., Hains, P.~G., Tully,
  B., Robinson, P.~J., Zhong, Q., Garnett, M.~J., and Reddel, R.~R. (2022).
\newblock Pan-cancer proteomic map of 949 human cell lines.
\newblock {\em Cancer Cell}, 40(8):835--849.e8.

\bibitem[Han et~al., 2014]{Han:2014}
Han, F., Lu, J., and Liu, H. (2014).
\newblock Robust scatter matrix estimation for high dimensional distributions
  with heavy tails.
\newblock {\em Technical report, Princeton University}.

\bibitem[Han and Tsay, 2020]{Han:2020}
Han, Y. and Tsay, R.~S. (2020).
\newblock High-dimensional linear regression for dependent data with
  applications to nowcasting.
\newblock {\em Statistica Sinica}, 30(4):1797--1827.

\bibitem[Heymans et~al., 2007]{Heymans:2007}
Heymans, M.~W., van Buuren, S., Knol, D.~L., van Mechelen, W., and de~Vet,
  H.~C. (2007).
\newblock Variable selection under multiple imputation using the bootstrap in a
  prognostic study.
\newblock {\em BMC Medical Research Methodology}, 7(1):33.

\bibitem[Hoerl and Kennard, 1970]{Hoerl:1970}
Hoerl, A.~E. and Kennard, R.~W. (1970).
\newblock Ridge regression: Biased estimation for nonorthogonal problems.
\newblock {\em Technometrics}, 12(1):55--67.

\bibitem[James and Radchenko, 2009]{James:2009}
James, G.~M. and Radchenko, P. (2009).
\newblock {A generalized Dantzig selector with shrinkage tuning}.
\newblock {\em Biometrika}, 96(2):323--337.

\bibitem[Karpievitch et~al., 2012]{Karpievitch:2012}
Karpievitch, Y.~V., Dabney, A.~R., and Smith, R.~D. (2012).
\newblock Normalization and missing value imputation for label-free lc-ms
  analysis.
\newblock {\em BMC Bioinformatics}, 13(16):S5.

\bibitem[Kolar and Xing, 2012]{Kolar:2012}
Kolar, M. and Xing, E.~P. (2012).
\newblock Estimating sparse precision matrices from data with missing values.
\newblock In {\em Proceedings of the 29th International Coference on
  International Conference on Machine Learning}, ICML'12, pages 635--642, USA.
  Omnipress.

\bibitem[Koltchinskii and Lounici, 2017]{Koltchinskii:2017}
Koltchinskii, V. and Lounici, K. (2017).
\newblock {Concentration inequalities and moment bounds for sample covariance
  operators}.
\newblock {\em Bernoulli}, 23(1):110 -- 133.

\bibitem[Kumar et~al., 2019]{Kumar:2019}
Kumar, S., Kushwaha, P.~P., and Gupta, S. (2019).
\newblock Emerging targets in cancer drug resistance.
\newblock {\em Cancer Drug Resistance}, 2(2):161--177.

\bibitem[Lachenbruch, 2011]{Lachenbruch:2011}
Lachenbruch, P.~A. (2011).
\newblock Variable selection when missing values are present: a case study.
\newblock {\em Statistical Methods in Medical Research}, 20(4):429--444.
\newblock PMID: 20442196.

\bibitem[Lam and Fan, 2009]{Lam:2009}
Lam, C. and Fan, J. (2009).
\newblock {Sparsistency and rates of convergence in large covariance matrix
  estimation}.
\newblock {\em The Annals of Statistics}, 37(6B):4254 -- 4278.

\bibitem[Langford et~al., 2008]{Langford:2008}
Langford, J., Li, L., and Zhang, T. (2008).
\newblock Sparse online learning via truncated gradient.
\newblock In Koller, D., Schuurmans, D., Bengio, Y., and Bottou, L., editors,
  {\em Advances in Neural Information Processing Systems}, volume~21. Curran
  Associates, Inc.

\bibitem[Ledoit and Wolf, 2004]{Ledoit:2004}
Ledoit, O. and Wolf, M. (2004).
\newblock A well-conditioned estimator for large-dimensional covariance
  matrices.
\newblock {\em Journal of Multivariate Analysis}, 88(2):365--411.

\bibitem[Lee et~al., 2015]{Lee:2015}
Lee, J.~D., Sun, Y., and Taylor, J.~E. (2015).
\newblock {On model selection consistency of regularized M-estimators}.
\newblock {\em Electronic Journal of Statistics}, 9(1):608 -- 642.

\bibitem[Lee et~al., 2003]{Lee:2003}
Lee, K.~E., Sha, N., Dougherty, E.~R., Vannucci, M., and Mallick, B.~K. (2003).
\newblock {Gene selection: a Bayesian variable selection approach}.
\newblock {\em Bioinformatics}, 19(1):90--97.

\bibitem[Li et~al., 2023]{Li:2023}
Li, Y., Yang, H., Yu, H., Huang, H., and Shen, Y. (2023).
\newblock {Penalized weighted least-squares estimate for variable selection on
  correlated multiply imputed data}.
\newblock {\em Journal of the Royal Statistical Society Series C: Applied
  Statistics}.
\newblock qlad028.

\bibitem[Liang and Li, 2009]{Liang:2009}
Liang, H. and Li, R. (2009).
\newblock Variable selection for partially linear models with measurement
  errors.
\newblock {\em Journal of the American Statistical Association},
  104(485):234--248.
\newblock PMID: 20046976.

\bibitem[Liu et~al., 2014]{Liu:2014}
Liu, H., Wang, L., and Zhao, T. (2014).
\newblock Sparse covariance matrix estimation with eigenvalue constraints.
\newblock {\em Journal of Computational and Graphical Statistics},
  23(2):439--459.
\newblock PMID: 25620866.

\bibitem[Loh and Wainwright, 2012]{Loh:2012}
Loh, P.-L. and Wainwright, M.~J. (2012).
\newblock {High-dimensional regression with noisy and missing data: Provable
  guarantees with nonconvexity}.
\newblock {\em The Annals of Statistics}, 40(3):1637 -- 1664.

\bibitem[Loh and Wainwright, 2013]{Loh:2013}
Loh, P.-L. and Wainwright, M.~J. (2013).
\newblock Regularized m-estimators with nonconvexity: Statistical and
  algorithmic theory for local optima.
\newblock In Burges, C., Bottou, L., Welling, M., Ghahramani, Z., and
  Weinberger, K., editors, {\em Advances in Neural Information Processing
  Systems}, volume~26. Curran Associates, Inc.

\bibitem[Loh and Wainwright, 2017]{Loh:2017}
Loh, P.-L. and Wainwright, M.~J. (2017).
\newblock {Support recovery without incoherence: A case for nonconvex
  regularization}.
\newblock {\em The Annals of Statistics}, 45(6):2455 -- 2482.

\bibitem[Long and Johnson, 2015]{Long:2015}
Long, Q. and Johnson, B.~A. (2015).
\newblock {Variable selection in the presence of missing data: resampling and
  imputation}.
\newblock {\em Biostatistics}, 16(3):596--610.

\bibitem[Lounici, 2014]{Lounici:2014}
Lounici, K. (2014).
\newblock High-dimensional covariance matrix estimation with missing
  observations.
\newblock {\em Bernoulli}, 20(3):1029--1058.

\bibitem[Mai et~al., 2012]{Mai:2012}
Mai, Q., Zou, H., and Yuan, M. (2012).
\newblock {A direct approach to sparse discriminant analysis in ultra-high
  dimensions}.
\newblock {\em Biometrika}, 99(1):29--42.

\bibitem[Meinshausen and Bühlmann, 2010]{Meinshausen:2010}
Meinshausen, N. and Bühlmann, P. (2010).
\newblock Stability selection.
\newblock {\em Journal of the Royal Statistical Society: Series B (Statistical
  Methodology)}, 72(4):417--473.

\bibitem[Mendelson and Zhivotovskiy, 2020]{Mendelson:2020}
Mendelson, S. and Zhivotovskiy, N. (2020).
\newblock {Robust covariance estimation under $L_{4}-L_{2}$ norm equivalence}.
\newblock {\em The Annals of Statistics}, 48(3):1648 -- 1664.

\bibitem[Negahban et~al., 2012]{Negahban:2012}
Negahban, S.~N., Ravikumar, P., Wainwright, M.~J., and Yu, B. (2012).
\newblock {A Unified Framework for High-Dimensional Analysis of $M$-Estimators
  with Decomposable Regularizers}.
\newblock {\em Statistical Science}, 27(4):538 -- 557.

\bibitem[Omenn et~al., 2014]{Omenn:2014}
Omenn, G.~S., Guan, Y., and Menon, R. (2014).
\newblock A new class of protein cancer biomarker candidates: Differentially
  expressed splice variants of erbb2 (her2/neu) and erbb1 (egfr) in breast
  cancer cell lines.
\newblock {\em Journal of Proteomics}, 107:103--112.
\newblock Special Issue: "20 years of Proteomics" in memory of Vitaliano
  Pallini.

\bibitem[Osborne et~al., 2000]{Osborne:2000}
Osborne, M.~R., Presnell, B., and Turlach, B.~A. (2000).
\newblock On the lasso and its dual.
\newblock {\em Journal of Computational and Graphical Statistics},
  9(2):319--337.

\bibitem[Park and Lim, 2019]{Park:2019}
Park, S. and Lim, J. (2019).
\newblock Non-asymptotic rate for high-dimensional covariance estimation with
  non-independent missing observations.
\newblock {\em Statistics \& Probability Letters}, 153:113--123.

\bibitem[Park et~al., 2021]{Park:2021}
Park, S., Wang, X., and Lim, J. (2021).
\newblock {Estimating high-dimensional covariance and precision matrices under
  general missing dependence}.
\newblock {\em Electronic Journal of Statistics}, 15(2):4868 -- 4915.

\bibitem[Park et~al., 2023]{Park:2022_stat}
Park, S., Wang, X., and Lim, J. (2023).
\newblock Sparse {Hanson–Wright} inequality for a bilinear form of
  sub-gaussian variables.
\newblock {\em Stat}, 12(1):e539.

\bibitem[Pavez and Ortega, 2021]{Pavez:2021}
Pavez, E. and Ortega, A. (2021).
\newblock Covariance matrix estimation with non uniform and data dependent
  missing observations.
\newblock {\em IEEE Transactions on Information Theory}, 67(2):1201--1215.

\bibitem[Qin et~al., 2017]{Qin:2017}
Qin, Y., Conley, A.~P., Grimm, E.~A., and Roszik, J. (2017).
\newblock A tool for discovering drug sensitivity and gene expression
  associations in cancer cells.
\newblock {\em PLOS ONE}, 12(4):1--6.

\bibitem[Romeo and Thoresen, 2019]{Romeo:2019}
Romeo, G. and Thoresen, M. (2019).
\newblock Model selection in high-dimensional noisy data: a simulation study.
\newblock {\em Journal of Statistical Computation and Simulation},
  89(11):2031--2050.

\bibitem[Rosenbaum and Tsybakov, 2010]{Rosenbaum:2010}
Rosenbaum, M. and Tsybakov, A.~B. (2010).
\newblock {Sparse recovery under matrix uncertainty}.
\newblock {\em The Annals of Statistics}, 38(5):2620 -- 2651.

\bibitem[Rothman, 2012]{Rothman:2012}
Rothman, A.~J. (2012).
\newblock {Positive definite estimators of large covariance matrices}.
\newblock {\em Biometrika}, 99(3):733--740.

\bibitem[S{\o}rensen, 2019]{Sorensen:2019}
S{\o}rensen, {\O}. (2019).
\newblock hdme: High-dimensional regression with measurement error.
\newblock {\em Journal of Open Source Software}, 4(37):1404.

\bibitem[S{\o}rensen et~al., 2015]{Sorensen:2015}
S{\o}rensen, {\O}., Frigessi, A., and Thoresen, M. (2015).
\newblock Measurement error in lasso: Impact and likelihood bias correction.
\newblock {\em Statistica Sinica}, 25:809–829.

\bibitem[Städler and Bühlmann, 2010]{Stadler:2010}
Städler, N. and Bühlmann, P. (2010).
\newblock Missing values: Sparse inverse covariance estimation and extension to
  sparse regression.
\newblock {\em Statistics and Computing}, 22(1):219–235.

\bibitem[Takada et~al., 2019]{Takada:2019}
Takada, M., Fujisawa, H., and Nishikawa, T. (2019).
\newblock Hmlasso: Lasso with high missing rate.
\newblock In {\em Proceedings of the Twenty-Eighth International Joint
  Conference on Artificial Intelligence, {IJCAI-19}}, pages 3541--3547.
  International Joint Conferences on Artificial Intelligence Organization.

\bibitem[Tibshirani, 1996]{Tibshirani:1996}
Tibshirani, R. (1996).
\newblock Regression shrinkage and selection via the lasso.
\newblock {\em Journal of the Royal Statistical Society: Series B
  (Methodological)}, 58(1):267--288.

\bibitem[van Buuren, 2018]{Buuren_2018}
van Buuren, S. (2018).
\newblock {\em Flexible imputation of missing data}.
\newblock CRC Press.

\bibitem[van~de Geer and B{\"u}hlmann, 2009]{Geer:2009}
van~de Geer, S.~A. and B{\"u}hlmann, P. (2009).
\newblock {On the conditions used to prove oracle results for the Lasso}.
\newblock {\em Electronic Journal of Statistics}, 3:1360 -- 1392.

\bibitem[Vershynin, 2011]{Vershynin:2011}
Vershynin, R. (2011).
\newblock Introduction to the non-asymptotic analysis of random matrices.

\bibitem[Vis et~al., 2016]{Vis:2016}
Vis, D.~J., Bombardelli, L., Lightfoot, H., Iorio, F., Garnett, M.~J., and
  Wessels, L.~F. (2016).
\newblock Multilevel models improve precision and speed of ic50 estimates.
\newblock {\em Pharmacogenomics}, 17(7):691--700.
\newblock PMID: 27180993.

\bibitem[Wainwright, 2009]{Wainwright:2009}
Wainwright, M.~J. (2009).
\newblock Sharp thresholds for high-dimensional and noisy sparsity recovery
  using $\ell _{1}$ -constrained quadratic programming (lasso).
\newblock {\em IEEE Transactions on Information Theory}, 55(5):2183--2202.

\bibitem[Wan et~al., 2015]{Wan:2015}
Wan, Y., Datta, S., Conklin, D., and Kong, M. (2015).
\newblock Variable selection models based on multiple imputation with an
  application for predicting median effective dose and maximum effect.
\newblock {\em Journal of Statistical Computation and Simulation},
  85(9):1902--1916.

\bibitem[Wang et~al., 2019]{Wang:2019}
Wang, Y., Wang, J., Balakrishnan, S., and Singh, A. (2019).
\newblock Rate optimal estimation and confidence intervals for high-dimensional
  regression with missing covariates.
\newblock {\em Journal of Multivariate Analysis}, 174:104526.

\bibitem[Webb-Robertson et~al., 2015]{Bobbie-Jo:2015}
Webb-Robertson, B.-J.~M., Wiberg, H.~K., Matzke, M.~M., Brown, J.~N., Wang, J.,
  McDermott, J.~E., Smith, R.~D., Rodland, K.~D., Metz, T.~O., Pounds, J.~G.,
  and Waters, K.~M. (2015).
\newblock Review, evaluation, and discussion of the challenges of missing value
  imputation for mass spectrometry-based label-free global proteomics.
\newblock {\em Journal of Proteome Research}, 14(5):1993--2001.
\newblock PMID: 25855118.

\bibitem[Wei et~al., 2018]{Wei:2018}
Wei, R., Wang, J., Su, M., Jia, E., Chen, S., Chen, T., and Ni, Y. (2018).
\newblock Missing value imputation approach for mass spectrometry-based
  metabolomics data.
\newblock {\em Scientific Reports}, 8(1):663.

\bibitem[Wood et~al., 2008]{Wood:2008}
Wood, A.~M., White, I.~R., and Royston, P. (2008).
\newblock How should variable selection be performed with multiply imputed
  data?
\newblock {\em Statistics in Medicine}, 27(17):3227--3246.

\bibitem[Wu et~al., 2013]{Wu:2013}
Wu, L., Candille, S.~I., Choi, Y., Xie, D., Jiang, L., Li-Pook-Than, J., Tang,
  H., and Snyder, M. (2013).
\newblock Variation and genetic control of protein abundance in humans.
\newblock {\em Nature}, 499(7456):79--82.

\bibitem[Xiao, 2009]{Xiao:2009}
Xiao, L. (2009).
\newblock Dual averaging method for regularized stochastic learning and online
  optimization.
\newblock In Bengio, Y., Schuurmans, D., Lafferty, J., Williams, C., and
  Culotta, A., editors, {\em Advances in Neural Information Processing
  Systems}, volume~22. Curran Associates, Inc.

\bibitem[Xu et~al., 2017]{Xu:2017}
Xu, Z.-q., Zhang, Y., Li, N., Liu, P.-j., Gao, L., Gao, X., and Tie, X.-j.
  (2017).
\newblock Efficacy and safety of lapatinib and trastuzumab for her2-positive
  breast cancer: a systematic review and meta-analysis of randomised controlled
  trials.
\newblock {\em BMJ Open}, 7(3).

\bibitem[Xue et~al., 2012]{Xue:2012}
Xue, L., Ma, S., and Zou, H. (2012).
\newblock Positive-definite $\ell_1$-penalized estimation of large covariance
  matrices.
\newblock {\em Journal of the American Statistical Association},
  107(500):1480--1491.

\bibitem[Zhang et~al., 2022]{Zhang:2022}
Zhang, J., Li, Y., Zhao, N., and Zheng, Z. (2022).
\newblock L0-regularization for high-dimensional regression with corrupted
  data.
\newblock {\em Communications in Statistics - Theory and Methods}, 0(0):1--17.

\bibitem[Zhao and Yu, 2006]{Zhao:2006}
Zhao, P. and Yu, B. (2006).
\newblock On model selection consistency of lasso.
\newblock {\em Journal of Machine Learning Research}, 7(90):2541--2563.

\bibitem[Zheng and Allen, 2023]{Zheng:2023}
Zheng, L. and Allen, G.~I. (2023).
\newblock Graphical model inference with erosely measured data.
\newblock {\em Journal of the American Statistical Association}, 0(ja):1--22.

\bibitem[Zheng et~al., 2018]{Zheng:2018}
Zheng, Z., Li, Y., Yu, C., and Li, G. (2018).
\newblock Balanced estimation for high-dimensional measurement error models.
\newblock {\em Computational Statistics \& Data Analysis}, 126:78--91.

\bibitem[Zou, 2006]{Zou:2006}
Zou, H. (2006).
\newblock The adaptive lasso and its oracle properties.
\newblock {\em Journal of the American Statistical Association},
  101(476):1418--1429.

\bibitem[Zou and Hastie, 2005]{Zou:2005}
Zou, H. and Hastie, T. (2005).
\newblock Regularization and variable selection via the elastic net.
\newblock {\em Journal of the Royal Statistical Society: Series B (Statistical
  Methodology)}, 67(2):301--320.

\end{thebibliography}

\newpage
\appendix

\section{Non-asymptotic inequality of the IPW estimator in the spectral norm}\label{sec:IPW_spectral}

In this section, we will derive the concentration inequality of the IPW estimator. More specifically, we are interested in the rate of convergence of $||\widehat{\bSigma}^{\rm IPW} - \bSigma ||_2$. Recall the definition of the IPW estimator 
$$
\widehat{\bSigma}^{\rm IPW} = \bS * \left[\dfrac{1}{\pi^{xx}_{jk}}, 1 \le j,k\le p \right],
$$
which is given in (\ref{eq:IPW_correction}). The random variables $\bx_i$, $(\delta_{i1}^x, \ldots, \delta_{ip}^x)$ used above are assumed to satisfy Assumption \ref{assm:sub_G}, \ref{assm:multi_bern}, and \ref{assm:mcar}. For notational convenience, we write the IPW estimator by $\widehat{\bSigma}$. Also, we omit the superscript in $\delta_{ij}^x$, $\pi_{ij\cdots}^{xx}$ and $K^x$.

\begin{thm}\label{thm:IPW_spectral}
	For $t > 1 \vee \log n$, it holds with probability at least $1-3e^{-t}$ that 
	$$
	||\widehat{\bSigma} - \bSigma ||_2 \le 
	C \tr(\bSigma) \max\{K^2, 1\} \max\bigg\{
	\sqrt{\dfrac{\pi_{\max}^{(4)} (t + \log p)}{n}},
	(t + \log n) \dfrac{\pi_{\max}^{(4)}(t + \log p)}{n}
	\bigg\},
	$$
	where $C>0$ is some numerical constant and 
	$$
	\pi_{\max}^{(4)} = \max_{k_1, k_2,\ell_1, \ell_2}\dfrac{\pi_{k_1 k_2 \ell_1 \ell_2}}{\pi_{k_1 \ell_1} \pi_{k_2\ell_2}}.
	$$
\end{thm}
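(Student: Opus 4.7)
The plan is to apply a matrix Bernstein inequality after truncating the sub-Gaussian covariates. Write the IPW estimator as an i.i.d.\ centered sum
\begin{equation*}
\widehat{\bSigma} - \bSigma \;=\; \frac{1}{n}\sum_{i=1}^{n}\bigl(\bA_i - \bSigma\bigr),\qquad \bA_i := \bigl(\tilde{\bx}_i\tilde{\bx}_i^{\top}\bigr)\ast \bPi^{-1},
\end{equation*}
where $\bPi = (\pi_{jk})$ and $\bPi^{-1}$ denotes entrywise reciprocals; the identity $\mathbb{E}[\bA_i] = \bSigma$ is immediate from Assumptions~\ref{assm:multi_bern} and \ref{assm:mcar}.

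First, I would truncate the sub-Gaussian covariates so that the summands become essentially bounded. Fix $M \asymp K\sqrt{t+\log n}$ and work on the event
\begin{equation*}
\mathcal{E} \;=\; \Bigl\{\max_{i,j}|x_{ij}|/\sqrt{\sigma_{jj}} \le M\Bigr\},
\end{equation*}
which, by Assumption~\ref{assm:sub_G} and a union bound, satisfies $\mathbb{P}(\mathcal{E}^{c}) \le e^{-t}$ for $t > 1\vee \log n$. On $\mathcal{E}$ one derives a deterministic bound $\|\bA_i - \bSigma\|_2 \le B$ by combining the rank-one estimate $\|\tilde{\bx}_i\tilde{\bx}_i^{\top}\|_2 = \|\tilde{\bx}_i\|_2^{2} \le M^{2}\tr(\bSigma)$ with a careful Hadamard-product bound that absorbs the IPW weights into the $\pi^{(4)}_{\max}$ constant; this $B = O(M^{2}\tr(\bSigma))$ will eventually produce the second (linear-in-$t$) term of the stated maximum.

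The central quantitative step is bounding the variance proxy $V := \bigl\|\mathbb{E}[(\bA_1 - \bSigma)^{2}]\bigr\|_2 = \bigl\|\mathbb{E}[\bA_1^{2}] - \bSigma^{2}\bigr\|_2$. Expanding entrywise and using independence of the $\delta$'s from the $x$'s (Assumption~\ref{assm:mcar}),
\begin{equation*}
\mathbb{E}\bigl[(\bA_1^{2})_{jk}\bigr] \;=\; \sum_{\ell=1}^{p}\frac{\pi_{jk\ell}}{\pi_{j\ell}\pi_{\ell k}}\,\mathbb{E}[x_{1j}x_{1k}x_{1\ell}^{2}].
\end{equation*}
Bernoulli idempotence ($\delta^{2}=\delta$) identifies $\pi_{jk\ell}/(\pi_{j\ell}\pi_{\ell k})$ with an instance of the four-index ratio defining $\pi^{(4)}_{\max}$ (take $k_2 = \ell_1 = \ell$), hence it is bounded by $\pi^{(4)}_{\max}$; sub-Gaussian H\"older under Assumption~\ref{assm:sub_G} gives $|\mathbb{E}[x_{1j}x_{1k}x_{1\ell}^{2}]| \lesssim K^{4}\sqrt{\sigma_{jj}\sigma_{kk}}\,\sigma_{\ell\ell}$. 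Summing over $\ell$ yields $|\mathbb{E}[\bA_1^{2}]_{jk}| \lesssim K^{4}\pi^{(4)}_{\max}\tr(\bSigma)\sqrt{\sigma_{jj}\sigma_{kk}}$, and the Frobenius bound $\|\cdot\|_2 \le \|\cdot\|_F$ then gives $V \lesssim K^{4}\pi^{(4)}_{\max}[\tr(\bSigma)]^{2}$, which is the sharp dependence appearing in the first term of the claimed maximum.

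Feeding $B$ and $V$ into the bounded matrix Bernstein inequality in ambient dimension $p$ and taking a union bound with $\mathbb{P}(\mathcal{E}^{c})$ delivers the $3e^{-t}$ budget and the two-regime bound in the statement. The main obstacle is the variance calculation: the correct normalization for the IPW Hadamard product is not the trivial $1/\pi_{\min}^{2}$ (which would be much too loose) but exactly $\pi^{(4)}_{\max}$, and extracting this cleanly requires the above entrywise expansion together with the $\delta^{2}=\delta$ trick. A secondary subtlety is that the naive Schur-product inequality $\|\bB\ast\bC\|_2 \le \|\bB\|_2\|\bC\|_{\max}$ is false in general, so the deterministic bound $B$ must be obtained either through a Frobenius detour or via the correct row-norm Schur-product estimate, both of which cost only absolute constants.
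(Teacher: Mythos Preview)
Your overall architecture (truncate, then matrix Bernstein) matches the paper's, and your variance calculation is essentially the paper's Fact~\ref{fact:const}(F3): the identity $\delta^{2}=\delta$ does reduce the three-index ratio $\pi_{jk\ell}/(\pi_{j\ell}\pi_{\ell k})$ to an instance of $\pi_{\max}^{(4)}$, and the Frobenius detour gives $V\lesssim K^{4}\pi_{\max}^{(4)}[\tr(\bSigma)]^{2}$ exactly as needed. Your remark about the Schur-product bound is also apt and harmless.

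The genuine gap is in the truncation step. With $M\asymp K\sqrt{t+\log n}$ the event $\mathcal{E}=\{\max_{i,j}|x_{ij}|/\sqrt{\sigma_{jj}}\le M\}$ does \emph{not} satisfy $\mathbb{P}(\mathcal{E}^{c})\le e^{-t}$: a union bound over $np$ coordinates gives $\mathbb{P}(\mathcal{E}^{c})\lesssim np\,e^{-c(t+\log n)}$, which is not $\le e^{-t}$ once $p\gg n$. To repair the probability you would need $M\asymp K\sqrt{t+\log(np)}$, but then $B\asymp \pi_{\max}^{(2)}\,M^{2}\tr(\bSigma)$ picks up an extra $\log p$, and the linear Bernstein term becomes $(t+\log p)^{2}\,\pi_{\max}^{(4)}/n$ rather than the stated $(t+\log n)(t+\log p)\,\pi_{\max}^{(4)}/n$; these differ in the regime $\log n<t<\log p$, so you would not recover the theorem as written.

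The paper avoids this by truncating at the \emph{sample} level rather than entrywise: it works on $A_i=\{\|X_i\|_2^{2}\le U\}$ with $U\asymp K^{2}\tr(\bSigma)(t+\log n)$, and uses that $\|X_i\|_2^{2}$ is sub-exponential with $\psi_1$-norm $O(K^{2}\tr(\bSigma))$, a quantity that does not see $p$. The union bound is then only over the $n$ samples, giving $\mathbb{P}(\cap_i A_i)\ge 1-e^{-t}$ with no $\log p$ penalty (Fact~\ref{fact:boundedness}). On this event one still has $\|\tilde{\bx}_i\|_2^{2}\le\|X_i\|_2^{2}\le U$, so your downstream bounds for $B$ go through unchanged with the sharper $U$. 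Replace your entrywise truncation by this $\ell_2$-level truncation and the rest of your argument yields the claimed bound; the paper's separate treatment of the diagonal is a stylistic choice, not essential to closing the gap.
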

\noindent
Our proof is based on the idea of \cite{Lounici:2014}, but improve it to address the general missing dependency.

We begin with the following decomposition:
$$
||\widehat{\bSigma} - \bSigma ||_2 \le ||\diag(\widehat{\bSigma} - \bSigma) ||_2 + ||\text{OD}(\widehat{\bSigma} - \bSigma) ||_2
$$
where $\diag(\bA)$ is a diagonal matrix with diagonals inherited from $\bA$, and $\text{OD}(\bA) = \bA - \diag(\bA)$. We deal with each of them separately. 

\subsection{Off-diagonal part}
To use Bernstein inequality of bounded matrices later, we consider an event $A_i = \{||X_i||_2^2 \le U\}$ where $U = C \cdot \text{tr}(\bSigma) ( K^2 + 1 ) (t + \log n)$ for some numerical constant $C>0$. We claim the following: \begin{fact}\label{fact:boundedness}
	$\text{P}(\cap_{i=1}^n A_i) \ge 1 - e^{-t}$ for any $t > 0$.
\end{fact}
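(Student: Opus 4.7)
The plan is to reduce the claim to a single-index concentration statement via a union bound and then control $\|\bx_i\|_2^2$ with sub-exponential tools. First, by the union bound it suffices to prove
\[
\text{P}\bigl( \|\bx_i\|_2^2 > U \bigr) \le e^{-(t+\log n)}
\]
for each $i=1,\ldots,n$, so that $\text{P}(\cap_i A_i) \ge 1 - n\cdot e^{-(t+\log n)} = 1 - e^{-t}$.

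Next, since the covariates are centered, $\mathbb{E}\|\bx_i\|_2^2 = \sum_{j=1}^p \sigma_{jj} = \tr(\bSigma)$, so it is enough to give a sub-exponential deviation bound for the fluctuation $\|\bx_i\|_2^2 - \tr(\bSigma) = \sum_{j=1}^p (x_{ij}^2 - \sigma_{jj})$. Under Assumption \ref{assm:sub_G}, each coordinate $x_{ij}$ is sub-Gaussian with $\|x_{ij}\|_{\psi_2} \le K^x\sqrt{\sigma_{jj}}$. Through the standard identity $\|W^2\|_{\psi_1}\le C\|W\|_{\psi_2}^2$ (the square of a sub-Gaussian variable is sub-exponential) together with centering, this gives $\|x_{ij}^2 - \sigma_{jj}\|_{\psi_1}\le C'((K^x)^2+1)\sigma_{jj}$. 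The key observation is that although the coordinates of $\bx_i$ are \emph{not} independent, the triangle inequality for the $\psi_1$-norm requires no independence, and therefore
\[
\bigl\|\,\|\bx_i\|_2^2 - \tr(\bSigma)\bigr\|_{\psi_1} \;\le\; \sum_{j=1}^p \|x_{ij}^2 - \sigma_{jj}\|_{\psi_1} \;\le\; C''\bigl((K^x)^2+1\bigr)\tr(\bSigma).
\]
From the standard sub-exponential tail inequality,
\[
\text{P}\bigl(\|\bx_i\|_2^2 - \tr(\bSigma) > v\bigr) \;\le\; 2\exp\!\left(-\frac{cv}{((K^x)^2+1)\tr(\bSigma)}\right),
\]
so choosing $v$ proportional to $((K^x)^2+1)\tr(\bSigma)(t+\log n)$ renders the right-hand side at most $e^{-(t+\log n)}$. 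Adding $\tr(\bSigma)$ back to produce $\|\bx_i\|_2^2$ itself and absorbing it into the $(K^x)^2+1$ factor yields a bound of the prescribed form $U=C\tr(\bSigma)((K^x)^2+1)(t+\log n)$.

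The principal obstacle is the absence of independence among the coordinates of a single $\bx_i$---Assumption \ref{assm:sub_G} only controls each coordinate's $\psi_2$-norm marginally---so Bernstein or Hanson--Wright type inequalities cannot be applied directly to the quadratic form $\sum_j x_{ij}^2$. The dependence-free triangle inequality for the $\psi_1$-norm bypasses this at the cost of only an absolute constant and needs no additional structural hypothesis (such as a bound on $\|\bSigma\|_2$ or joint sub-Gaussianity of the vector). The remaining work is routine bookkeeping of universal constants and verifying that the tail parameter $v$ combined with the mean $\tr(\bSigma)$ fits uniformly under the single expression $U$ for all $t>0$.
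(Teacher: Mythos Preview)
Your proposal is correct and follows essentially the same approach as the paper: both bound the $\psi_1$-norm of $\|\bx_i\|_2^2 - \tr(\bSigma)$ via the dependence-free triangle inequality $\sum_j \|x_{ij}^2\|_{\psi_1}$, invoke a sub-exponential tail bound (the paper cites Proposition~1 of \cite{Lounici:2014}, which gives the $\sqrt{t}\vee t$ form), and then substitute $t\leftarrow t+\log n$ before applying the union bound over $i$. The only cosmetic difference is that you center before summing while the paper adds $\tr(\bSigma)$ back at the end, and you write the tail in the pure linear (large-deviation) regime whereas the paper keeps both the $\sqrt{t}$ and $t$ regimes; neither affects the argument or the final constant.
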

\noindent
Define a matrix $Z_i$ with zero diagonals
$$
Z_i = \text{OD}\left(\left[\dfrac{\tilde{X}_{ik} \tilde{X}_{i\ell}}{\pi_{k\ell}}\right]_{1 \le k, \ell \le p} \right),
$$
and $\tilde{Z}_i = Z_i \text{I}_{A_i}$.
On the event $\cap_{i=1}^n A_i$, we can get $\text{OD}(\widehat{\bSigma} - \bSigma) = \dfrac{1}{n} \sum\limits_{i=1}^n (Z_i - \mathbb{E} Z_i) = \dfrac{1}{n} \sum\limits_{i=1}^n (\tilde{Z}_i - \mathbb{E} \tilde{Z}_i) - \dfrac{1}{n} \sum\limits_{i=1}^n \mathbb{E} Z_i \text{I}_{A_i^c}$ and thus
\begin{equation}\label{eq:OD_norm}
||\text{OD}(\widehat{\bSigma} - \bSigma)||_2 \le || \dfrac{1}{n} \sum\limits_{i=1}^n (\tilde{Z}_i - \mathbb{E} \tilde{Z}_i) ||_2 + || \dfrac{1}{n} \sum\limits_{i=1}^n \mathbb{E} Z_i \text{I}_{A_i^c} ||_2.
\end{equation}
For the latter term, we get
\begin{equation}\label{eq:OD_second}
\begin{array}{rcl}
|| \dfrac{1}{n} \sum\limits_{i=1}^n \mathbb{E} Z_i \text{I}_{A_i^c} ||_2 & = & 	|| \mathbb{E} Z_1 \text{I}_{A_1^c} ||_2\\
& = & \max\limits_{\theta \in \mathcal{S}_{p-1}} |\mathbb{E} \theta^\top Z_1 \theta \text{I}_{A_1^c} |\\
& \le & \max\limits_{\theta \in \mathcal{S}_{p-1}} \mathbb{E} |\theta^\top Z_1 \theta| \text{I}_{A_1^c} \\
& \le & \max\limits_{\theta \in \mathcal{S}_{p-1}} \sqrt{\mathbb{E} (\theta^\top Z_1 \theta)^2 \mathbb{E} \text{I}_{A_1^c}}\\
& = & \sqrt{\max\limits_{\theta \in \mathcal{S}_{p-1}} \mathbb{E} (\theta^\top Z_1 \theta)^2 \cdot  \text{P}(A_1^c)} \equiv t_2
\end{array}
\end{equation}
Next, note that $\tilde{Z}_1 - \mathbb{E} \tilde{Z}_1$ is bounded conditioning on the set $A$, which is stated and proved more specifically in (F1) of Fact \ref{fact:const}.
Hence, we can use Bernstein inequality for the former, and get the upper bound of $|| \dfrac{1}{n} \sum\limits_{i=1}^n (\tilde{Z}_i - \mathbb{E} \tilde{Z}_i) ||_2$. The following result is from Proposition 2 of \cite{Lounici:2014}.
For $t > 0$, with probability at least $1 - e^{-t}$, we have (conditioning on the set $A$)
\begin{equation}\label{eq:OD_first}
|| \dfrac{1}{n} \sum\limits_{i=1}^n (\tilde{Z}_i - \mathbb{E} \tilde{Z}_i) ||_2 \le 
2\max\left\{
\sigma_{\tilde{Z}} \sqrt{\dfrac{t + \log p}{n}}, 2 \pi_{\max}^{(2)} U \dfrac{t + \log p}{n}
\right\} \equiv t_1,
\end{equation}
where $\sigma^2_{\tilde{Z}}= || \dfrac{1}{n} \sum\limits_{i=1}^n \mathbb{E} (\tilde{Z}_i - \mathbb{E} \tilde{Z}_i)^2 ||_2 = || \mathbb{E}(\tilde{Z}_1 - \mathbb{E} \tilde{Z}_1)^2 ||_2$.

Combining (\ref{eq:OD_norm}), (\ref{eq:OD_second}), and (\ref{eq:OD_first}), we have 
$$
\begin{array}{rcl}
\text{P}(||\text{OD}(\widehat{\bSigma} - \bSigma)||_2 > t_1 + t_2) 
& \le & \text{P}(||\text{OD}(\widehat{\bSigma} - \bSigma)||_2 > t_1 + t_2 | A)  + 
\text{P}(A^c) \\
& \le & \text{P}(|| \dfrac{1}{n} \sum\limits_{i=1}^n (\tilde{Z}_i - \mathbb{E} \tilde{Z}_i) ||_2 \\
&& \quad + || \dfrac{1}{n} \sum\limits_{i=1}^n \mathbb{E} Z_i \text{I}_{A_i^c} ||_2 > t_1 + t_2 | A)  + 
\text{P}(A^c) \\
& \le & \text{P}(|| \dfrac{1}{n} \sum\limits_{i=1}^n (\tilde{Z}_i - \mathbb{E} \tilde{Z}_i) ||_2 > t_1 | A)  + 
\text{P}(A^c) \\
& \le & 2e^{-t}.
\end{array}
$$
The remaining part is to prove the boundedness of $\tilde{Z}_i - \mathbb{E} \tilde{Z}_i$ and calculate constants appearing in $t_1$ and $t_2$. 
\begin{fact}\label{fact:const}
	The following statements hold in deterministic sense.
	\begin{enumerate}
		\item[(F1)] Conditioning on the set $A=\cap_{i=1}^n \{||X_i||_2^2 \le U\}$, we get
		$$
		|| \tilde{Z}_1 - \mathbb{E} \tilde{Z}_1 ||_2 \le 2 \pi_{\max}^{(2)} U,
		$$
		where $\pi_{\max}^{(2)} = \max_{k,\ell} 1/\pi_{k\ell}$.

		\item[(F2)] $\max\limits_{\theta \in \mathcal{S}_{p-1}} \mathbb{E} (\theta^\top Z_1 \theta)^2 \le  C K^4  \pi_{\max}^{(4)} (\tr(\bSigma))^2$ where
		$$
		\pi_{\max}^{(4)} = \max_{k_1, k_2,\ell_1, \ell_2}\dfrac{\pi_{k_1 k_2 \ell_1 \ell_2}}{\pi_{k_1 \ell_1} \pi_{k_2\ell_2}}
		$$
		
		\item[(F3)] $\sigma^2_{\tilde{Z}}= || \mathbb{E}(\tilde{Z}_1 - \mathbb{E} \tilde{Z}_1)^2 ||_2 \le
		C K^4 \pi_{\max}^{(3)} (\tr(\bSigma))^2$
		where
		$$
		\pi_{\max}^{(3)}=\max_{s,k,\ell} \dfrac{\pi_{k\ell s}}{\pi_{ks}\pi_{\ell s}}
		$$
	\end{enumerate}
	
\end{fact}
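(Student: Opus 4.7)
The plan is to prove the three bounds in sequence, each starting from the explicit entry form $(Z_1)_{k\ell} = \delta_{1k}\delta_{1\ell} X_{1k}X_{1\ell}/\pi_{k\ell}$ for $k\neq\ell$ (and zero on the diagonal), using MCAR (Assumption \ref{assm:mcar}) to factor expectations of the $\delta$'s from those of the $X$'s, together with coordinatewise sub-Gaussianity (Assumption \ref{assm:sub_G}) to control moments of $\|X_1\|_2$. A recurring ingredient in (F2) and (F3) is the fourth-moment estimate $\mathbb{E}\|X_1\|_2^4 \le C K^4 (\tr\bSigma)^2$, which I would obtain by writing $\mathbb{E}\|X_1\|_2^4 = (\tr\bSigma)^2 + \text{Var}(\|X_1\|_2^2)$, bounding $\text{Var}(\|X_1\|_2^2) \le \bigl(\sum_j \sqrt{\text{Var}(X_{1j}^2)}\bigr)^2$ via Cauchy--Schwarz on covariances, and plugging in the sub-Gaussian fourth moment $\text{Var}(X_{1j}^2) \le C K^4 \sigma_{jj}^2$.

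For (F1), the variational characterization of the spectral norm together with Cauchy--Schwarz gives, for every unit $\theta$,
$$|\theta^\top Z_1 \theta| \le \pi_{\max}^{(2)} \Bigl(\sum_k |\theta_k \tilde X_{1k}|\Bigr)^2 \le \pi_{\max}^{(2)} \|\tilde X_1\|_2^2,$$
which on $A_1$ is at most $\pi_{\max}^{(2)} U$; thus $\|\tilde Z_1\|_2 \le \pi_{\max}^{(2)} U$ deterministically, and convexity of the spectral norm gives $\|\mathbb{E}\tilde Z_1\|_2 \le \pi_{\max}^{(2)} U$, after which the triangle inequality yields (F1).

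For (F2), I would expand
$$\mathbb{E}(\theta^\top Z_1 \theta)^2 = \sum_{\substack{k_1\ne\ell_1\\ k_2\ne\ell_2}} \frac{\pi_{k_1 k_2\ell_1\ell_2}}{\pi_{k_1\ell_1}\pi_{k_2\ell_2}}\, \theta_{k_1}\theta_{\ell_1}\theta_{k_2}\theta_{\ell_2}\, \mathbb{E}[X_{1k_1}X_{1\ell_1}X_{1k_2}X_{1\ell_2}]$$
after separating the $\delta$-expectation by MCAR, and then dominate the signed sum by absolute values to obtain $\mathbb{E}(\theta^\top Z_1 \theta)^2 \le \pi_{\max}^{(4)}\, \mathbb{E}\bigl[(\sum_k |\theta_k X_{1k}|)^4\bigr] \le \pi_{\max}^{(4)}\,\mathbb{E}\|X_1\|_2^4$, where the last inequality is Cauchy--Schwarz in the coordinates; the fourth-moment estimate above then closes the bound uniformly in $\theta$. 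For (F3), I would first pass from the centered variance to $\mathbb{E}Z_1^2$ via $\mathbb{E}(\tilde Z_1 - \mathbb{E}\tilde Z_1)^2 \preceq \mathbb{E}\tilde Z_1^2 \preceq \mathbb{E}Z_1^2$ in the PSD order (using $\tilde Z_1^2 = Z_1^2 \text{I}_{A_1} \preceq Z_1^2$), and then for unit $\theta$ write
$$\theta^\top \mathbb{E} Z_1^2 \theta = \mathbb{E}\|Z_1\theta\|_2^2 = \sum_k \mathbb{E}\bigl[\delta_{1k} X_{1k}^2 W_k^2\bigr], \qquad W_k = \sum_{\ell\ne k} \delta_{1\ell} X_{1\ell}\theta_\ell/\pi_{k\ell}.$$
Expanding $W_k^2$, applying MCAR to extract the ratio $\pi_{k\ell_1\ell_2}/(\pi_{k\ell_1}\pi_{k\ell_2}) \le \pi_{\max}^{(3)}$, and again dominating the signed inner sum by absolute values, the right-hand side is at most $\pi_{\max}^{(3)} \sum_k \mathbb{E}[X_{1k}^2 \|X_1\|_2^2] = \pi_{\max}^{(3)} \mathbb{E}\|X_1\|_2^4$, which gives (F3).

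The main obstacle in both (F2) and (F3) is that the coefficients $\theta_{k_1}\cdots\theta_{\ell_2}$ and the moments $\mathbb{E}[X_{1k_1}\cdots X_{1\ell_2}]$ are signed, so a naive termwise replacement of the $\pi$-ratios by their maximum does not preserve the inequality. I would handle this by passing to absolute values in the summand before pulling out $\pi_{\max}^{(\cdot)}$, which converts the sum into a quantity upper-bounded by $\|X_1\|_2^q$ through Cauchy--Schwarz over coordinates; this loses only universal constants and preserves the advertised dependence on $\pi_{\max}^{(\cdot)}$ and $(\tr \bSigma)^2$. A secondary bookkeeping point is that after MCAR factorization the $\delta$-moment for repeated indices collapses (since $\delta_{1k}^2 = \delta_{1k}$), but this only changes $\pi_{k_1 k_2 \ell_1 \ell_2}$ to a lower-order moment and is absorbed into the definition of $\pi_{\max}^{(4)}$.
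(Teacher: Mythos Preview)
Your argument is correct and reaches the same bounds as the paper, but the route differs in a useful way. For (F1) the two proofs are essentially identical: both bound $|\theta^\top Z_1\theta|$ by Cauchy--Schwarz to get $\pi_{\max}^{(2)}\|\tilde X_1\|_2^2$, then pass to the centered version by convexity of the norm and the triangle inequality; the paper wraps this in a $V_1=Z_1+W_1$ decomposition that is not actually needed for the bound.

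For (F2) and (F3) the two proofs share the opening moves---expand the multilinear sum, use MCAR to factor off the $\delta$-expectation, and extract $\pi_{\max}^{(\cdot)}$---but finish differently. The paper applies Cauchy--Schwarz \emph{at the level of the sum}, pairing the factor $(\pi\text{-ratio})\cdot\mathbb{E}[X_{1k_1}\cdots X_{1\ell_2}]$ against the $\theta$-product, and then iterates Cauchy--Schwarz on the mixed $X$-moments to reduce to $(\sum_k(\mathbb{E}X_{1k}^4)^{1/2})^2$, which is bounded via the coordinatewise sub-Gaussian fourth-moment estimate $\mathbb{E}X_{1k}^4\le CK^4\sigma_{kk}^2$. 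You instead invoke $|\mathbb{E}[\,\cdot\,]|\le\mathbb{E}|\,\cdot\,|$ termwise, recognize the resulting nonnegative sum as $\mathbb{E}\bigl(\sum_k|\theta_kX_{1k}|\bigr)^4\le\mathbb{E}\|X_1\|_2^4$, and close with the variance decomposition $\mathbb{E}\|X_1\|_2^4=(\tr\bSigma)^2+\mathrm{Var}(\|X_1\|_2^2)\le C K^4(\tr\bSigma)^2$. For (F3) specifically, the paper first splits $Z_1=V_1-W_1$ and bounds $\|\mathbb{E}V_1^2\|_2$ and $\|\mathbb{E}W_1^2\|_2$ separately; your direct computation of $\|Z_1\theta\|_2^2=\sum_k\delta_{1k}X_{1k}^2W_k^2$ bypasses this split entirely. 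Your route is somewhat more elementary (one Cauchy--Schwarz instead of several, no matrix decomposition), while the paper's route makes the role of the individual coordinate moments $\mathbb{E}X_{1k}^4$ more transparent; both yield the advertised dependence on $K^4\pi_{\max}^{(\cdot)}(\tr\bSigma)^2$ up to universal constants.
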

\noindent 
One can easily check that $\pi_{\max}^{(4)}\ge \max\{\pi_{\max}^{(2)}, \pi_{\max}^{(3)}\}$. Thus, some calculations lead to 
$$
t_1+t_2  \le  C \tr(\bSigma) \max\{K^2, 1\} \max\bigg\{
\sqrt{\dfrac{\pi_{\max}^{(4)} (t + \log p)}{n}},
(t + \log n) \dfrac{\pi_{\max}^{(4)}(t + \log p)}{n}
\bigg\},
$$
for some $C > 0$ if $t > 1 \vee \log n$.

\subsection{Diagonal part}
Remark that the Orlicz norm used in \cite{Lounici:2014} and $\psi_2$-norm in this paper are equivalent, up to a constant factor. Moreover, they both satisfies
$$
||\tilde{X}_{ik}||_{\psi_2} \le ||X_{ik}||_{\psi_2}, \quad ||\tilde{X}_{ik}^2||_{\psi_1} \le 2 ||\tilde{X}_{ik}||_{\psi_2}^2.
$$
Using these facts,  we get
$$
||\tilde{X}_{ik}^2||_{\psi_1} \le 2 ||\tilde{X}_{ik}||_{\psi_2}^2 \le 2||X_{ik}||_{\psi_2}^2 \le 2 \sigma_{kk} K^2.
$$
By Proposition 1 of \cite{Lounici:2014}, we get with probability at least $1 - e^{-t}$
$$
\Big| \dfrac{\sum_{i=1}^n \tilde{X}_{ik}^2}{n\pi_{k}} - \Sigma_{kk} \Big| \le \dfrac{C\sigma_{kk} K^2}{\pi_{k}} \big( \sqrt{\dfrac{t}{n}} \vee \dfrac{t}{n}\big).
$$
This implies that with probability at most $p e^{-t}$
$$
\max_{k}\Big| \dfrac{\sum_{i=1}^n \tilde{X}_{ik}^2}{n\pi_{k}} - \Sigma_{kk} \Big| > C K^2 \max_{k} \dfrac{\sigma_{kk}}{\pi_{k}}  \big( \sqrt{\dfrac{t}{n}} \vee \dfrac{t}{n}\big)
$$
Putting $t \leftarrow t + \log p$, we get

$$
\text{P}\left[ ||\diag(\widehat{\bSigma} - \bSigma) ||_2 > 
C K^2 \max_{k} \dfrac{\sigma_{kk}}{\pi_{k}}  \bigg\{ \sqrt{\dfrac{t + \log p}{n}}, \dfrac{t + \log p}{n} \bigg\}
\right] \le e^{-t}
$$

\subsection{Proof of Fact \ref{fact:boundedness}}
\begin{proof}
	$||X_i||_2^2 - \mathbb{E} ||X_i||_2^2$ is sub-exponential satisfying its $\psi_2$-norm bounded by
	$$
	\begin{array}{rcl}
	\Big|\Big| ||X_i||_2^2 - \mathbb{E} ||X_i||_2^2\Big|\Big|_{\psi_2} 
	& \le & \sum_{j=1}^p || X_{ij}^2 ||_{\psi_2}  + \text{tr}(\bSigma) \\[0.5em]
	&\le& 
	\sum_{j=1}^p 2 \sigma_{jj} K^2 + \text{tr}(\bSigma) \\[0.5em]
	& = &
	\text{tr}(\bSigma)(2 K^2 + 1)
	\end{array}
	$$
	By Proposition 1 of \cite{Lounici:2014},
	$$
	\text{P}\Big[
	||X_i||_2^2 > \text{tr}(\bSigma)\big\{1 + C(2 K^2 + 1) (\sqrt{t} \vee t) \big\}
	\Big] \le e^{-t}, \quad t>0.
	$$
	Putting $t \leftarrow t + \log n$ for $n>2$, we get
	$$
	\text{P}\Big[
	||X_i||_2^2 > \text{tr}(\bSigma)\big\{1 + C(2 K^2 + 1) (t + \log n) \big\} 
	\Big] \le e^{-t} / n , \quad t>0.
	$$
	Note that we can find another constant $C'>0$ such that $\text{tr}(\bSigma)\big\{1 + C(2 K^2 + 1) (t + \log n)\big\} \le C' \cdot \text{tr}(\bSigma)  ( K^2 + 1 ) (t + \log n) \equiv U$.
	By the union argument, we conclude $\text{P}\Big[\cup_{i=1}^n A_i \Big] \le e^{-t}$, for $t>0$.
\end{proof}

\subsection{Proof of (F1) of Fact \ref{fact:const}}
\begin{proof}	
	Define $V_1 = \left[\dfrac{Y_{1k} Y_{1\ell}}{\pi_{k\ell}}\right]_{1 \le k, \ell \le p}$ and $W_1 = \diag(V_1)$, and thus $Z_1 = V_1 - W_1$ holds.
	Since $V_1 - Z_1 = W_1 \succcurlyeq 0$, we begin with
	\begin{equation}\label{eq:Z_1_spectral}
	\begin{array}{rcl}
	|| Z_1 ||_2 & \le & || V_1||_2 \\
	& = & \max\limits_{\theta \in \mathcal{S}_{p-1}} \left| \sum\limits_{k, \ell} \dfrac{Y_{1k}Y_{1\ell} \theta_k \theta_\ell }{\pi_{k\ell}} \cdot \text{I}_{A_1} \right| \\[1em]
	& \le & \max\limits_{\theta \in \mathcal{S}_{p-1}} \sqrt{ \sum\limits_{k, \ell} \dfrac{Y_{1k}^2Y_{1\ell}^2}{\pi_{k\ell}^2} 
		\sum\limits_{k, \ell}\theta_k^2 \theta_\ell^2} \\[1em]
	& \le & \max\limits_{\theta \in \mathcal{S}_{p-1}} \pi_{\max}^{(2)}\sqrt{ \sum\limits_{k, \ell} Y_{1k}^2Y_{1\ell}^2\sum\limits_{k, \ell} \theta_{k}^2\theta_{\ell}^2}  \\[1em]
	& = &  \pi_{\max}^{(2)} ||Y_1||_2^2
	\end{array}
	\end{equation}
	where we used the Cauchy-Schwartz inequality and $\pi_{\max}^{(2)} = \max_{k,\ell} 1/\pi_{k\ell}$.
	Moreover, we know that
	$$
	||Y_1||_2^2 \le ||X_1||_2^2 \le U,
	$$
	where the last inequality holds conditional on the event $A$.
	Combining these with (\ref{eq:Z_1_spectral}), we can get $|| \tilde{Z}_1 ||_2 \le \pi_{\max}^{(2)} U$. Then, since $|| \mathbb{E} \tilde{Z}_1||_2 \le \mathbb{E} ||\tilde{Z}_1||_2 \le \mathbb{E} ||Z_1||_2$, we get
	$$
	|| \tilde{Z}_1 - \mathbb{E} \tilde{Z}_1 ||_2 \le 
	|| \tilde{Z}_1||_2 +  ||\mathbb{E}\tilde{Z}_1||_2 \le 
	|| Z_1||_2 + \mathbb{E} ||Z_1 ||_2  \le 2 \pi_{\max}^{(2)} U
	$$
\end{proof}

\subsection{Proof of (F2) of Fact \ref{fact:const}}
\begin{proof}
	We can get
	$$
	\begin{array}{l}
	\mathbb{E} (\theta^\top Z_1 \theta)^2 = \mathbb{E} \left(\sum\limits_{1 \le k \neq \ell \le p} \dfrac{Y_{1k} Y_{1\ell} \theta_k \theta_\ell}{\pi_{k\ell}} \right)^2 \\[1em]
	= \mathbb{E} \sum\limits_{(k_1, k_2) \neq (\ell_1, \ell_2)} \dfrac{Y_{1k_1} Y_{1\ell_1} \theta_{k_1} \theta_{\ell_1}}{\pi_{k_1 \ell_1}} \dfrac{Y_{1k_2} Y_{1\ell_2} \theta_{k_2} \theta_{\ell_2}}{\pi_{k_2\ell_2}} \\[1em]
	= \mathbb{E} \sum\limits_{k_1, k_2,\ell_1, \ell_2} \dfrac{Y_{1k_1} Y_{1\ell_1} \theta_{k_1} \theta_{\ell_1}}{\pi_{k_1 \ell_1}} \dfrac{Y_{1k_2} Y_{1\ell_2} \theta_{k_2} \theta_{\ell_2}}{\pi_{k_2\ell_2}} - 
	\mathbb{E} \sum\limits_{k_1, k_2}
	\dfrac{Y_{1k_1}^2 Y_{1 k_2}^2 \theta_{k_1}^2 \theta_{k_2}^2}{\pi_{k_1} \pi_{k_2}}\\[1em]
	{\le}  \sum\limits_{k_1, k_2,\ell_1, \ell_2} \dfrac{\pi_{k_1 k_2 \ell_1 \ell_2}}{\pi_{k_1 \ell_1} \pi_{k_2\ell_2}}  
	\mathbb{E} (X_{1k_1} X_{1k_2} X_{1\ell_1}  X_{1\ell_2})
	\theta_{k_1} \theta_{k_2} \theta_{\ell_1} \theta_{\ell_2}\\[1em]
	{\le}  \sqrt{\sum\limits_{k_1, k_2,\ell_1, \ell_2} \left(\dfrac{\pi_{k_1 k_2 \ell_1 \ell_2}}{\pi_{k_1 \ell_1} \pi_{k_2\ell_2}}\right)^2  
		\left(\mathbb{E} X_{1k_1} X_{1k_2} X_{1\ell_1}  X_{1\ell_2}\right)^2
		\sum\limits_{k_1, k_2,\ell_1, \ell_2}\theta_{k_1}^2 \theta_{k_2}^2 \theta_{\ell_1}^2 \theta_{\ell_2}^2}\\[1em]
	{\le}  \pi_{\max}^{(4)}\sqrt{
		\sum\limits_{k_1, k_2,\ell_1, \ell_2}\left(\mathbb{E} X_{1k_1} X_{1k_2} X_{1\ell_1}  X_{1\ell_2}\right)^2},\\[1em]
	\end{array}
	$$
	where we used Cauchy-Schwartz inequality in the second inequality. In the third inequality, we define $\pi_{\max}^{(4)} = \max_{k_1, k_2,\ell_1, \ell_2}\dfrac{\pi_{k_1 k_2 \ell_1 \ell_2}}{\pi_{k_1 \ell_1} \pi_{k_2\ell_2}}$.
	Applying Cauchy-Schwartz inequality twice, we get
	$$
	\mathbb{E} X_{1k_1} X_{1k_2} X_{1\ell_1}  X_{1\ell_2}
	\le
	\sqrt{\mathbb{E} X_{1k_1}^2 X_{1k_2}^2 \mathbb{E}X_{1\ell_1}^2  X_{1\ell_2}^2}
	\le
	\left(\mathbb{E} X_{1k_1}^4 \mathbb{E} X_{1k_2}^4 \mathbb{E}X_{1\ell_1}^4 \mathbb{E} X_{1\ell_2}^4\right)^{1/4}.
	$$
	Thus, we get for any $\theta \in \mathcal{S}^{p-1}$
	$$
	\mathbb{E} (\theta^\top Z_1 \theta)^2 \le \pi_{\max}^{(4)} \left(\sum\limits_{k}\sqrt{\mathbb{E} X_{1k}^4}\right)^2.
	$$
	Finally, using equation (2.1) in \cite{Lounici:2014}, we get
	\begin{equation}\label{eq:moment_4th}
	\mathbb{E}X_{1k}^4 \le C || X_{1k} ||_{\psi_2}^4 \le C K^4 \sigma_{kk}^2,
	\end{equation}
	which concludes the proof.
	
\end{proof}

\subsection{Proof of (F3) of Fact \ref{fact:const}}
\begin{proof}
	
	We observe that
	$$
	|| \mathbb{E}(\tilde{Z}_1 - \mathbb{E} \tilde{Z}_1)^2 ||_2 
	\le 
	|| \mathbb{E}(\tilde{Z}_1)^2||_2
	$$
	since $\mathbb{E}(\tilde{Z}_1)^2 - \mathbb{E}(\tilde{Z}_1 - \mathbb{E} \tilde{Z}_1)^2 = (\mathbb{E} \tilde{Z}_1)^2 \succcurlyeq 0$. Moreover, we get
	$||\mathbb{E} (\tilde{Z}_1)^2||_2 =  \max\limits_{\theta \in \mathcal{S}_{p-1}} \theta^\top \mathbb{E} (Z_1)^2 \theta \text{I}_{A_1}  = || \mathbb{E} (Z_1)^2 ||_2 $.
	
	Also, recall the relationship $Z_1 = V_1 - W_1$, which implies with the triangular inequality that
	$|| \mathbb{E} (Z_1)^2 ||_2 = || \mathbb{E} V_1^2  + \mathbb{E} W_1^2 - \mathbb{E} V_1 W_1 - \mathbb{E} W_1 V_1  ||_2 \le || \mathbb{E} V_1^2 ||_2  + || \mathbb{E} W_1^2 ||_2 + 2 || \mathbb{E} V_1 W_1 ||_2$. Note that
	$$
	\begin{array}{rcl}
	|| \mathbb{E} V_1 W_1 ||_2 &=& \max\limits_{\theta \in \mathcal{S}_{p-1}} | \mathbb{E}\theta^\top  V_1 W_1 \theta | \\
	&\le& 
	\max\limits_{\theta \in \mathcal{S}_{p-1}}  \sqrt{\mathbb{E}(\theta^\top  V_1^2  \theta) \mathbb{E}(\theta^\top W_1^2 \theta)} \\
	&\le& \sqrt{|| \mathbb{E} V_1^2||_2 || \mathbb{E}W_1^2 ||_2}.
	\end{array}
	$$
	Therefore, we get $|| \mathbb{E} (Z_1)^2 ||_2 \le \left(\sqrt{|| \mathbb{E} V_1^2||_2} + \sqrt{|| \mathbb{E}W_1^2 ||_2} \right)^2$. We now calculate the last two terms.
	
	First, we calculate $|| \mathbb{E}W_1^2 ||_2$.
	$$
	||\mathbb{E} (W_1)^2||_2   = 
	\sum\limits_{k} \mathbb{E}Y_{1k}^4 \theta_k^2 / \pi_k^2 
	= 
	\sum\limits_{k} \mathbb{E}X_{1k}^4 \theta_k^2 / \pi_k 
	= 
	\max\limits_{k} \mathbb{E}X_{1k}^4 / \pi_k.
	$$
	Secondly, we compute $||\mathbb{E} (V_1)^2||_2$.
	$$
	\begin{array}{rcl}
	||\mathbb{E} (V_1)^2||_2  & = &
	\max\limits_{\theta \in \mathcal{S}_{p-1}} \sum\limits_{k, \ell, s} \dfrac{\mathbb{E} Y_{1k}Y_{1\ell}Y_{1s}^2 }{\pi_{ks}\pi_{\ell s}} \theta_k \theta_\ell\\ 
	& = &
	\max\limits_{\theta \in \mathcal{S}_{p-1}}   \sum\limits_{s}  \sum\limits_{k, \ell} \dfrac{\pi_{k\ell s}}{\pi_{ks}\pi_{\ell s}} \mathbb{E}X_{1s}^2X_{1k} X_{1\ell}\theta_k  \theta_\ell\\
	& \le &
	\max\limits_{\theta \in \mathcal{S}_{p-1}} \sum\limits_{s}
	\sqrt{\sum\limits_{k, \ell} \left(\mathbb{E} \dfrac{\pi_{k\ell s}}{\pi_{ks}\pi_{\ell s}} X_{1s}^2 X_{1k} X_{1\ell}\right)^2 \sum\limits_{k, \ell}\theta_k^2 \theta_\ell^2 }\\
	& = &
	\pi_{\max}^{(3)}\sum\limits_{s}
	\sqrt{\sum\limits_{k, \ell} \left(\mathbb{E}X_{1s}^2 X_{1k} X_{1\ell}\right)^2 }\\
	\end{array}
	$$
	where we used Cauchy-Schwartz inequality and $\pi_{\max}^{(3)}=\max_{s,k,\ell} \dfrac{\pi_{k\ell s}}{\pi_{ks}\pi_{\ell s}}$. Due to 
	$$
	\mathbb{E}X_{1s}^2 X_{1k} X_{1\ell} \le \sqrt{\mathbb{E}X_{1s}^4 \mathbb{E}X_{1k}^2 X_{1\ell}^2}
	\le 
	\sqrt{\mathbb{E}X_{1s}^4 \sqrt{\mathbb{E}X_{1k}^4 \mathbb{E} X_{1\ell}^4}},
	$$
	we conclude that 
	$$
	||\mathbb{E} (V_1)^2||_2 \le \pi_{\max}^{(3)}\left(\sum\limits_{k} \sqrt{\mathbb{E}X_{1k}^4}\right)^2.
	$$
	Finally, combining all of these with equation (\ref{eq:moment_4th}), we get
	$$
	\begin{array}{l}
	|| \mathbb{E}(\tilde{Z}_1 - \mathbb{E} \tilde{Z}_1)^2 ||_2 \le
	\left(\sqrt{\pi_{\max}^{(3)}}\sum\limits_{k} \sqrt{\mathbb{E}X_{1k}^4} + \sqrt{\max\limits_{k} \mathbb{E}X_{1k}^4 / \pi_k}
	\right)^2 \\
	\qquad \qquad \qquad \qquad \qquad \qquad \qquad \qquad  \le 
	C K^4 \left(\sqrt{\pi_{\max}^{(3)}}\tr(\bSigma) + \sqrt{\max\limits_{k} \sigma_{kk}^2 / \pi_k}
	\right)^2.
	\end{array}
	$$
	which concludes the proof because $\max_k 1/\pi_k \le \pi_{\max}^{(3)}$ and $\max\limits_{k} \sigma_{kk} \le \tr(\bSigma)$.
\end{proof}

\section{Miscellaneous results}\label{sec:misc}

Without the loss of generality, assume that variables in $\mathcal{A}$ come before those in $\mathcal{A}^c$, or we rearrange them to do so.
In all the following proofs, we denote block matrices of $\bA$ decomposed by the subset $\mathcal{A}$ by $\bA_{\mathcal{A}\mathcal{A}}, \bA_{\mathcal{A}\mathcal{A}^c}, \bA_{\mathcal{A}^c\mathcal{A}}, \bA_{\mathcal{A}^c\mathcal{A}^c}$, respectively.

\subsection{Proof of Proposition \ref{prop:consistency}}\label{sec:proof_Lee_result}

Let us review the three conditions used in Theorem 3.4 of \citet{Lee:2015} and apply them to our problem in \eqref{main-eqn:lasso-problem}.

\subsubsection{RSC condiction}
The first condition is the restricted strong convexity (RSC). 
\begin{assumption}[RSC]
	Let $C \subset \mathbb{R}^p$ be some known convex set containing $\btheta^*$. The loss function $\ell$ is RSC when $\exists m, L>0$ such that
	\begin{align*}
	&(1) \ \bt^T \nabla^2 \ell(\btheta) \bt \ge m \bt^T \bt, \quad \forall \btheta \in C \cap M, \quad \forall \bt\in C\cap M - C\cap M \\
	&(2) \ \|\nabla^2 \ell(\btheta) - \nabla^2 \ell(\btheta^*) \| _2 \le L\|\btheta - \btheta^*\|_2, \quad \forall \btheta \in C
	\end{align*}
\end{assumption}
\noindent
The RSC condition is a relaxed version of strong convexity, which is a commonly used assumption for guaranteeing the properties of given loss functions.

In our specified problem, $\nabla^2\ell(\btheta) = \widehat{\bSigma}^{\rm LPD}$. Thus, the RSC condition (2) is satisfied with $L$ with any positive value.
Moreover, for $\ell_1$-norm, the model space is $M = \{\btheta \in \mathbb{R}^p : \btheta_{\mathcal{A}^c} = 0\}$ where $\mathcal{A} \subset [p]$ is the support of the true parameter. We note that 
$$
\min_{\bt\in \mathbb{R}^p:\|\bt\|_2=1, \bt_{\mathcal{A}^c}=0} \bt^{\top} \widehat{\bSigma}^{\rm LPD} \bt =  
\alpha \lambda_{\min}(\widehat{\bSigma}^{\text{IPW}}_{\mathcal{A}\mathcal{A}}) + \mu (1 - \alpha) 
\ge \min\{\lambda_{\min}(\widehat{\bSigma}^{\text{IPW}}_{\mathcal{A}\mathcal{A}}), \mu \}.
$$
Using Weyl's inequality, 
$||\widehat{\bSigma}^{\text{IPW}}_{\mathcal{A}\mathcal{A}} - \bSigma_{\mathcal{A}\mathcal{A}}||_2 \le 0.5 \lambda_{\min}(\bSigma_{\mathcal{A}\mathcal{A}}) $ implies that $\lambda_{\min}(\widehat{\bSigma}^{\text{IPW}}_{\mathcal{A}\mathcal{A}}) \ge 0.5 \lambda_{\min}(\bSigma_{\mathcal{A}\mathcal{A}})$. Now, we set $m=\min\{0.5\lambda_{\min}(\bSigma_{\mathcal{A}\mathcal{A}}), \mu \}$.

\subsubsection{RE condition} 

The second condition is the irrepresentibility (IR) condition. Let us define a few notions to introduce IR 
condition.  The {\it support function} on a convex subset $C \subset \mathbb{R}^p$ is defined as:
$$
h_C(\bx) = \sup \{ {\bx}^{\top} {\by} : {\by} \in C\}.
$$
We say the penalty function $\rho$ is {\it  geometrically decomposable} in terms of $D,I,E \subset \mathbb{R}^p$ if it is decomposed as a sum of support functions:
$$ 
\rho( {\btheta} ) = h_D( {\btheta} ) + h_I( {\btheta} ) + h_{E^\perp}( {\btheta} ),  
$$ 
where $D$ is a convex bounded set, $I$ is a convex bounded set which contains a relative neighborhood of the origin (i.e. $0\in \text{relint}(E)$) and $E$ is a subspace. Now, we can define our second condition, IR condition. 
\begin{assumption}[IR]
	$\exists \tau \in (0,1)$ such that
	$$
	\sup_{\bz\in \partial h_D(M)} V\Big[ \bP_{M^\perp} \{ \bQ \bP_M( \bP_M \bQ \bP_M)^{\dagger} \bP_M \bz - \bz \} \Big] \le 1 - \tau
	$$
\end{assumption}
\noindent
where $\bQ = \nabla^2 \ell({\btheta}^*) = \widehat{\bSigma}^{\rm LPD}$, $\bP_B$ is the projection matrix to $B$, 
\begin{align*}
&\partial h_D(M) = \bigcup_{ {\bu}\in M} \partial h_D( {\bu}) \\
&\gamma_C( {\bx}) = \inf\{ \lambda : {\bx} \in \lambda C\}\\
&V( {\bu} ) = \inf \{\gamma_I( {\by}) + \textbf{1}_{E^\perp}({\bu}- {\by}) \} = \inf_{t\in E^\perp} \gamma_I ( {\bu}- {\bt}),
\end{align*}
%

We can easily check that $\rho$ is geometrically decomposed with the terms of 
\begin{align*}
&E = \mathbb{R}^p \\
&D = \{ {\btheta} : \| {\btheta} \|_{\infty} \le 1, \btheta_{\mathcal{A}^c} = 0\}, \quad\text{span}(D) = M \\
&I = \{ {\btheta} : \| {\btheta} \|_{\infty} \le 1, {\btheta}_{\mathcal{A}} = 0\}, \quad\text{span}(I) = M ^{\perp} \\
&h_D( {\btheta} ) = \|{\btheta}_\mathcal{A}\|_1, \quad h_I( {\btheta}) = \| {\btheta}_{\mathcal{A}^c}\|_1.
\end{align*}
Then, the RE condition becomes equivalent to:
\begin{equation} \label{eqn:recond-2nd} 
\exists \tau \in (0,1)  \quad \text{s.t.} \quad \|\widehat{\bSigma}^{\rm LPD}_{\mathcal{A}^c\mathcal{A}} (\widehat{\bSigma}^{\rm LPD}_{\mathcal{A}\mathcal{A}})^{-1} \|_{\infty} \le 1 - \tau
\end{equation}
which is the classical irrepresentability, proposed in \cite{Zhao:2006}. 
\begin{proof}[Proof of (\ref{eqn:recond-2nd})]
	\begin{align*}
	\partial h_D( {\btheta}) &= \{ {\by}\in D : {\by}^{\top} {\btheta} = h_D( {\btheta} ) \} \\
	&= \{ {\by}\in D : {\by}^{\top} {\btheta} = \| {\btheta}_\mathcal{A}\|_1 \} \\
	&= \text{sgn}( {\btheta}) \\
	\partial h_D(M) &= \{\text{sgn}({\btheta}) : {\btheta} \in M \}
	\end{align*}
	$$
	\bP_M = \begin{bmatrix} \bI_{|\mathcal{A}|} & \bf 0 \\ \bf 0 & \bf 0 \end{bmatrix}, \qquad
	\bP_{M^\perp} = \begin{bmatrix} \bf 0 & \bf 0 \\ \bf 0 & \bI_{p-|\mathcal{A}|} \end{bmatrix} \\
	$$
	\begin{align*}
	(\bP_M \bQ \bP_M)^{\dagger} &= \begin{bmatrix} \bQ_{\mathcal{A}\mathcal{A}} & \bf 0 \\ \bf 0 & \bf 0 \end{bmatrix}^{\dagger} \\
	&= \begin{bmatrix} (\bQ_{\mathcal{A}\mathcal{A}}^* \bQ_{\mathcal{A}\mathcal{A}})^{\dagger}\bQ_{\mathcal{A}\mathcal{A}}^* & \bf 0 \\ \bf 0 & \bf 0 \end{bmatrix} \\
	\end{align*}
	\begin{align*}	
	\bP_{M^\perp} \{ \bQ \bP_M(\bP_M \bQ \bP_M)^{\dagger} \bP_M {\bz} -  {\bz} \} &= \begin{bmatrix} \bf 0 & \bf 0 \\ \bQ_{\mathcal{A}^c\mathcal{A}} & \bf 0 \end{bmatrix} \begin{bmatrix} (\bQ_{\mathcal{A}\mathcal{A}}^* \bQ_{\mathcal{A}\mathcal{A}})^{\dagger}\bQ_{\mathcal{A}\mathcal{A}}^* & \bf 0 \\ \bf 0 & \bf 0 \end{bmatrix} \begin{bmatrix} {\bz}_1 \\ \bf 0 \end{bmatrix} - \begin{bmatrix} \bf 0 \\ {\bz}_2 \end{bmatrix} \\
	&= \begin{bmatrix} \bf 0 \\ \bQ_{\mathcal{A}^c\mathcal{A}}(\bQ_{\mathcal{A}\mathcal{A}}^* \bQ_{\mathcal{A}\mathcal{A}})^{\dagger} \bQ_{\mathcal{A}\mathcal{A}}^* {\bz}_1 - {\bz}_2 \end{bmatrix}
	\end{align*}
	$$
	\begin{array}{l}
	\sup_{{\bz}\in \partial h_D(M)} V\Big[ \bP_{M^\perp} \{ \bQ \bP_M(\bP_M \bQ \bP_M)^{\dagger} \bP_M {\bz} - {\bz} \} \Big]\\
	\qquad = \sup_{ {\bz} \in \partial h_D(M)} V\left(\begin{bmatrix} \bf 0 \\  \bQ_{\mathcal{A}^c\mathcal{A}}(\bQ_{\mathcal{A}\mathcal{A}}^* \bQ_{\mathcal{A}\mathcal{A}})^{\dagger} \bQ_{\mathcal{A}\mathcal{A}}^* {\bz}_1 - {\bz}_2 \end{bmatrix} \right) \\
	\qquad= \sup_{{\btheta}_1 \in \mathbb{R}^{|\mathcal{A}|}} V\left(\begin{bmatrix} \bf 0 \\  \bQ_{\mathcal{A}^c\mathcal{A}}(\bQ_{\mathcal{A}\mathcal{A}}^* \bQ_{\mathcal{A}\mathcal{A}})^{\dagger} \bQ_{\mathcal{A}\mathcal{A}}^*\text{sgn}( {\btheta}_1) \end{bmatrix} \right) \\
	\qquad= \sup_{{\btheta}_1 \in \mathbb{R}^{|\mathcal{A}|}} \|\bQ_{\mathcal{A}^c\mathcal{A}}(\bQ_{\mathcal{A}\mathcal{A}}^* \bQ_{\mathcal{A}\mathcal{A}})^{\dagger} \bQ_{\mathcal{A}\mathcal{A}}^*\text{sgn}( {\btheta}_1) \|_{\infty}
	\end{array}
	$$
	Since $\bQ_{\mathcal{A}\mathcal{A}}$ is invertible due to Assumption \ref{assm:class_cov}, we have
	\begin{align*}
	& \sup_{ {\btheta}_1 \in \mathbb{R}^{|\mathcal{A}|}} \|\bQ_{\mathcal{A}^c\mathcal{A}}(\bQ_{\mathcal{A}\mathcal{A}}^* \bQ_{\mathcal{A}\mathcal{A}})^{\dagger} \bQ_{\mathcal{A}\mathcal{A}}^*\text{sgn}( {\btheta}_1) \|_{\infty} \\
	=& \sup_{{\btheta}_1 \in \mathbb{R}^{|\mathcal{A}|}} \|\bQ_{\mathcal{A}^c\mathcal{A}}\bQ_{\mathcal{A}\mathcal{A}}^{-1}\text{sgn}( {\btheta}_1) \|_{\infty} \\
	=& \|\bQ_{\mathcal{A}^c\mathcal{A}}\bQ_{\mathcal{A}\mathcal{A}}^{-1}\|_{\infty}
	\end{align*}
\end{proof}

\subsubsection{BG condition} 

The last condition is the bounded gradient (BG) condition. Let us first define related constants. The compatibility constant, denoted 
by $\kappa_{\rho}$, between $\rho$ and $\ell_2$-norm on $M$ is defined by 
$$
\kappa_\rho = \sup_{\btheta} \{ \rho({\btheta}) | \btheta \in \mathcal{B}_2 \cap M\},
$$
where $\mathcal{B}_2$ is the $\ell_2$-unit ball. The compatibility constant between the irrepresentable term and $\rho^*$ is given as 
$$
\kappa_{\text{IC}} = \sup_{\rho^*(\bf z) \le 1} V\Big[ \bP_{M^\perp} \{ \bQ \bP_M(\bP_M \bQ \bP_M)^{\dagger} \bP_M {\bf z} - {\bf z} \} \Big].
$$
We can state the third condition with the constants $\kappa_\rho$ and $\kappa_{\rm IC}$, which decides a suitable range of a tuning parameter $\lambda$.
\begin{assumption}[BG]
	$$
	\frac{4\kappa_{\rm IC}}{\tau} \rho^*(\nabla \ell( {\btheta}^*)) < \lambda < \frac{m^2}{2L} \left( 2\kappa_\rho + \frac{\kappa_{\rho}}{\kappa_{\rm IC}} \frac{\tau}{2} \right)^{-2} \frac{\tau}{\kappa_{\rho^*} \kappa_{\rm IC}}.
	$$
\end{assumption}
\noindent
Now, we check the preliminaries for the BG condition. In our case, $\rho$ is the $\ell_1$-norm, $\kappa_\rho = \sqrt{|\mathcal{A}|}$ and $\kappa_{\rho^*} = 1$. As for $\kappa_{\rm IC}$:
\begin{align*}
\kappa_{\rm IC} &= \sup_{\rho^*(\bf z) \le 1} V\Big[ \bP_{M^\perp} \{ \bQ \bP_M(\bP_M \bQ \bP_M)^{\dagger} \bP_M {\bf z} - {\bf z} \} \Big] \\
&= \sup_{\| {\bf z} \|_\infty \le 1} \|\bQ_{\mathcal{A}^c\mathcal{A}}\bQ_{\mathcal{A}\mathcal{A}}^{-1} {\bf z}_1 - {\bf z}_2\|_{\infty} \\
&= \|\bQ_{\mathcal{A}^c\mathcal{A}}\bQ_{\mathcal{A}\mathcal{A}}^{-1}\|_\infty +1
\end{align*}
Recall the BG condition for $\lambda$:
$$
\frac{4\kappa_{\rm IC}}{\tau} \rho^*(\nabla \ell( {\btheta}^*)) < \lambda < \frac{m^2}{2L} \left( 2\kappa_\rho + \frac{\kappa_{\rho}}{\kappa_{\rm IC}} \frac{\tau}{2} \right)^{-2} \frac{\tau}{\kappa_{\rho^*} \kappa_{\rm IC}}.
$$
With the IR condition, we have $\kappa_{\rm IC} \le 2-\tau$. Also, since $L$ can be of any value, the right side of the BG condition holds. So, the following is sufficient for the BG condition:
$$
\frac{4(2-\tau)}{\tau} \|\nabla \ell(\btheta^*)\|_\infty < \lambda. 
$$

\subsubsection{Conclusion}
Under the three conditions above, \cite{Lee:2015} concluded the following results for the solution.
\begin{itemize} 
	\item[1.] The minimizer is unique.
	\item[2.] $\ell_2$ consistency: $\|\hat{\btheta} - {\btheta}^*\|_2 \le \frac{2}{m} \left( \kappa_\rho + \frac{\tau}{4}\frac{\kappa_\rho}{\kappa_{\rm IC}} \right) \lambda$ 
	\item[3.] Model selection consistency : $\hat{\btheta} \in M$. 
\end{itemize}

\noindent 
In our problem \eqref{main-eqn:lasso-problem}, the $\ell_2$ consistency is 
\begin{align*}
\|\widehat{\bbeta}^{\rm LPD} - {\bbeta}^*\|_2 &\le 
\frac{2}{\min\{0.5\lambda_{\min}(\bSigma_{\mathcal{A}\mathcal{A}}), \mu \}} \left( \sqrt{|\mathcal{A}|} + \frac{\tau}{4}\frac{\sqrt{|\mathcal{A}|}}{\|\bQ_{\mathcal{A}^c\mathcal{A}}\bQ_{\mathcal{A}\mathcal{A}}^{-1}\|_\infty +1} \right) \lambda \\
&\le \frac{2}{\min\{0.5\lambda_{\min}(\bSigma_{\mathcal{A}\mathcal{A}}), \mu \}} \left( 1 + \frac{\tau}{4} \right)\sqrt{|\mathcal{A}|} \lambda,
\end{align*}
and the model selection consistency is $\widehat{\bbeta}^{\rm LPD}_{\mathcal{A}^c} = 0$.

\subsection{Proof of Proposition \ref{prop:LPD_solution}}\label{sec:pf_LPD_solution}
\begin{proof}
	Let $a_{d,\max} = \max_j a_{jj}$, $a_{d,\min} = \min_j a_{jj}$.
	$$
	\begin{array}{rcl}
	\Big|\Big| \Phi_{\mu,\alpha^*}	 - \bA \Big|\Big|_{\max} 
	&= &
	\dfrac{(\epsilon - \lambda_{\min}(\bA))\Big|\Big| \bA - \mu \bI \Big|\Big|_{\max}}{\mu - \lambda_{\min}(\bA)} \\
	& = & 
	(\epsilon - \lambda_{\min}(\bA))\dfrac{\max_{i\neq j} |a_{ij}| \vee \max_i |a_{ii} - \mu| }{\mu - \lambda_{\min}(\bA)} \\
	& = & (\epsilon - \lambda_{\min}(\bA))\dfrac{\max_{i\neq j} |a_{ij}| \vee  |a_{d,\max} - \mu| \vee |a_{d,\min} - \mu| }{\mu - \lambda_{\min}(\bA)} \\
	\end{array}
	$$
	We now denote $a_{\text{off}, \max} = \max_{i\neq j} |a_{ij}|$, 
	$\Psi(\mu) = \dfrac{a_{\text{off}, \max} \vee  |a_{d,\max} - \mu| \vee |a_{d,\min} - \mu| }{\mu - \lambda_{\min}(\bA)}$,
	and consider two disjoint cases: Case (i) $(a_{d,\max} - a_{d,\min})/2  > a_{\text{off}, \max}$ and Case (ii) $(a_{d,\max} - a_{d,\min})/2  \le a_{\text{off}, \max}$.
	For each case, we divide up the value of $\mu$ into multiple cases, which is summarized in Figure \ref{fig:mu_cases}.
	
	\begin{figure}[H]
		\centering
		\tikzset{every picture/.style={line width=0.75pt}} 
		
		\begin{tikzpicture}[x=0.75pt,y=0.75pt,yscale=-1,xscale=1]
		
		\draw    (74,50.99) -- (385,50.01) ;
		\draw [shift={(387,50)}, rotate = 179.82] [color={rgb, 255:red, 0; green, 0; blue, 0 }  ][line width=0.75]    (10.93,-3.29) .. controls (6.95,-1.4) and (3.31,-0.3) .. (0,0) .. controls (3.31,0.3) and (6.95,1.4) .. (10.93,3.29)   ;
		\draw [shift={(72,51)}, rotate = 359.82] [color={rgb, 255:red, 0; green, 0; blue, 0 }  ][line width=0.75]    (10.93,-3.29) .. controls (6.95,-1.4) and (3.31,-0.3) .. (0,0) .. controls (3.31,0.3) and (6.95,1.4) .. (10.93,3.29)   ;
		\draw    (241,44) -- (241,56) ;
		\draw  [dash pattern={on 0.84pt off 2.51pt}]  (241,0.5) -- (241,56) ;
		\draw    (74,191.99) -- (385,191.01) ;
		\draw [shift={(387,191)}, rotate = 179.82] [color={rgb, 255:red, 0; green, 0; blue, 0 }  ][line width=0.75]    (10.93,-3.29) .. controls (6.95,-1.4) and (3.31,-0.3) .. (0,0) .. controls (3.31,0.3) and (6.95,1.4) .. (10.93,3.29)   ;
		\draw [shift={(72,192)}, rotate = 359.82] [color={rgb, 255:red, 0; green, 0; blue, 0 }  ][line width=0.75]    (10.93,-3.29) .. controls (6.95,-1.4) and (3.31,-0.3) .. (0,0) .. controls (3.31,0.3) and (6.95,1.4) .. (10.93,3.29)   ;
		\draw    (241,185) -- (241,197) ;
		\draw  [dash pattern={on 0.84pt off 2.51pt}]  (241,140.5) -- (241,196) ;
		\draw    (141,186) -- (141,198) ;
		\draw  [dash pattern={on 0.84pt off 2.51pt}]  (141,141.5) -- (141,197) ;
		\draw    (341,186) -- (341,198) ;
		\draw  [dash pattern={on 0.84pt off 2.51pt}]  (341,141.5) -- (341,197) ;
		
		\draw (392,39.4) node [anchor=north west][inner sep=0.75pt]    {$\mu $};
		\draw (183,59.4) node [anchor=north west][inner sep=0.75pt]    {$\frac{a_{d,\max} +a_{d,\min}}{2}$};
		\draw (128,14) node [anchor=north west][inner sep=0.75pt]   [align=left] {Case (i)-2};
		\draw (270,12) node [anchor=north west][inner sep=0.75pt]   [align=left] {Case (i)-1};
		\draw (392,180.4) node [anchor=north west][inner sep=0.75pt]    {$\mu $};
		\draw (193,200.4) node [anchor=north west][inner sep=0.75pt]    {$\frac{a_{d,\max} +a_{d,\min}}{2}$};
		\draw (93,199.4) node [anchor=north west][inner sep=0.75pt]    {$a_{d,\max} -a_{\text{off}}$};
		\draw (313,199.4) node [anchor=north west][inner sep=0.75pt]    {$a_{d,\max} -a_{\text{off}}$};
		\draw (348,146) node [anchor=north west][inner sep=0.75pt]   [align=left] {Case (ii)-1};
		\draw (255,146) node [anchor=north west][inner sep=0.75pt]   [align=left] {Case (ii)-2};
		\draw (155,146) node [anchor=north west][inner sep=0.75pt]   [align=left] {Case (ii)-3};
		\draw (65,146) node [anchor=north west][inner sep=0.75pt]   [align=left] {Case (ii)-4};
		\draw (4,7) node [anchor=north west][inner sep=0.75pt]   [align=left] {{\large \textbf{Case (i)}}};
		\draw (4,117) node [anchor=north west][inner sep=0.75pt]   [align=left] {{\large \textbf{Case (ii)}}};

		\end{tikzpicture}
		\caption{Summary of cases used in the proof. Case (i) (top) and Case (ii) (bottom).}
		\label{fig:mu_cases}
	\end{figure}
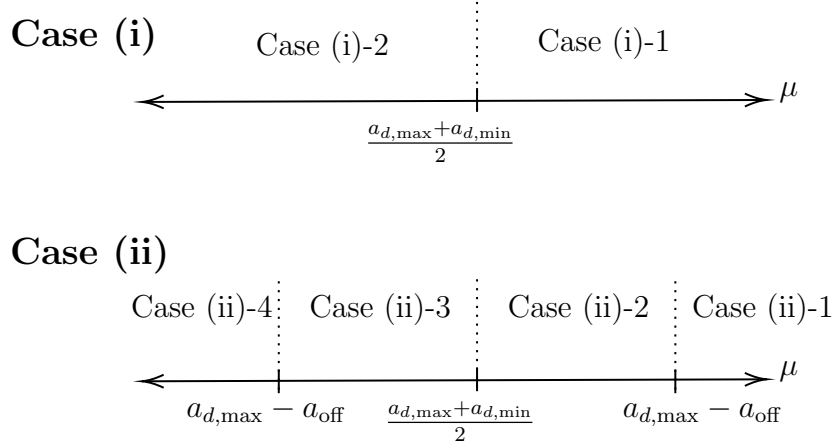

	\subsubsection*{Case (i): $(a_{d,\max} - a_{d,\min})/2  > a_{\text{off}, \max}$}
	
	For this case, we consider two sub-cases based on the value of $\mu$.
	
	\subsubsection*{\footnotesize Case (i)-1: $\mu > (a_{d,\max} + a_{d,\min})/2$}
	Under Case (i)-1, we have $|a_{d,\max} - \mu| < |a_{d,\min} - \mu|$.	
	Moreover, note that by Case (i)
	$$
	\dfrac{a_{d,\max} + a_{d,\min}}{2} = 	\dfrac{a_{d,\max} - a_{d,\min}}{2} + a_{d,\min} > a_{d,\min} + a_{\text{off}, \max}
	$$
	and thus $\mu - a_{d,\min} > a_{\text{off}, \max}$. 
	Combining these two, we can simplify $\Psi$ by
	\begin{equation}\label{eq:Psi_case_i_1}
	\Psi(\mu) = \dfrac{|a_{d,\min} - \mu|}{\mu - \lambda_{\min}(\bA)} = \dfrac{\mu - a_{d,\min}}{\mu - \lambda_{\min}(\bA)} = \dfrac{\lambda_{\min}(\bA) - a_{d,\min}}{\mu - \lambda_{\min}(\bA)} + 1.
	\end{equation}
	From the last expression, we can see that $\Psi$ is increasing in $\mu$ because $a_{d,\min} > \lambda_{\min}(\bA)$. Thus, the minimum value under the case considered is 
	$$
	\min \Big\{\Psi(\mu):  \mu > (a_{d,\max} + a_{d,\min})/2 \Big\}\ge \dfrac{(a_{d,\max} - a_{d,\min})/2}{(a_{d,\max} + a_{d,\min})/2 - \lambda_{\min}(\bA)},
	$$
	where the right-hand side is achieved by plugging-in $\mu = \dfrac{a_{d,\max} + a_{d,\min}}{2}$ into (\ref{eq:Psi_case_i_1}).
	
	\subsubsection*{\footnotesize Case (i)-2: $\mu \le (a_{d,\max} + a_{d,\min})/2$}
	
	Under Case (i)-2, we have $|a_{d,\max} - \mu| \ge |a_{d,\min} - \mu|$.	
	Moreover, note that by Case (i)
	$$
	a_{\text{off}, \max} < \dfrac{a_{d,\max} - a_{d,\min}}{2} = 	a_{d,\max} - \dfrac{a_{d,\max} + a_{d,\min}}{2}
	$$
	and thus $a_{d,\max} - \mu > a_{\text{off}, \max}$. 
	Combining these two, we can simplify $\Psi$ by
	\begin{equation}\label{eq:Psi_case_i_2}
	\Psi(\mu) = \dfrac{|a_{d,\max} - \mu|}{\mu - \lambda_{\min}(\bA)} = \dfrac{a_{d,\max} - \mu}{\mu - \lambda_{\min}(\bA)} = \dfrac{a_{d,\max} - \lambda_{\min}(\bA)}{\mu - \lambda_{\min}(\bA)} - 1.
	\end{equation}
	The last expression tells us that $\Psi$ is decreasing in $\mu$ because $a_{d,\max} > \lambda_{\min}(\bA)$. Then, we get
	$$
	\min \Big\{\Psi(\mu):  \mu \le (a_{d,\max} + a_{d,\min})/2 \Big\} = \dfrac{(a_{d,\max} - a_{d,\min})/2}{(a_{d,\max} + a_{d,\min})/2 - \lambda_{\min}(\bA)}.
	$$

	Combining the two results from Case (i)-1,2, we conclude that if $(a_{d,\max} - a_{d,\min})/2  > a_{\text{off}, \max}$, then the minimum value of $\Psi$ is 
	$$
	\min_{\mu: \mu \ge \epsilon} \Psi(\mu) = \dfrac{(a_{d,\max} - a_{d,\min})/2}{(a_{d,\max} + a_{d,\min})/2 - \lambda_{\min}(\bA)}
	$$
	at $\mu = (a_{d,\max} + a_{d,\min})/2$.

	\subsubsection*{Case (ii): $(a_{d,\max} - a_{d,\min})/2 \le a_{\text{off}, \max}$}
	
	Similarly to before, we consider sub-cases based on the value of $\mu$.
	
	\subsubsection*{\footnotesize Case (ii)-1: $\mu > a_{d,\min} + a_{\text{off}, \max}$}
	
	Note that $a_{d,\min} + a_{\text{off}, \max} \ge (a_{d,\max} + a_{d,\min})/2$ under Case (ii). Then, we have 
	$|a_{d,\max} - \mu| < |a_{d,\min} - \mu| = \mu - a_{d,\min}$. Moreover, by Case (ii)-1, $|a_{d,\min} - \mu| = \mu - a_{d,\min} > a_{\text{off}, \max}$.

	Thus, we can simplify $\Psi$ by
	\begin{equation}\label{eq:Psi_case_ii_1}
	\Psi(\mu) = \dfrac{|a_{d,\min} - \mu|}{\mu - \lambda_{\min}(\bA)} = \dfrac{\mu - a_{d,\min}}{\mu - \lambda_{\min}(\bA)} = \dfrac{\lambda_{\min}(\bA) - a_{d,\min}}{\mu - \lambda_{\min}(\bA)} + 1.
	\end{equation}
	
	\subsubsection*{\footnotesize Case (ii)-2: $(a_{d,\max} + a_{d,\min})/2 < \mu \le a_{d,\min} + a_{\text{off}, \max}$}
	In Case (ii)-2, we still have $|a_{d,\max} - \mu| < |a_{d,\min} - \mu| = \mu - a_{d,\min}$ as in Case (ii)-1, but $|a_{d,\min} - \mu| = \mu - a_{d,\min} \ge a_{\text{off}, \max}$ holds.

	\subsubsection*{\footnotesize Case (ii)-3: $a_{d,\max} - a_{\text{off}, \max} < \mu \le (a_{d,\max} + a_{d,\min})/2$}
	
	From $\mu \le (a_{d,\max} + a_{d,\min})/2$, we have $|a_{d,\max} - \mu| \ge |a_{d,\min} - \mu|$. Moreover, since $a_{d,\max} - a_{\text{off}, \max} < \mu$, $|a_{d,\max} - \mu| = a_{d,\max} - \mu < a_{\text{off}, \max}$. 
	
	\subsubsection*{\footnotesize Case (ii)-4: $\mu \le a_{d,\max} - a_{\text{off}, \max}$}
	
	Note that $a_{d,\max} - a_{\text{off}, \max} \le (a_{d,\max} + a_{d,\min})/2$ under Case (ii).
	Thus, we have $|a_{d,\max} - \mu| \ge |a_{d,\min} - \mu|$. Since $\mu \le a_{d,\max} - a_{\text{off}, \max}$, $|a_{d,\max} - \mu| = a_{d,\max} - \mu \ge a_{\text{off}, \max}$.

	Combining the four cases, we can summarize that
	$$
	\Psi(\mu) = \begin{cases}
	\dfrac{\mu - a_{d,\min}}{\mu - \lambda_{\min}(\bA)},	& \text{for Case (ii)-1}  \\[1em]
	\dfrac{a_{\text{off}, \max}}{\mu - \lambda_{\min}(\bA)}, & \text{for Case (ii)-2,3}\\[1em]
	\dfrac{a_{d,\max} - \mu}{\mu - \lambda_{\min}(\bA)}, & \text{for Case (ii)-4}
	\end{cases}
	$$
	We note that this function decreases until $\mu < a_{d,\min} + a_{\text{off}, \max}$ and increases after that point, which implies $\mu = a_{d,\min} + a_{\text{off}, \max}$ give the minimum value
	$$
	\min_{\mu: \mu \ge \epsilon} \Psi(\mu) = \dfrac{a_{\text{off}, \max}}{a_{d,\min} + a_{\text{off}, \max} - \lambda_{\min}(\bA)}.
	$$
\end{proof}

\section{Proof of the main theorems}\label{sec:proofs_main}

\subsection{Proof of Theorem \ref{thm:IR_LPD_combined}}\label{sec:pf_thm_IR}
The proof of Theorem \ref{thm:IR_LPD_combined} is based on Theorem \ref{thm:IR_LPD_neg}, \ref{thm:IR_LPD_pos}, which are stated below.
\begin{thm}\label{thm:IR_LPD_neg}
	Let Assumption \ref{assm:sub_G}, \ref{assm:multi_bern}, \ref{assm:mcar}, \ref{assm:class_cov} hold.
	Let us focus on the case of the estimator $\widehat{\bSigma}^{\rm IPW}$ such that 
	$\widehat{\bSigma}^{\rm IPW}_{\mathcal{A}\mathcal{A}}$ is non-singular  and the smallest eigenvalue satisfies $\lambda_{\min}(\widehat{\bSigma}^{\rm IPW})\le 0$.
	For any $\mu > \epsilon$, we construct the LPD estimator $\Phi_{\mu, \alpha^*}(\widehat{\bSigma}^{\rm IPW})$ with $\alpha^* = (\mu - \epsilon) / (\mu - \lambda_{\min}(\widehat{\bSigma}^{\rm IPW}))$.
	Then, the LPD estimator satisfies the irrepresentability condition for some constant $\tilde{\tau} \in (0,1)$,
	if the events hold true
	\begin{equation}\label{eq:random_closeness_neg}
	\begin{array}{rcl}
	\left\|\widehat{\bSigma}^{\rm IPW}_{\mathcal{A}\mathcal{A}} - \bSigma_{\mathcal{A}\mathcal{A}}\right\|_{\infty} + 
	\left\|\widehat{\bSigma}^{\rm IPW}_{\mathcal{A}^c\mathcal{A}} - \bSigma_{\mathcal{A}^c\mathcal{A}} \right\|_{\infty} + \dfrac{\mu}{\mu-\epsilon} \left\|\widehat{\bSigma}^{\rm IPW} - \bSigma \right\|_2
	&\le& \dfrac{\tau }{\left\|\bSigma_{\mathcal{A}\mathcal{A}}^{-1}\right\|_{\infty}}, \\[1em]
	\end{array}
	\end{equation}
\end{thm}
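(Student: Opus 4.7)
The plan is to establish the irrepresentability condition by a perturbation argument comparing the LPD sample quantity $\widehat{\bSigma}^{\rm LPD}_{\mathcal{A}^c\mathcal{A}}(\widehat{\bSigma}^{\rm LPD}_{\mathcal{A}\mathcal{A}})^{-1}$ to its population analog $\bSigma_{\mathcal{A}^c\mathcal{A}}\bSigma_{\mathcal{A}\mathcal{A}}^{-1}$. Since the LPD blocks read $\widehat{\bSigma}^{\rm LPD}_{\mathcal{A}\mathcal{A}} = \alpha^*\widehat{\bSigma}^{\rm IPW}_{\mathcal{A}\mathcal{A}} + (1-\alpha^*)\mu \bI$ and $\widehat{\bSigma}^{\rm LPD}_{\mathcal{A}^c\mathcal{A}} = \alpha^*\widehat{\bSigma}^{\rm IPW}_{\mathcal{A}^c\mathcal{A}}$, abbreviate $\bA = \bSigma_{\mathcal{A}\mathcal{A}}$, $\bB = \bSigma_{\mathcal{A}^c\mathcal{A}}$, $\widetilde{\bA} = \widehat{\bSigma}^{\rm LPD}_{\mathcal{A}\mathcal{A}}$, $\widetilde{\bB} = \widehat{\bSigma}^{\rm LPD}_{\mathcal{A}^c\mathcal{A}}$. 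The identity $\widetilde{\bA}^{-1} - \bA^{-1} = -\widetilde{\bA}^{-1}(\widetilde{\bA} - \bA)\bA^{-1}$ gives
\[
\widetilde{\bB}\widetilde{\bA}^{-1} = \bB\bA^{-1} + (\widetilde{\bB} - \bB)\bA^{-1} - \widetilde{\bB}\widetilde{\bA}^{-1}(\widetilde{\bA} - \bA)\bA^{-1}.
\]
Taking the matrix $\ell_\infty$-norm, moving $\|\widetilde{\bB}\widetilde{\bA}^{-1}\|_\infty$ to the left, and invoking $\|\bB\bA^{-1}\|_\infty \le 1 - \tau$ from Assumption \ref{assm:class_cov}(b) yields
\[
\|\widetilde{\bB}\widetilde{\bA}^{-1}\|_\infty \le \frac{(1-\tau) + \|\widetilde{\bB} - \bB\|_\infty\,\|\bA^{-1}\|_\infty}{1 - \|\widetilde{\bA} - \bA\|_\infty\,\|\bA^{-1}\|_\infty},
\]
valid provided the denominator is positive. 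Hence it suffices to ensure that both $\|\widetilde{\bA} - \bA\|_\infty \|\bA^{-1}\|_\infty$ and $\|\widetilde{\bB} - \bB\|_\infty \|\bA^{-1}\|_\infty$ are strictly smaller than $\tau$, which delivers some $\tilde{\tau} \in (0,1)$.

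Expanding, $\widetilde{\bA} - \bA = \alpha^*(\widehat{\bSigma}^{\rm IPW}_{\mathcal{A}\mathcal{A}} - \bA) + (1-\alpha^*)(\mu \bI - \bA)$ and $\widetilde{\bB} - \bB = \alpha^*(\widehat{\bSigma}^{\rm IPW}_{\mathcal{A}^c\mathcal{A}} - \bB) - (1-\alpha^*)\bB$. The IPW-closeness pieces $\|\widehat{\bSigma}^{\rm IPW}_{\mathcal{A}\mathcal{A}} - \bA\|_\infty$ and $\|\widehat{\bSigma}^{\rm IPW}_{\mathcal{A}^c\mathcal{A}} - \bB\|_\infty$ are already directly controlled by the hypothesized event \eqref{eq:random_closeness_neg}. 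The main obstacle is therefore the ``shrinkage residual'' contributions $(1-\alpha^*)\|\mu\bI - \bA\|_\infty$ and $(1-\alpha^*)\|\bB\|_\infty$, which at face value can grow with $\mu$. These will be tamed by exploiting the explicit form $1 - \alpha^* = (\epsilon - \lambda_{\min}(\widehat{\bSigma}^{\rm IPW}))/(\mu - \lambda_{\min}(\widehat{\bSigma}^{\rm IPW}))$ together with Weyl's inequality $\lambda_{\min}(\widehat{\bSigma}^{\rm IPW}) \ge \lambda_{\min}(\bSigma) - \|\widehat{\bSigma}^{\rm IPW} - \bSigma\|_2$ and the case hypothesis $\lambda_{\min}(\widehat{\bSigma}^{\rm IPW}) \le 0$; combined, these imply $(1-\alpha^*)\mu \lesssim \|\widehat{\bSigma}^{\rm IPW} - \bSigma\|_2$ with a multiplicative factor of order $\mu/(\mu-\epsilon)$, which is precisely the weight already attached to the spectral-norm term in \eqref{eq:random_closeness_neg}.

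Putting the pieces together, both $\|\widetilde{\bA} - \bA\|_\infty$ and $\|\widetilde{\bB} - \bB\|_\infty$ can be bounded by a constant multiple of the left-hand side of \eqref{eq:random_closeness_neg}. Multiplying through by $\|\bA^{-1}\|_\infty$ and invoking the hypothesis then renders both products strictly less than $\tau$, and the displayed bound above yields $\|\widetilde{\bB}\widetilde{\bA}^{-1}\|_\infty \le 1 - \tilde{\tau}$ for an appropriate $\tilde{\tau} \in (0,1)$ depending only on $\tau$, $\|\bA\|_\infty$, $\|\bB\|_\infty$, and $\|\bA^{-1}\|_\infty$. The principal delicacy of the argument is ensuring that the exact $\mu$-dependence emerging from bounding $(1-\alpha^*)\|\mu\bI - \bA\|_\infty$ matches the $\mu/(\mu-\epsilon)$ weighting in the hypothesis, so that no auxiliary lower bound on the magnitude of $\mu$ is needed beyond $\mu > \epsilon$.
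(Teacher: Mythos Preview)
Your perturbation identity and the resulting displayed bound
\[
\|\widetilde{\bB}\widetilde{\bA}^{-1}\|_\infty \le \frac{(1-\tau) + \|\widetilde{\bB} - \bB\|_\infty\,\eta_1}{1 - \|\widetilde{\bA} - \bA\|_\infty\,\eta_1},
\qquad \eta_1 = \|\bA^{-1}\|_\infty,
\]
are correct, but the subsequent step does not go through as claimed. To make the right-hand side strictly less than $1$ you need $\eta_1\bigl(\|\widetilde{\bA}-\bA\|_\infty + \|\widetilde{\bB}-\bB\|_\infty\bigr)<\tau$, and for this a ``constant multiple'' bound is not enough: the multiple must be at most $1$. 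Your shrinkage residuals, after bounding $1-\alpha^*\le \delta_3/\mu$ (with $\delta_3=\|\widehat{\bSigma}^{\rm IPW}-\bSigma\|_2$), give
\[
(1-\alpha^*)\|\mu\bI-\bA\|_\infty + (1-\alpha^*)\|\bB\|_\infty
\;\le\; \delta_3\Bigl(1+\tfrac{\|\bA\|_\infty+\|\bB\|_\infty}{\mu}\Bigr).
\]
The hypothesis \eqref{eq:random_closeness_neg} weights $\delta_3$ by $\mu/(\mu-\epsilon)$, and one has $1+\tfrac{K}{\mu}\le \tfrac{\mu}{\mu-\epsilon}$ with $K=\|\bA\|_\infty+\|\bB\|_\infty$ only when $\mu\le K\epsilon/(K-\epsilon)$; since typically $K>\epsilon$, this fails for all moderate-to-large $\mu$. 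Concretely, with $\mu=2\epsilon$ and $\|\bA\|_\infty=10\epsilon$ one obtains a coefficient $6$ in front of $\delta_3$ on your side versus $2$ in the hypothesis, so $\delta_1+\delta_2+2\delta_3\le\tau/\eta_1$ does not force $\delta_1+\delta_2+6\delta_3<\tau/\eta_1$. Thus your argument does not establish irrepresentability ``for any $\mu>\epsilon$'' from \eqref{eq:random_closeness_neg}; it implicitly needs an extra lower bound on $\mu$ that the theorem does not assume.

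The paper avoids this by not comparing LPD to $\bSigma$ directly. It first passes from LPD to IPW via the factorization
\[
\widehat{\bSigma}^{\rm LPD}_{\mathcal{A}^c\mathcal{A}}\bigl(\widehat{\bSigma}^{\rm LPD}_{\mathcal{A}\mathcal{A}}\bigr)^{-1}
= \widehat{\bSigma}^{\rm IPW}_{\mathcal{A}^c\mathcal{A}}\bigl(\widehat{\bSigma}^{\rm IPW}_{\mathcal{A}\mathcal{A}}\bigr)^{-1}
\Bigl(\bI + \tfrac{1-\alpha^*}{\alpha^*}\mu\,\bigl(\widehat{\bSigma}^{\rm IPW}_{\mathcal{A}\mathcal{A}}\bigr)^{-1}\Bigr)^{-1},
\]
so the shrinkage enters only through the scalar $\tfrac{1-\alpha^*}{\alpha^*}\mu=\mu(\epsilon-\lambda_{\min}(\widehat{\bSigma}^{\rm IPW}))/(\mu-\epsilon)$ multiplied by $\|(\widehat{\bSigma}^{\rm IPW}_{\mathcal{A}\mathcal{A}})^{-1}\|_\infty$, not by $\|\bA\|_\infty$ or $\|\bB\|_\infty$. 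After a Neumann bound and the standard IPW-to-$\bSigma$ perturbation, this yields the clean sufficient condition $\delta_1+\delta_2<\tau/\eta_1-\tfrac{1-\alpha^*}{\alpha^*}\mu$, and Weyl's inequality then produces exactly the $\tfrac{\mu}{\mu-\epsilon}\delta_3$ term with coefficient $1$. That two-step route (LPD $\to$ IPW $\to$ $\bSigma$) is what makes the $\mu$-dependence match the hypothesis without any restriction beyond $\mu>\epsilon$.
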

\noindent
The proof is pended until Supplementary Materials \ref{sec:pf_thm_IR_neg}. The other case when the smallest eigenvalue is positive is addressed by the following theorem.

\begin{thm}\label{thm:IR_LPD_pos}
	Let Assumption \ref{assm:sub_G}, \ref{assm:multi_bern}, \ref{assm:mcar}, \ref{assm:class_cov}(b) hold where $\tau \in (0,1)$ is the constant from Assumption \ref{assm:class_cov}(b).
	Let us focus on the case of the estimator $\widehat{\bSigma}^{\rm IPW}$ such that 
	$\widehat{\bSigma}^{\rm IPW}_{\mathcal{A}\mathcal{A}}$ is non-singular  and the smallest eigenvalue satisfies $\lambda_{\min}(\widehat{\bSigma}^{\rm IPW}) > 0$.
	Then, the LPD estimator $\Phi_{\mu, \alpha^*}(\widehat{\bSigma}^{\rm IPW})$, which is reduced to $\widehat{\bSigma}^{\rm IPW}$ with $\alpha^* = 1$, satisfies the irrepresentability condition for some constant $\tilde{\tau} \in (0,1)$,
	if the event holds true
	\begin{equation}\label{eq:random_closeness_pos}
	\left\|\widehat{\bSigma}^{\rm IPW}_{\mathcal{A}\mathcal{A}} - \bSigma_{\mathcal{A}\mathcal{A}}\right\|_{\infty} + 
	\left\|\widehat{\bSigma}^{\rm IPW}_{\mathcal{A}^c\mathcal{A}} - \bSigma_{\mathcal{A}^c\mathcal{A}} \right\|_{\infty}
	\le \tau / \left\|\bSigma_{\mathcal{A}\mathcal{A}}^{-1}\right\|_{\infty}.
	\end{equation}
\end{thm}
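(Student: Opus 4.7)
The plan is to exploit that $\lambda_{\min}(\widehat{\bSigma}^{\rm IPW}) > 0$ forces $\alpha^* = 1$ in the LPD construction, so the LPD matrix $\Phi_{\mu,\alpha^*}(\widehat{\bSigma}^{\rm IPW})$ coincides with $\widehat{\bSigma}^{\rm IPW}$ itself. The claim thus reduces to a deterministic matrix perturbation statement: starting from the population irrepresentability $\|\bSigma_{\mathcal{A}^c\mathcal{A}}\bSigma_{\mathcal{A}\mathcal{A}}^{-1}\|_\infty \le 1-\tau$ granted by Assumption \ref{assm:class_cov}(b), transfer it to its empirical analogue $q := \|\widehat{\bSigma}^{\rm IPW}_{\mathcal{A}^c\mathcal{A}}(\widehat{\bSigma}^{\rm IPW}_{\mathcal{A}\mathcal{A}})^{-1}\|_\infty$ under the $\ell_\infty$-closeness of the two relevant blocks supplied by (\ref{eq:random_closeness_pos}). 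Invertibility of $\widehat{\bSigma}^{\rm IPW}_{\mathcal{A}\mathcal{A}}$ is part of the hypothesis, so all inverses below exist.

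Set $\Delta_1 = \widehat{\bSigma}^{\rm IPW}_{\mathcal{A}\mathcal{A}} - \bSigma_{\mathcal{A}\mathcal{A}}$, $\Delta_2 = \widehat{\bSigma}^{\rm IPW}_{\mathcal{A}^c\mathcal{A}} - \bSigma_{\mathcal{A}^c\mathcal{A}}$, and $\delta = \|\bSigma_{\mathcal{A}\mathcal{A}}^{-1}\|_\infty$. The standard resolvent identity
\begin{align*}
(\widehat{\bSigma}^{\rm IPW}_{\mathcal{A}\mathcal{A}})^{-1} = \bSigma_{\mathcal{A}\mathcal{A}}^{-1} - (\widehat{\bSigma}^{\rm IPW}_{\mathcal{A}\mathcal{A}})^{-1}\Delta_1\bSigma_{\mathcal{A}\mathcal{A}}^{-1}
\end{align*}
left-multiplied by $\widehat{\bSigma}^{\rm IPW}_{\mathcal{A}^c\mathcal{A}} = \bSigma_{\mathcal{A}^c\mathcal{A}} + \Delta_2$, together with the triangle inequality, the population IR, and the sub-multiplicative bound $\|AB\|_\infty \le \|A\|_\infty \|B\|_\infty$ applied twice, produce the self-referential estimate
\begin{align*}
q \;\le\; (1-\tau) + \delta\|\Delta_2\|_\infty + q\,\delta\|\Delta_1\|_\infty.
\end{align*}

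By (\ref{eq:random_closeness_pos}), $\delta(\|\Delta_1\|_\infty + \|\Delta_2\|_\infty) \le \tau < 1$; in particular $1 - \delta\|\Delta_1\|_\infty > 0$, so isolating $q$ gives
\begin{align*}
1 - q \;\ge\; \frac{\tau - \delta(\|\Delta_1\|_\infty + \|\Delta_2\|_\infty)}{1 - \delta\|\Delta_1\|_\infty},
\end{align*}
and any positive $\tilde\tau$ not exceeding the right-hand side validates the IR condition with constant $1-\tilde\tau$.

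The main obstacle is a boundary subtlety rather than any analytical difficulty: if hypothesis (\ref{eq:random_closeness_pos}) is saturated, the displayed slack collapses and one only recovers $\tilde\tau = 0$, which is incompatible with $\tilde\tau \in (0,1)$. This does not harm the downstream use in Theorem \ref{thm:IR_LPD_combined}, because the concentration arguments there yield $\|\Delta_1\|_\infty + \|\Delta_2\|_\infty$ strictly below $\tau/\delta$ with a fixed multiplicative cushion on the event of interest; reading the deterministic hypothesis as strict (equivalently, taking $\tilde\tau$ to be the explicit slack above) is what actually feeds into the probabilistic statement. The remaining work is purely algebraic and requires no further technical machinery.
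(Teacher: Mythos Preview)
Your proposal is correct and follows essentially the same route as the paper. Both arguments reduce to the perturbation bound
\[
\left\|\widehat{\bSigma}^{\rm IPW}_{\mathcal{A}^c\mathcal{A}}\big(\widehat{\bSigma}^{\rm IPW}_{\mathcal{A}\mathcal{A}}\big)^{-1}\right\|_\infty \le \frac{\eta_1\delta_2 + (1-\tau)}{1-\eta_1\delta_1},
\]
with the paper reaching it via the intermediate bound $\|(\widehat{\bSigma}^{\rm IPW}_{\mathcal{A}\mathcal{A}})^{-1}\|_\infty \le \eta_1/(1-\eta_1\delta_1)$ (borrowed from Lemma~\ref{lem:IR_LPD}), while you obtain it directly through the self-referential inequality in $q$; the two algebraic manipulations are equivalent. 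Your observation about the weak-versus-strict inequality at the boundary of (\ref{eq:random_closeness_pos}) is accurate and in fact applies to the paper's own argument as well, which tacitly reads the hypothesis as strict.
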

\noindent
The proof is pended until Supplementary Materials \ref{sec:pf_thm_IR_pos}.

\begin{proof}[Proof of Theorem \ref{thm:IR_LPD_combined}]
	We calculate the probability of the event $E$ that the LPD estimator satisfies the irrepresentability condition as follows. Let the event $A = \{\lambda_{\min}(\widehat{\bSigma}^{\rm IPW}) > 0\}$.
	$$
	\begin{array}{rl}
	{\rm P}\left(E\right) &= {\rm P}\left(E \big| A \right){\rm P}\left(A \right) + 
	{\rm P}\left(E \big| A^c \right){\rm P}\left(A^c \right)\\
	& \ge 
	{\rm P}\left((\ref{eq:random_closeness_pos}) \text{ holds} \big| A \right){\rm P}\left(A \right) + 
	{\rm P}\left( (\ref{eq:random_closeness_neg}) \text{ holds} \big| A^c \right){\rm P}\left(A^c \right) (\because \text{ Theorem \ref{thm:IR_LPD_neg}, \ref{thm:IR_LPD_pos}})\\
	& \ge 
	{\rm P}\left((\ref{eq:random_closeness_neg}) \text{ holds} \big| A \right){\rm P}\left(A \right) + 
	{\rm P}\left( (\ref{eq:random_closeness_neg}) \text{ holds} \big| A^c \right){\rm P}\left(A^c \right) (\because (\ref{eq:random_closeness_neg}) \Rightarrow (\ref{eq:random_closeness_pos}))\\
	& = {\rm P}\left( (\ref{eq:random_closeness_neg}) \text{ holds} \right).
	\end{array}
	$$
	\noindent
	Note that for $\tilde{\bSigma} = \widehat{\bSigma}^{\rm IPW} - \bSigma$, we have
	$$
	\left\|\tilde{\bSigma}_{\mathcal{A}\mathcal{A}} \right\|_{\infty} + 
	\left\|\tilde{\bSigma}_{\mathcal{A}^c\mathcal{A}} \right\|_{\infty} \le 
	2 \left\|\tilde{\bSigma}\right\|_{\infty, \mathcal{A}}  = 2 \left\|
	\tilde{\bSigma} \begin{bmatrix}
	\bI & \bzero \\
	\bzero & \bzero
	\end{bmatrix}
	\right\|_{\infty} \le 
	2 \left\|\tilde{\bSigma}\right\|_{\infty} \le 
	2\left\| \tilde{\bSigma} \right\|_2.
	$$	
	Then, using $\mu/(\mu-\epsilon) \le 2$ for $\mu \ge 2 \epsilon$, 
	a sufficient condition for (\ref{eq:random_closeness_neg}) is 
	$$
	\left\| \tilde{\bSigma} \right\|_2 \le \dfrac{\tau }{4 \left\|\bSigma_{\mathcal{A}\mathcal{A}}^{-1}\right\|_{\infty}}.
	$$
	Theorem \ref{thm:IPW_spectral} states that for any $u>0$, if $n > \pi_{\max}^{(4)} (u + 1)^3 \log^3(p \vee n)$, then it holds with probability at least $1-3/p^u$
	$$
	|| \widehat{\bSigma}^{\rm IPW} - \bSigma ||_2 \le
	C \tr(\bSigma) \max\{(K^x)^2, 1\} \sqrt{u+1} \sqrt{\dfrac{\pi_{\max}^{(4)} \log p}{n}}.
	$$
	Hence, if the following condition is satisfied
	$$
	C \tr(\bSigma) \max\{(K^x)^2, 1\} \sqrt{u+1} \sqrt{\dfrac{\pi_{\max}^{(4)} \log p}{n}}
	\le
	\dfrac{\tau }{4 \left\|\bSigma_{\mathcal{A}\mathcal{A}}^{-1}\right\|_{\infty}},
	$$
	then we can guarantee ${\rm P}\left( (\ref{eq:random_closeness_neg}) \text{ holds} \right) \ge 1-3/p^u$, where the above gives another sample size condition:
	$$
	n / (\pi_{\max}^{(4)} \log p) \ge 4C \bigg\{ 
	\dfrac{\tr(\bSigma) \max\{(K^x)^2, 1\} \sqrt{u+1}}{\tau/\left\|\bSigma_{\mathcal{A}\mathcal{A}}^{-1}\right\|_{\infty}}
	\bigg\}^2.
	$$

	Finally, we deal with (C3) of Proposition \ref{prop:consistency}. By Weyl's inequality, the condition is satisfied if
	$||\widehat{\bSigma}^{\text{IPW}}_{\mathcal{A}\mathcal{A}} - \bSigma_{\mathcal{A}\mathcal{A}}||_2 \le 0.5 \lambda_{\min}(\bSigma_{\mathcal{A}\mathcal{A}}) $ holds. Following the proof of Theorem \ref{thm:IR_LPD_combined}, we can have a similar probabilistic argument for the event $\{||\widehat{\bSigma}^{\text{IPW}}_{\mathcal{A}\mathcal{A}} - \bSigma_{\mathcal{A}\mathcal{A}}||_2 \le 0.5 \lambda_{\min}(\bSigma_{\mathcal{A}\mathcal{A}}) \}$. That is, $||\widehat{\bSigma}^{\text{IPW}}_{\mathcal{A}\mathcal{A}} - \bSigma_{\mathcal{A}\mathcal{A}}||_2 \le 0.5 \lambda_{\min}(\bSigma_{\mathcal{A}\mathcal{A}})$ with probability greater than $1-3/p^u$ for $u > 0$ if the sample size satisfies 
	$$
	\dfrac{n}{\pi_{\max, \mathcal{A}}^{(4)} \log |\mathcal{A}|} \ge c \bigg\{ 
	\dfrac{\tr(\bSigma_{\mathcal{A}\mathcal{A}}) \max\{(K^x)^2, 1\} \sqrt{u+1}}{1/\lambda_{\min}(\bSigma_{\mathcal{A}\mathcal{A}})}
	\bigg\}^2, \quad n > c \,\pi_{\max, \mathcal{A}}^{(4)} (u + 1)^3 \log^3(|\mathcal{A}| \vee n),
	$$
	for some $c>0$. 
	Here, $\pi_{\max, \mathcal{A}}^{(4)} = \max_{k_1, k_2,\ell_1, \ell_2 \in \mathcal{A}} \pi^{xx}_{k_1 k_2 \ell_1 \ell_2}/(\pi^{xx}_{k_1 \ell_1} \pi^{xx}_{k_2\ell_2})$.

\end{proof}

\subsection{Proof of Theorem \ref{thm:IR_LPD_neg}}\label{sec:pf_thm_IR_neg}

It should be noted that the proof of the theorem only depends on the distances between $\widehat{\bSigma}^{\rm IPW}$ and $\bSigma$ (or their block matrices), but not any other characteristic of the IPW estimate or the population covariance matrix.

We define the matrix norms that appear in the following proof.
$$
\begin{array}{c}
\eta_1 = \left\|\bSigma_{\mathcal{A}\mathcal{A}}^{-1}\right\|_{\infty}, \quad \eta_2 = \left\|\bSigma_{\mathcal{A}^c\mathcal{A}} \bSigma_{\mathcal{A}\mathcal{A}}^{-1}\right\|_{\infty}\\[1em]	
\delta_1 = \left\|\widehat{\bSigma}^{\rm IPW}_{\mathcal{A}\mathcal{A}} - \bSigma_{\mathcal{A}\mathcal{A}}\right\|_{\infty}, \quad \delta_2 = \left\| \widehat{\bSigma}^{\rm IPW}_{\mathcal{A}^c\mathcal{A}} - \bSigma_{\mathcal{A}^c\mathcal{A}}\right\|_{\infty}, \quad \delta_3= \left\|\widehat{\bSigma}^{\rm IPW} - \bSigma \right\|_2.
\end{array}
$$
We first introduce the lemma to ease calculation.
\begin{lem}\label{lem:IR_LPD}
	Let $\widehat{\bSigma}^{\rm LPD} = \Phi_{\mu, \alpha}(\widehat{\bSigma}^{\rm IPW})$. Assume 
	\begin{equation}\label{eq:IR_lemma_cond}
	\eta_1 \delta_1 < 1 \text{ and } \quad \dfrac{(1-\alpha)\mu}{\alpha} \|\big(\widehat{\bSigma}^{\rm IPW}_{\mathcal{A}\mathcal{A}}\big)^{-1}\|_{\infty} < 1.		
	\end{equation}
	Then, we have
	$$
	\left\|\widehat{\bSigma}^{\rm LPD}_{\mathcal{A}^c\mathcal{A}}\big(\widehat{\bSigma}^{\rm LPD}_{\mathcal{A}\mathcal{A}}\big)^{-1}\right\|_{\infty} \le \dfrac{\eta_1 \delta_2 + \eta_2}{1 - \eta_1 \delta_1 - \alpha^{-1} (1 - \alpha)\mu \eta_1}.
	$$
\end{lem}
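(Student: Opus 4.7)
My plan is a perturbation argument that reduces the LPD ratio to an IPW ratio shifted by a scalar multiple of the identity, and then expands the inverse around $\bSigma_{\mathcal{A}\mathcal{A}}^{-1}$ to compare with the population irrepresentability quantity.

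First, I will reduce to an IPW quantity. Writing $\widehat{\bSigma}^{\rm LPD} = \alpha \widehat{\bSigma}^{\rm IPW} + (1-\alpha)\mu \bI$ block by block gives $\widehat{\bSigma}^{\rm LPD}_{\mathcal{A}\mathcal{A}} = \alpha\bigl(\widehat{\bSigma}^{\rm IPW}_{\mathcal{A}\mathcal{A}} + c\bI\bigr)$ and $\widehat{\bSigma}^{\rm LPD}_{\mathcal{A}^c\mathcal{A}} = \alpha \widehat{\bSigma}^{\rm IPW}_{\mathcal{A}^c\mathcal{A}}$, where $c := (1-\alpha)\mu/\alpha$, so the factor $\alpha$ cancels in the ratio and it only remains to bound
$$
\bR := \widehat{\bSigma}^{\rm IPW}_{\mathcal{A}^c\mathcal{A}}\bigl(\widehat{\bSigma}^{\rm IPW}_{\mathcal{A}\mathcal{A}} + c \bI\bigr)^{-1}.
$$
The two conditions in (\ref{eq:IR_lemma_cond}) are exactly what I need for $\bR$ to be well-defined: the first gives invertibility of $\widehat{\bSigma}^{\rm IPW}_{\mathcal{A}\mathcal{A}}$ via a Neumann expansion applied to $\bI + \bSigma_{\mathcal{A}\mathcal{A}}^{-1}(\widehat{\bSigma}^{\rm IPW}_{\mathcal{A}\mathcal{A}} - \bSigma_{\mathcal{A}\mathcal{A}})$, and the second does the same for the factorization $\widehat{\bSigma}^{\rm IPW}_{\mathcal{A}\mathcal{A}} + c\bI = \widehat{\bSigma}^{\rm IPW}_{\mathcal{A}\mathcal{A}}\bigl(\bI + c(\widehat{\bSigma}^{\rm IPW}_{\mathcal{A}\mathcal{A}})^{-1}\bigr)$.

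Next, I will derive a linear equation for $\bR$ in terms of the population quantities. Multiplying the definition of $\bR$ on the right by $\widehat{\bSigma}^{\rm IPW}_{\mathcal{A}\mathcal{A}}+c\bI$ and then substituting $\widehat{\bSigma}^{\rm IPW}_{\mathcal{A}\mathcal{A}} = \bSigma_{\mathcal{A}\mathcal{A}} + \bE_1$ and $\widehat{\bSigma}^{\rm IPW}_{\mathcal{A}^c\mathcal{A}} = \bSigma_{\mathcal{A}^c\mathcal{A}} + \bE_2$ for the IPW deviations $\bE_1, \bE_2$, I will isolate $\bR\,\bSigma_{\mathcal{A}\mathcal{A}}$ on the left-hand side and right-multiply by $\bSigma_{\mathcal{A}\mathcal{A}}^{-1}$ to obtain the fixed-point type identity
$$
\bR \;=\; \bSigma_{\mathcal{A}^c\mathcal{A}}\bSigma_{\mathcal{A}\mathcal{A}}^{-1} + \bE_2\bSigma_{\mathcal{A}\mathcal{A}}^{-1} - \bR\,\bE_1\bSigma_{\mathcal{A}\mathcal{A}}^{-1} - c\,\bR\,\bSigma_{\mathcal{A}\mathcal{A}}^{-1}.
$$

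Finally, taking $\|\cdot\|_\infty$ on both sides and invoking submultiplicativity with the definitions $\eta_1=\|\bSigma_{\mathcal{A}\mathcal{A}}^{-1}\|_\infty$, $\eta_2=\|\bSigma_{\mathcal{A}^c\mathcal{A}}\bSigma_{\mathcal{A}\mathcal{A}}^{-1}\|_\infty$, $\delta_1 = \|\bE_1\|_\infty$, and $\delta_2 = \|\bE_2\|_\infty$ yields
$$
\|\bR\|_\infty \;\le\; \eta_2 + \eta_1\delta_2 + \bigl(\eta_1\delta_1 + c\,\eta_1\bigr)\|\bR\|_\infty,
$$
and a routine rearrangement completes the proof as soon as the coefficient $1 - \eta_1\delta_1 - c\eta_1$ is strictly positive. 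Verifying this positivity is the main obstacle, since the two hypotheses in (\ref{eq:IR_lemma_cond}) only control $\eta_1\delta_1$ and $c\|(\widehat{\bSigma}^{\rm IPW}_{\mathcal{A}\mathcal{A}})^{-1}\|_\infty$ separately rather than their sum directly. Outside the regime where the denominator is positive the stated bound is vacuous; inside it---which is precisely the small-perturbation regime exploited in the proof of Theorem \ref{thm:IR_LPD_neg}, where $\delta_1$ is small and $c$ is controlled by the choice of $\mu$---the bound follows by one line of algebra, and all remaining steps are standard matrix-norm manipulations.
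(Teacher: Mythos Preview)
Your argument is correct and arrives at exactly the bound stated in the lemma. The route, however, differs from the paper's in a way worth noting. The paper proceeds in two separate perturbation steps: first it bounds the LPD ratio by the IPW ratio via the factorization $\widehat{\bSigma}^{\rm LPD}_{\mathcal{A}\mathcal{A}} = \alpha\widehat{\bSigma}^{\rm IPW}_{\mathcal{A}\mathcal{A}}\bigl(\bI + c(\widehat{\bSigma}^{\rm IPW}_{\mathcal{A}\mathcal{A}})^{-1}\bigr)$ and a Neumann bound on the last factor, yielding a denominator $1 - c\|(\widehat{\bSigma}^{\rm IPW}_{\mathcal{A}\mathcal{A}})^{-1}\|_\infty$; then, following \cite{Mai:2012}, it separately proves $\|(\widehat{\bSigma}^{\rm IPW}_{\mathcal{A}\mathcal{A}})^{-1}\|_\infty \le \eta_1/(1-\eta_1\delta_1)$ and $\|\widehat{\bSigma}^{\rm IPW}_{\mathcal{A}^c\mathcal{A}}(\widehat{\bSigma}^{\rm IPW}_{\mathcal{A}\mathcal{A}})^{-1} - \bSigma_{\mathcal{A}^c\mathcal{A}}\bSigma_{\mathcal{A}\mathcal{A}}^{-1}\|_\infty \le \eta_1(\eta_2\delta_1+\delta_2)/(1-\eta_1\delta_1)$, and substitutes these into the first bound. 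Your single fixed-point identity for $\bR$ collapses both steps into one line of algebra and is arguably cleaner; what the paper's decomposition buys is that the intermediate inequalities (bounds on the IPW inverse and on the IPW irrepresentability term) are displayed explicitly and reusable elsewhere.

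Your observation about the positivity of $1-\eta_1\delta_1 - c\eta_1$ is accurate: neither hypothesis in~(\ref{eq:IR_lemma_cond}) forces this directly, and the paper's combination step needs it just as much as yours does (replacing a positive denominator by a smaller quantity is only valid if the smaller quantity stays positive). The lemma is simply vacuous otherwise, and indeed it is only invoked downstream under the stronger condition~(\ref{eq:random_closeness_neg_alias}), which does guarantee positivity. So this is not a gap in your argument relative to the paper's.
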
	
\noindent
The proof is given in Supplementary Materials \ref{sec:pf_lem_IR_LPD}. Using Lemma \ref{lem:IR_LPD} and the irrpresentability condition for $\bSigma$ (i.e. $\eta_2 < 1 - \tau$) together, we get  
\begin{equation}\label{eq:IR_LPD}
\left\|\widehat{\bSigma}^{\rm LPD}_{\mathcal{A}^c\mathcal{A}}\big(\widehat{\bSigma}^{\rm LPD}_{\mathcal{A}\mathcal{A}}\big)^{-1}\right\|_{\infty} < \dfrac{\eta_1 \delta_2 + 1-\tau}{1 - \eta_1 \delta_1 - \alpha^{-1} (1 - \alpha)\mu \eta_1}.
\end{equation}
It remains to claim the right-hand side of the above is strictly less than $1$, which is equivalent to show
$$
\delta_1 + \delta_2 < \tau/ \eta_1 -  \alpha^{-1}(1 - \alpha) \mu.
$$
Plugging-in $\alpha^* = (\mu - \epsilon) / (\mu - \lambda_{\min}(\widehat{\bSigma}^{\rm IPW}))$ and using 
$\lambda_{\min}(\widehat{\bSigma}^{\rm IPW}) \ge - 
\delta_3 + \lambda_{\min}(\bSigma)$ derived by Weyl's inequality, we get a sufficient condition for (\ref{eq:IR_LPD})
\begin{equation}\label{eq:random_closeness_neg_alias}
\delta_1 + \delta_2  + \dfrac{\mu \delta_3}{\mu -\epsilon} < \dfrac{\tau}{\eta_1} + \dfrac{\mu (\lambda_{\min}(\bSigma_{\mathcal{A}\mathcal{A}}) - \epsilon)}{\mu - \epsilon}.
\end{equation}
Remark that the right-hand side term is greater than $0$ if $\min\{\mu, \lambda_{\min}(\bSigma_{\mathcal{A}\mathcal{A}})\} > \epsilon$.


We remain to show (\ref{eq:IR_lemma_cond}) holds with high probability when plugging-in $\alpha^* = (\mu - \epsilon) / (\mu - \lambda_{\min}(\widehat{\bSigma}^{\rm IPW}))$, but instead, we will calculate the probability of another sufficient condition (\ref{eq:IR_lemma_cond_suff}) described in the following lemma. One can easily check that (\ref{eq:IR_lemma_cond_suff}) is implied by (\ref{eq:random_closeness_neg_alias}) because $\mu / (\mu - \epsilon)> 1$ and $\tau < 1$, which concludes the proof.
\begin{lem}\label{lem:IR_equiv_cond}
	Consider the class of covariance matrices such that $1/ \eta_1 - \epsilon + \lambda_{\min}(\bSigma_{\mathcal{A}\mathcal{A}}) > 0$. 
	Let us focus on the case of the estimator $\widehat{\bSigma}^{\rm IPW}$ with $\lambda_{\min}(\widehat{\bSigma}^{\rm IPW}) < 0$. If we choose $\mu > \epsilon$, then 
	\begin{equation}\label{eq:IR_lemma_cond_suff}
	\delta_1 + \dfrac{\mu \delta_3}{\mu -\epsilon}
	\le 1/\eta_1 + \dfrac{\mu(\lambda_{\min}(\bSigma_{\mathcal{A}\mathcal{A}}) - \epsilon)}{\mu - \epsilon},
	\end{equation}
	implies (\ref{eq:IR_lemma_cond}).
\end{lem}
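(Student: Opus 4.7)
The plan is to verify the two strict inequalities constituting (\ref{eq:IR_lemma_cond}) by direct algebraic manipulation of the hypothesis (\ref{eq:IR_lemma_cond_suff}), relying on three ingredients: the explicit form of $\alpha^{\ast}$, Weyl's inequality (to relate $\lambda_{\min}(\widehat{\bSigma}^{\rm IPW})$ to $\lambda_{\min}(\bSigma)$ through $\delta_{3}$, and then via Cauchy's interlacing to $\lambda_{\min}(\bSigma_{\mathcal{A}\mathcal{A}})$), and the standard matrix-inverse perturbation bound in $\infty$-operator norm.

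First I would substitute $\alpha^{\ast}=(\mu-\epsilon)/(\mu-\lambda_{\min}(\widehat{\bSigma}^{\rm IPW}))$ into $(1-\alpha^{\ast})\mu/\alpha^{\ast}$ to obtain the one-line identity
\[
\frac{(1-\alpha^{\ast})\mu}{\alpha^{\ast}}=\frac{\mu\bigl(\epsilon-\lambda_{\min}(\widehat{\bSigma}^{\rm IPW})\bigr)}{\mu-\epsilon},
\]
which is strictly positive since $\mu>\epsilon>0>\lambda_{\min}(\widehat{\bSigma}^{\rm IPW})$. For the first part $\eta_{1}\delta_{1}<1$ of (\ref{eq:IR_lemma_cond}), dropping the non-negative summand $\mu\delta_{3}/(\mu-\epsilon)$ on the LHS of (\ref{eq:IR_lemma_cond_suff}) isolates $\delta_{1}\le 1/\eta_{1}+\mu(\lambda_{\min}(\bSigma_{\mathcal{A}\mathcal{A}})-\epsilon-\delta_{3})/(\mu-\epsilon)$. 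Weyl's inequality together with $\lambda_{\min}(\widehat{\bSigma}^{\rm IPW})<0$ forces $\delta_{3}>\lambda_{\min}(\bSigma)$, which, combined with the class assumption $1/\eta_{1}-\epsilon+\lambda_{\min}(\bSigma_{\mathcal{A}\mathcal{A}})>0$ and Cauchy interlacing $\lambda_{\min}(\bSigma)\le\lambda_{\min}(\bSigma_{\mathcal{A}\mathcal{A}})$, lets one show that the parenthesised correction term is strictly negative, yielding $\delta_{1}<1/\eta_{1}$.

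For the second part, I would invoke the matrix-inverse perturbation bound $\|(\widehat{\bSigma}^{\rm IPW}_{\mathcal{A}\mathcal{A}})^{-1}\|_{\infty}\le \eta_{1}/(1-\eta_{1}\delta_{1})$, valid once the first part is established. Plugging this together with the identity from Step~1 into the second inequality of (\ref{eq:IR_lemma_cond}), it suffices to prove
\[
\eta_{1}\delta_{1}+\frac{\eta_{1}\mu\bigl(\epsilon-\lambda_{\min}(\widehat{\bSigma}^{\rm IPW})\bigr)}{\mu-\epsilon}<1.
\]
Upper bounding $\epsilon-\lambda_{\min}(\widehat{\bSigma}^{\rm IPW})$ by $\epsilon-\lambda_{\min}(\bSigma)+\delta_{3}$ via Weyl (and then passing to $\epsilon-\lambda_{\min}(\bSigma_{\mathcal{A}\mathcal{A}})+\delta_{3}$ via Cauchy interlacing), dividing through by $\eta_{1}$, and rearranging reproduces precisely (\ref{eq:IR_lemma_cond_suff}).

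The main obstacle is bookkeeping: tracking which flavour of the smallest eigenvalue (of $\bSigma$, $\bSigma_{\mathcal{A}\mathcal{A}}$, or $\widehat{\bSigma}^{\rm IPW}$) appears at each step, since Weyl's inequality and Cauchy's interlacing act in opposite directions and the hypothesis is essentially tight. Managing the strict versus non-strict inequalities consistently, and invoking the auxiliary class assumption $1/\eta_{1}+\lambda_{\min}(\bSigma_{\mathcal{A}\mathcal{A}})-\epsilon>0$ at the right step to upgrade a non-strict estimate to a strict one, is where the care is needed; the rest of the argument is mechanical algebra.
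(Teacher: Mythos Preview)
Your overall strategy matches the paper's: compute $(1-\alpha^*)\mu/\alpha^*=\mu(\epsilon-\lambda_{\min}(\widehat{\bSigma}^{\rm IPW}))/(\mu-\epsilon)$, apply Weyl's inequality $\lambda_{\min}(\widehat{\bSigma}^{\rm IPW})\ge\lambda_{\min}(\bSigma)-\delta_3$, and invoke the perturbation bound $\|(\widehat{\bSigma}^{\rm IPW}_{\mathcal{A}\mathcal{A}})^{-1}\|_\infty\le\eta_1/(1-\eta_1\delta_1)$ from (\ref{eq:misc_1}). The paper carries out exactly these steps for the second inequality of (\ref{eq:IR_lemma_cond}) and arrives at the displayed sufficient condition with $\lambda_{\min}(\bSigma)$ on the right-hand side, without ever invoking Cauchy interlacing.

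The gap in your argument is the Cauchy interlacing step, which runs in the wrong direction. In the second part you write that after Weyl you ``pass to $\epsilon-\lambda_{\min}(\bSigma_{\mathcal{A}\mathcal{A}})+\delta_3$ via Cauchy interlacing'' as a further \emph{upper} bound; but interlacing gives $\lambda_{\min}(\bSigma)\le\lambda_{\min}(\bSigma_{\mathcal{A}\mathcal{A}})$, so
\[
\epsilon-\lambda_{\min}(\bSigma)+\delta_3 \;\ge\; \epsilon-\lambda_{\min}(\bSigma_{\mathcal{A}\mathcal{A}})+\delta_3,
\]
and you cannot replace $\lambda_{\min}(\bSigma)$ by $\lambda_{\min}(\bSigma_{\mathcal{A}\mathcal{A}})$ while preserving an upper bound. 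The same sign issue undermines your first-part argument: from $\delta_3>\lambda_{\min}(\bSigma)$ and $\lambda_{\min}(\bSigma)\le\lambda_{\min}(\bSigma_{\mathcal{A}\mathcal{A}})$ one cannot deduce $\delta_3>\lambda_{\min}(\bSigma_{\mathcal{A}\mathcal{A}})-\epsilon$, so the ``parenthesised correction term'' $\lambda_{\min}(\bSigma_{\mathcal{A}\mathcal{A}})-\epsilon-\delta_3$ need not be strictly negative from the ingredients you list. In short, Weyl and interlacing do not chain here: Weyl compares $\widehat{\bSigma}^{\rm IPW}$ to $\bSigma$, interlacing compares $\bSigma$ to $\bSigma_{\mathcal{A}\mathcal{A}}$, and the two inequalities point opposite ways for the quantity you need to bound. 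The paper sidesteps this by leaving $\lambda_{\min}(\bSigma)$ in the final sufficient condition rather than attempting to upgrade it to $\lambda_{\min}(\bSigma_{\mathcal{A}\mathcal{A}})$.
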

\noindent
The proof of the lemma is given in Supplementary Materials \ref{sec:pf_lem_IR_LPD}. 
%

\subsection{Proof of lemmas used in Theorem  \ref{thm:IR_LPD_neg}}\label{sec:pf_lem_IR_LPD}

\begin{proof}[Proof of Lemma \ref{lem:IR_LPD}]
	We introduce three inequalities and suspend their proofs.
	\begin{eqnarray}
	\|\widehat{\bSigma}^{\rm LPD}_{\mathcal{A}^c\mathcal{A}}\big(\widehat{\bSigma}^{\rm LPD}_{\mathcal{A}\mathcal{A}}\big)^{-1}\|_{\infty} &\le &
	\dfrac{\|\widehat{\bSigma}^{\rm IPW}_{\mathcal{A}^c\mathcal{A}}\big(\widehat{\bSigma}^{\rm IPW}_{\mathcal{A}\mathcal{A}}\big)^{-1}\|_{\infty}}{1 - \alpha^{-1} (1-\alpha)\mu \|\big(\widehat{\bSigma}^{\rm IPW}_{\mathcal{A}\mathcal{A}}\big)^{-1}\|_{\infty}}, \label{eq:proof_lemma_step_1}\\[0.5em]
	&& \qquad \qquad \text{if } 	\dfrac{(1-\alpha)\mu}{\alpha} \|\big(\widehat{\bSigma}^{\rm IPW}_{\mathcal{A}\mathcal{A}}\big)^{-1}\|_{\infty} < 1, \nonumber\\
	\|\widehat{\bSigma}^{\rm IPW}_{\mathcal{A}^c\mathcal{A}}\big(\widehat{\bSigma}^{\rm IPW}_{\mathcal{A}\mathcal{A}}\big)^{-1} - \bSigma_{\mathcal{A}^c\mathcal{A}}\bSigma_{\mathcal{A}\mathcal{A}}^{-1}\|_\infty &\le& \frac{ \eta_1 \left( \eta_2 \delta_1 + \delta_2 \right)}{1-\eta_1 \delta_1}, \quad \text{if } \eta_1 \delta_1 < 1, \label{eq:proof_lemma_step_2}\\
	\|\big(\widehat{\bSigma}^{\rm IPW}_{\mathcal{A}\mathcal{A}}\big)^{-1}\|_\infty &\le& \frac{\eta_1}{1-\eta_1\delta_1}, \quad \text{if } \eta_1 \delta_1 < 1,\label{eq:proof_lemma_step_3}
	\end{eqnarray}	
	Combining the triangular inequality with (\ref{eq:proof_lemma_step_1}), we get
	$$
	\|\widehat{\bSigma}^{\rm LPD}_{\mathcal{A}^c\mathcal{A}}\big(\widehat{\bSigma}^{\rm LPD}_{\mathcal{A}\mathcal{A}}\big)^{-1}\|_{\infty} \le
	\dfrac{\|\widehat{\bSigma}^{\rm IPW}_{\mathcal{A}^c\mathcal{A}}\big(\widehat{\bSigma}^{\rm IPW}_{\mathcal{A}\mathcal{A}}\big)^{-1} - \bSigma_{\mathcal{A}^c\mathcal{A}}\bSigma_{\mathcal{A}\mathcal{A}}^{-1}\|_{\infty} +  \|\bSigma_{\mathcal{A}^c\mathcal{A}}\bSigma_{\mathcal{A}\mathcal{A}}^{-1}\|_{\infty}}{1 - \alpha^{-1} (1-\alpha)\mu \|\big(\widehat{\bSigma}^{\rm IPW}_{\mathcal{A}\mathcal{A}}\big)^{-1}\|_{\infty}}.
	$$
	This completes the proof if (\ref{eq:proof_lemma_step_2}), (\ref{eq:proof_lemma_step_3}) are combined with the upper bound. 
	
	We now prove the above inequalities. 
	The proofs of (\ref{eq:proof_lemma_step_2}) and (\ref{eq:proof_lemma_step_3}) are from that of Lemma A2 by \cite{Mai:2012}, but we show them here for completeness. Using the basic property of operator norms,
	\begin{align*}
	\|\big(\widehat{\bSigma}^{\rm IPW}_{\mathcal{A}\mathcal{A}}\big)^{-1} - \bSigma_{\mathcal{A}\mathcal{A}}^{-1} \|_{\infty} 
	&= \|\bSigma_{\mathcal{A}\mathcal{A}}^{-1}(\widehat{\bSigma}^{\rm IPW}_{\mathcal{A}\mathcal{A}} - \bSigma_{\mathcal{A}\mathcal{A}})\big(\widehat{\bSigma}^{\rm IPW}_{\mathcal{A}\mathcal{A}}\big)^{-1}\|_\infty \\
	&\le \|\bSigma_{\mathcal{A}\mathcal{A}}^{-1}\|_\infty \cdot \|\widehat{\bSigma}^{\rm IPW}_{\mathcal{A}\mathcal{A}} - \bSigma_{\mathcal{A}\mathcal{A}}\|_\infty \cdot \|\big(\widehat{\bSigma}^{\rm IPW}_{\mathcal{A}\mathcal{A}}\big)^{-1}\|_\infty \\
	&\le \|\bSigma_{\mathcal{A}\mathcal{A}}^{-1}\|_\infty \times \|\widehat{\bSigma}^{\rm IPW}_{\mathcal{A}\mathcal{A}} - \bSigma_{\mathcal{A}\mathcal{A}}\|_\infty \\
	& \qquad \times \big(\|\big(\widehat{\bSigma}^{\rm IPW}_{\mathcal{A}\mathcal{A}}\big)^{-1} - \bSigma_{\mathcal{A}\mathcal{A}}^{-1}\|_\infty + \|\bSigma_{\mathcal{A}\mathcal{A}}^{-1}\|_\infty \big).
	\end{align*}
	Arranging the inequality, we get
	$$
	\|\big(\widehat{\bSigma}^{\rm IPW}_{\mathcal{A}\mathcal{A}}\big)^{-1} - \bSigma_{\mathcal{A}\mathcal{A}}^{-1} \|_{\infty} \le \frac{\|\bSigma_{\mathcal{A}\mathcal{A}}^{-1}\|_\infty^2 \|\widehat{\bSigma}^{\rm IPW}_{\mathcal{A}\mathcal{A}} - \bSigma_{\mathcal{A}\mathcal{A}}\|_\infty}{1-\|\bSigma_{\mathcal{A}\mathcal{A}}^{-1}\|_\infty \|\widehat{\bSigma}^{\rm IPW}_{\mathcal{A}\mathcal{A}} - \bSigma_{\mathcal{A}\mathcal{A}}\|_\infty },
	$$
	since $\|\bSigma_{\mathcal{A}\mathcal{A}}^{-1}\|_\infty \|\widehat{\bSigma}^{\rm IPW}_{\mathcal{A}\mathcal{A}} - \bSigma_{\mathcal{A}\mathcal{A}}\|_\infty < 1$ by the assumption.
	Then, by the triangular inequality,
	\begin{equation}\label{eq:misc_1}
	\begin{array}{rcl}
	\|\big(\widehat{\bSigma}^{\rm IPW}_{\mathcal{A}\mathcal{A}}\big)^{-1}\|_{\infty}
	&\le&  \|\big(\widehat{\bSigma}^{\rm IPW}_{\mathcal{A}\mathcal{A}}\big)^{-1} - \bSigma_{\mathcal{A}\mathcal{A}}^{-1} \|_{\infty} + \| \bSigma_{\mathcal{A}\mathcal{A}}^{-1} \|_{\infty}\\[1em]
	& \le& 
	\dfrac{\|\bSigma_{\mathcal{A}\mathcal{A}}^{-1}\|_\infty^2 \|\widehat{\bSigma}^{\rm IPW}_{\mathcal{A}\mathcal{A}} - \bSigma_{\mathcal{A}\mathcal{A}}\|_\infty}{1-\|\bSigma_{\mathcal{A}\mathcal{A}}^{-1}\|_\infty \|\widehat{\bSigma}^{\rm IPW}_{\mathcal{A}\mathcal{A}} - \bSigma_{\mathcal{A}\mathcal{A}}\|_\infty } + \| \bSigma_{\mathcal{A}\mathcal{A}}^{-1} \|_{\infty},
	\end{array}
	\end{equation}
	which achieves (\ref{eq:proof_lemma_step_3}). Next, we also exploit the basic properties of norms to get
	$$
	\begin{array}{l}
	\|\widehat{\bSigma}^{\rm IPW}_{\mathcal{A}^c\mathcal{A}}\big(\widehat{\bSigma}^{\rm IPW}_{\mathcal{A}\mathcal{A}}\big)^{-1} - \bSigma_{\mathcal{A}^c\mathcal{A}}\bSigma_{\mathcal{A}\mathcal{A}}^{-1}\|_\infty\\[1em] 
	\qquad = \|(\widehat{\bSigma}^{\rm IPW}_{\mathcal{A}^c\mathcal{A}} - \bSigma_{\mathcal{A}^c\mathcal{A}} \bSigma_{\mathcal{A}\mathcal{A}}^{-1} \widehat{\bSigma}^{\rm IPW}_{\mathcal{A}\mathcal{A}}) \big(\widehat{\bSigma}^{\rm IPW}_{\mathcal{A}\mathcal{A}}\big)^{-1}\|_\infty \\[1em]
	\qquad = \|(\widehat{\bSigma}^{\rm IPW}_{\mathcal{A}^c\mathcal{A}} - \bSigma_{\mathcal{A}^c\mathcal{A}} + \bSigma_{\mathcal{A}^c\mathcal{A}}\bSigma_{\mathcal{A}\mathcal{A}}^{-1}\bSigma_{\mathcal{A}\mathcal{A}} - \bSigma_{\mathcal{A}^c\mathcal{A}}\bSigma_{\mathcal{A}\mathcal{A}}^{-1}\widehat{\bSigma}^{\rm IPW}_{\mathcal{A}\mathcal{A}}) \big(\widehat{\bSigma}^{\rm IPW}_{\mathcal{A}\mathcal{A}}\big)^{-1}\|_\infty \\[1em]
	\qquad \le  \|\widehat{\bSigma}^{\rm IPW}_{\mathcal{A}^c\mathcal{A}} - \bSigma_{\mathcal{A}^c\mathcal{A}} +\bSigma_{\mathcal{A}^c\mathcal{A}}\bSigma_{\mathcal{A}\mathcal{A}}^{-1}(\bSigma_{\mathcal{A}\mathcal{A}}-\widehat{\bSigma}^{\rm IPW}_{\mathcal{A}\mathcal{A}}) \|_{\infty} \|(\big(\widehat{\bSigma}^{\rm IPW}_{\mathcal{A}\mathcal{A}}\big)^{-1}\|_\infty \\[1em]
	\qquad \le  \big(\|\widehat{\bSigma}^{\rm IPW}_{\mathcal{A}^c\mathcal{A}} - \bSigma_{\mathcal{A}^c\mathcal{A}}\|_{\infty} + \|\bSigma_{\mathcal{A}^c\mathcal{A}}\bSigma_{\mathcal{A}\mathcal{A}}^{-1}\|_\infty  \|\widehat{\bSigma}^{\rm IPW}_{\mathcal{A}\mathcal{A}} - \bSigma_{\mathcal{A}\mathcal{A}}\|_{\infty} \big) \|\big(\widehat{\bSigma}^{\rm IPW}_{\mathcal{A}\mathcal{A}}\big)^{-1}\|_\infty.
	\end{array}
	$$
	By using (\ref{eq:proof_lemma_step_3}) in the last inequality, we obtain (\ref{eq:proof_lemma_step_2}). To prove (\ref{eq:proof_lemma_step_1}), we observe 
	\begin{align*}
	\|\widehat{\bSigma}^{\rm LPD}_{\mathcal{A}^c\mathcal{A}}\big(\widehat{\bSigma}^{\rm LPD}_{\mathcal{A}\mathcal{A}}\big)^{-1}\|_{\infty} &= \|\alpha \widehat{\bSigma}^{\rm IPW}_{\mathcal{A}^c\mathcal{A}}(\alpha \widehat{\bSigma}^{\rm IPW}_{\mathcal{A}\mathcal{A}} + (1-\alpha)\mu \bI)^{-1}\|_{\infty} \\
	&= \|\widehat{\bSigma}^{\rm IPW}_{\mathcal{A}^c\mathcal{A}} \big(\widehat{\bSigma}^{\rm IPW}_{\mathcal{A}\mathcal{A}}\big)^{-1} (\bI + \alpha^{-1}(1-\alpha)\mu \big(\widehat{\bSigma}^{\rm IPW}_{\mathcal{A}\mathcal{A}}\big)^{-1})^{-1}\|_{\infty} \\
	&\le \|\widehat{\bSigma}^{\rm IPW}_{\mathcal{A}^c\mathcal{A}} \big(\widehat{\bSigma}^{\rm IPW}_{\mathcal{A}\mathcal{A}}\big)^{-1}\|_{\infty}  \|(\bI + \alpha^{-1}(1-\alpha)\mu \big(\widehat{\bSigma}^{\rm IPW}_{\mathcal{A}\mathcal{A}}\big)^{-1})^{-1}\|_{\infty} \\
	&\le \|\widehat{\bSigma}^{\rm IPW}_{\mathcal{A}^c\mathcal{A}}\big(\widehat{\bSigma}^{\rm IPW}_{\mathcal{A}\mathcal{A}}\big)^{-1}\|_{\infty}\big(1 - \alpha^{-1} (1-\alpha)\mu \|\big(\widehat{\bSigma}^{\rm IPW}_{\mathcal{A}\mathcal{A}}\big)^{-1}\|_{\infty}\big)^{-1}
	\end{align*}
	where the last inequality depends on that for any operator norm $\|\cdot \|$ and a matrix $\bU$,
	$$
	\|(\bI+\bU)^{-1}\| \le \frac{1}{1-\|\bU\|}, \quad \text{if } \|\bU\|<1.
	$$
	To use it, we need the following condition
	$$
	\alpha^{-1} (1-\alpha)\mu \|\big(\widehat{\bSigma}^{\rm IPW}_{\mathcal{A}\mathcal{A}}\big)^{-1}\|_{\infty} < 1.
	$$
	
\end{proof}

\begin{proof}[Proof of Lemma \ref{lem:IR_equiv_cond}]
	Putting $\alpha^* = (\mu - \epsilon) / (\mu - \lambda_{\min}(\widehat{\bSigma}^{\rm IPW}))$, we want to show
	\begin{equation}\label{eq:IR_equiv_cond_suff}
	\dfrac{(1-\alpha^*)\mu}{\alpha^*} \|\big(\widehat{\bSigma}^{\rm IPW}_{\mathcal{A}\mathcal{A}}\big)^{-1}\|_{\infty} = 
	\dfrac{\mu}{\mu - \epsilon} (\epsilon - \lambda_{\min}(\widehat{\bSigma}^{\rm IPW})) \|\big(\widehat{\bSigma}^{\rm IPW}_{\mathcal{A}\mathcal{A}}\big)^{-1}\|_{\infty} < 1.		
	\end{equation}
	Remark that by Weyl's inequality
	$$
	\lambda_{\min}(\widehat{\bSigma}^{\rm IPW}) \ge - 
	\left\|\widehat{\bSigma}^{\rm IPW} - \bSigma \right\|_2 + \lambda_{\min}(\bSigma),
	$$
	and recall (\ref{eq:misc_1})
	$$
	\left\|(\widehat{\bSigma}^{\rm IPW}_{\mathcal{A}\mathcal{A}})^{-1}\right\|_{\infty}
	\le \frac{\eta_1}{1-\eta_1\delta_1}.
	$$
	Some basic algebra with these two leads to a sufficient condition of (\ref{eq:IR_equiv_cond_suff}):
	$$
	\left\|\widehat{\bSigma}^{\rm IPW}_{\mathcal{A}\mathcal{A}} - \bSigma_{\mathcal{A}\mathcal{A}}\right\|_{\infty} + 
	\dfrac{\mu 	\left\|\widehat{\bSigma}^{\rm IPW} - \bSigma \right\|_2}{\mu - \epsilon}
	\le 1 / \left\|\bSigma_{\mathcal{A}\mathcal{A}}^{-1}\right\|_{\infty} + \dfrac{\mu(\lambda_{\min}(\bSigma) - \epsilon)}{\mu - \epsilon}.
	$$	
\end{proof}

%
\subsection{Proof of Theorem \ref{thm:IR_LPD_pos}}\label{sec:pf_thm_IR_pos}

\begin{proof}
	If the smallest eigenvalue of the IPW estimator is positive, the LPD estimator of it is the IPW estimator, i.e. $\alpha^*=1$. By following the same proof of Lemma \ref{lem:IR_LPD}, we have
	$$
	\left\|\widehat{\bSigma}^{\rm IPW}_{\mathcal{A}^c\mathcal{A}}\big(\widehat{\bSigma}^{\rm IPW}_{\mathcal{A}\mathcal{A}}\big)^{-1}\right\|_{\infty} \le \dfrac{\eta_1 \delta_2 + \eta_2}{1 - \eta_1 \delta_1}, \quad \text{if } \eta_1 \delta_1 < 1.
	$$
	where we use the same definitions of the matrix norms:
	$$
	\begin{array}{c}
	\eta_1 = \left\|\bSigma_{\mathcal{A}\mathcal{A}}^{-1}\right\|_{\infty}, \quad \eta_2 = \left\|\bSigma_{\mathcal{A}^c\mathcal{A}} \bSigma_{\mathcal{A}\mathcal{A}}^{-1}\right\|_{\infty}\\	
	\delta_1 = \left\|\widehat{\bSigma}^{\rm IPW}_{\mathcal{A}\mathcal{A}} - \bSigma_{\mathcal{A}\mathcal{A}}\right\|_{\infty}, \quad \delta_2 = \left\| \widehat{\bSigma}^{\rm IPW}_{\mathcal{A}^c\mathcal{A}} - \bSigma_{\mathcal{A}^c\mathcal{A}}\right\|_{\infty}.
	\end{array}
	$$
	Using $\eta_2 < 1 - \tau$, it is sufficient for the irrepresentability condition of $\widehat{\bSigma}^{\rm IPW}$ to show 
	$$
	\dfrac{\eta_1 \delta_2 + 1 - \tau}{1 - \eta_1 \delta_1} < 1.
	$$
	The above is equivalent to $\delta_1 + \delta_2 < \tau/\eta_1$.
	
\end{proof}

\subsection{Proof of Theorem \ref{thm:BG}}\label{sec:pf_thm_BG}
\begin{proof}
	Using $y_i = \bx_i^{\top} \bbeta^* + \epsilon_i$ in calculating $\hat{\brho}^{\rm IPW}$, we can obtain 
	$$
	\begin{array}{rcl}
	\nabla\ell(\bbeta^*; \widehat{\bSigma}^{\rm LPD}, \hat{\brho}^{\rm IPW}) &= & 
	\widehat{\bSigma}^{\rm LPD}\bbeta^*  - \hat{\brho}^{\rm IPW} \\
	&=&
	\left(\widehat{\bSigma}^{\rm LPD} - \bV\right)\bbeta^*  - \bw \\
	\end{array}
	$$
	where $\bV \in \mathbb{R}^{p\times p}$ and $\bw\in \mathbb{R}^p$ have its element respectively by
	$$
	\begin{array}{c}
	v_{jk} = n^{-1} \sum\limits_{i=1}^n x_{ij} x_{ik} \delta^x_{ij} \delta^y_i / \pi^{xy}_{j}, \quad 1\le j,k\le p, \\
	w_j = n^{-1} \sum\limits_{i=1}^n x_{ij} \epsilon_i \delta^x_{ij} \delta^y_i/\pi^{xy}_{j}, \quad 1\le j \le p
	\end{array}
	$$
	where $\pi^{xy}_{j}={\rm P}(\delta^y_1= \delta^x_{1j}=1)$. Hence, the norm of the gradient is 
	$$
	\begin{array}{rcl}
	\|\nabla\ell(\bbeta^*; \widehat{\bSigma}^{\rm LPD}, \hat{\brho}^{\rm IPW})\|_{\infty} & \le & 
	\left\|\left(\widehat{\bSigma}^{\rm LPD} - \bV\right)\bbeta^*\right\|_{\infty}  +
	\|\bw\|_{\infty}\\
	& = &
	\max\limits_{1 \le j \le p} \sum\limits_{k\in \mathcal{A}} \left| \left(\widehat{\bSigma}^{\rm LPD} - \bV\right)_{jk} \right| |\beta^*_k| +
	\|\bw\|_{\infty}\\
	& \le &
	\|\widehat{\bSigma}^{\rm LPD} - \bV \|_{\infty, \mathcal{A}}  \; \beta^*_{\max} +
	\|\bw\|_{\infty}
	\end{array}
	$$
	where the first inequality is from the triangular inequality, the next equality holds because $\beta^*_k=0$ for $k\in \mathcal{A}^{\mathsf{c}}$, and the last inequality is obvious from definitions $\beta^*_{\max} = \max\limits_{1 \le j \le p} |\beta^*_j|$ and $\|\bB\|_{\infty, \mathcal{A}} = \max\limits_{1 \le j \le p} \sum\limits_{k\in \mathcal{A}} |b_{jk}|$ for any matrix $\bB=(b_{jk})_{p\times p}$. Note that $\|\bB\|_{\infty, \mathcal{A}}$ is a semi-norm on $\mathbb{R}^{p\times p}$ given a non-empty set $\mathcal{A}$ (i.e. $\|\bB\|_{\infty, \mathcal{A}}=0$ does not imply $\bB=0$). Finally, using 
	$\widehat{\bSigma}^{\rm LPD} - \bV = 
	\alpha^* (\widehat{\bSigma}^{\rm IPW} - \bSigma) + (1-\alpha^*)(\mu \bI - \bSigma) - (\bV - \bSigma)$ and 
	the triangular inequality, we get
	\begin{equation}\label{eq:BG_four_terms}
	\begin{array}{l}
	\|\nabla\ell(\bbeta^*; \widehat{\bSigma}^{\rm LPD}, \hat{\brho}^{\rm IPW})\|_{\infty} \le 
	\bigg( \|\widehat{\bSigma}^{\rm IPW} - \bSigma \|_{\infty, \mathcal{A}} +
	(1-\alpha^*)\|\mu\bI - \bSigma \|_{\infty, \mathcal{A}} \\
	\qquad\qquad\qquad\qquad\qquad\qquad\qquad
	+\|\bSigma - \bV \|_{\infty, \mathcal{A}} \bigg) \beta^*_{\max} +
	\|\bw\|_{\infty}.
	\end{array}
	\end{equation}

	We use Lemma 1 of \cite{Park:2022_stat} to the terms above except the second. Let us define a function $f$ by
	$$
	f(n, p, \mathcal{B}) = |\mathcal{B}| \sqrt{\dfrac{2 \log p + \log |\mathcal{B}|}{2n}}, \quad \mathcal{B} \subset [p],
	$$
	$\sigma_{\max}=\max_{jj} \sigma_{jj}$, and probabilities $\pi^{xx}_{\min, \mathcal{A}} = \min\limits_{1\le j\le p, k \in \mathcal{A}}\pi^{xx}_{jk}, \pi^{xx}_{\min} = \min\limits_{1\le j, k\le p}\pi^{xx}_{jk}, \pi^{xy}_{\min} = \min\limits_{1\le j\le p}\pi^{xy}_{j}$. 
	Then, we can easily get the followings: for some numerical constants $c_1, c_2, c_3, C_1, C_2, C_3>0$ such that
	\begin{equation}\label{eq:BG_first_random}
	{\rm P}_{\delta,x}\left(
	\|\widehat{\bSigma}^{\rm IPW} - \bSigma\|_{\infty, \mathcal{A}}  \ge 
	\dfrac{C_1(K^x)^2 \sigma_{\max}}{\sqrt{\pi^{xx}_{\min, \mathcal{A}}}}f(n, p, \mathcal{A})
	\right)
	\le
	2/p,
	\end{equation}
	if $\dfrac{n}{2\log p  + \log |\mathcal{A}|} > \dfrac{1}{c_1 \pi^{xx}_{\min, \mathcal{A}}}$, 
	\begin{equation}\label{eq:BG_third_random}
	{\rm P}_{\delta,x}\left(
	\| \bV - \bSigma\|_{\infty, \mathcal{A}}  \ge 
	\dfrac{C_2 (K^x)^2\sigma_{\max}}{\sqrt{\pi^{xy}_{\min}}}f(n, p, \mathcal{A})
	\right)
	\le
	2/p,
	\end{equation}
	if $\dfrac{n}{2\log p  + \log |\mathcal{A}|} > \dfrac{1}{c_2 \pi^{xy}_{\min}}$, and
	\begin{equation}\label{eq:BG_fourth_random}
	{\rm P}_{\delta,x}\left(
	\| \bw \|_{\infty}  \ge 
	\dfrac{C_3 \sqrt{\sigma_{\max}\sigma_{\epsilon\epsilon}} K^x  K^{\epsilon}}{\sqrt{\pi^{xy}_{\min}}}f(n, p, [1])
	\right)
	\le
	2/p,
	\end{equation}
	if $\dfrac{n}{3\log p} > \dfrac{1}{c_3 \pi^{xy}_{\min}}$.
	Moreover, we get the concentration of the second term: for some $c_4, C_4>0$
	\begin{equation}\label{eq:BG_second_random}
	\begin{array}{l}
	{\rm P}_{\delta,x}\bigg(
	(1-\alpha^*)\|\mu\bI - \bSigma \|_{\infty, \mathcal{A}} \ge 
	C_4 \tr(\bSigma) \max\{(K^x)^2, 1\}\\
	\qquad \qquad \qquad \qquad \qquad \qquad \qquad \times \left( 1 + \dfrac{\|\bSigma \|_{\infty, \mathcal{A}}}{\mu}\right)\sqrt{\pi_{\max}^{(4)}}f(n, p, [1])
	\bigg)
	\le
	3/p,
	\end{array}
	\end{equation}
	if $n > c_4 \pi_{\max}^{(4)} \log^3(p \vee n)$.
	The proof of (\ref{eq:BG_second_random}) is pended until the end of the proof.

	Combining these results, it holds with probability greater than $1-9/p$
	$$
	\|\nabla\ell(\bbeta^*; \widehat{\bSigma}^{\rm LPD}, \hat{\brho}^{\rm IPW})\|_{\infty} \le 
	L \cdot f(n, p, \mathcal{A}),
	$$
	if $n > c \max\Big\{
	\log p / \pi^{xy}_{\min}, \pi^{(4)}_{\max} \log^3(p \vee n)
	\Big\}$ 
	for some numerical constant $c>0$. The factor $L>0$ is a function of parameters given by
	$$
	\begin{array}{l}
	L \propto  \beta_{\max}^*  \max\{(K^x)^2, 1\}  \sqrt{\pi_{\max}^{(4)}}\tr(\bSigma) \left( 1 + \dfrac{\|\bSigma \|_{\infty, \mathcal{A}}}{\mu}\right) \\
	\qquad \qquad \qquad \qquad  + 
	\dfrac{\max\Big\{\sqrt{\sigma_{\max} \sigma_{\epsilon\epsilon}} K^x  K^{\epsilon}, \sigma_{\max} (K^x)^2\Big\}}{\sqrt{\pi^{xy}_{\min}}}.
	\end{array}
	$$
	To derive the constant $L$, we used $\pi_{\max}^{(4)} \ge 1/\pi^{xx}_{\min, \mathcal{A}}$. Note that if $\lambda_{\min}(\widehat{\bSigma}^{\rm IPW}) > 0$, the second term in (\ref{eq:BG_four_terms}) no longer exists since $\alpha^*=0$. Then, we only need to combine (\ref{eq:BG_first_random}), (\ref{eq:BG_third_random}), (\ref{eq:BG_fourth_random}), which leads to another $L'>0$ smaller than $L$. The constant given in the statement of the theorem is deriven considering it.
	%
	
	Now, we prove (\ref{eq:BG_second_random}), which depends on the following lemma.
	\begin{lem}\label{lem:alpha_bound}
		Assume $\epsilon$ is smaller than the smallest eigenvalue of $\bSigma$.
		For $\alpha^* = 
		{\rm I}(\lambda_{\min}(\widehat{\bSigma}^{\rm IPW}) > 0) + 
		(\mu - \epsilon) / (\mu - \lambda_{\min}(\widehat{\bSigma}^{\rm IPW})) {\rm I}(\lambda_{\min}(\widehat{\bSigma}^{\rm IPW}) \le 0)$, we have
		$$
		1-\alpha^* \le \|\widehat{\bSigma}^{\rm IPW} - \bSigma \|_2 / \mu
		$$
	\end{lem}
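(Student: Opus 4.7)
The plan is to prove Lemma \ref{lem:alpha_bound} by splitting into the two cases that define $\alpha^*$ and handling each directly; the inequality is essentially algebraic once Weyl's inequality is invoked, so no intricate argument is needed.

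First, in the case $\lambda_{\min}(\widehat{\bSigma}^{\rm IPW}) > 0$, the definition gives $\alpha^* = 1$, hence $1 - \alpha^* = 0$, and the claimed bound is trivial since the right-hand side is non-negative. So the content is all in the second case.

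Second, assume $\lambda_{\min}(\widehat{\bSigma}^{\rm IPW}) \le 0$. A direct calculation gives
$$
1 - \alpha^* \;=\; 1 - \frac{\mu - \epsilon}{\mu - \lambda_{\min}(\widehat{\bSigma}^{\rm IPW})} \;=\; \frac{\epsilon - \lambda_{\min}(\widehat{\bSigma}^{\rm IPW})}{\mu - \lambda_{\min}(\widehat{\bSigma}^{\rm IPW})}.
$$
For the denominator, since $\lambda_{\min}(\widehat{\bSigma}^{\rm IPW}) \le 0 \le \mu$, we have $\mu - \lambda_{\min}(\widehat{\bSigma}^{\rm IPW}) \ge \mu > 0$. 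For the numerator, Weyl's inequality gives $\lambda_{\min}(\bSigma) - \lambda_{\min}(\widehat{\bSigma}^{\rm IPW}) \le \|\widehat{\bSigma}^{\rm IPW} - \bSigma\|_2$, and combining this with the standing assumption $\epsilon < \lambda_{\min}(\bSigma)$ yields
$$
\epsilon - \lambda_{\min}(\widehat{\bSigma}^{\rm IPW}) \;<\; \lambda_{\min}(\bSigma) - \lambda_{\min}(\widehat{\bSigma}^{\rm IPW}) \;\le\; \|\widehat{\bSigma}^{\rm IPW} - \bSigma\|_2.
$$
Dividing the two bounds gives exactly $1 - \alpha^* \le \|\widehat{\bSigma}^{\rm IPW} - \bSigma\|_2 / \mu$, as required.

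There is really no hard step here. The only subtlety is noticing that the assumption $\epsilon < \lambda_{\min}(\bSigma)$ is what lets us replace $\epsilon$ by $\lambda_{\min}(\bSigma)$ in the numerator so that Weyl's inequality applies cleanly, and that the sign constraint $\lambda_{\min}(\widehat{\bSigma}^{\rm IPW}) \le 0$ is what makes the denominator at least $\mu$. Once those are pointed out, the bound is immediate.
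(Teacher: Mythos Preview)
Your proof is correct and follows essentially the same approach as the paper: both arguments compute $1-\alpha^*$ explicitly, bound the denominator below by $\mu$ using $\lambda_{\min}(\widehat{\bSigma}^{\rm IPW}) \le 0$, and bound the numerator via the assumption $\epsilon < \lambda_{\min}(\bSigma)$ together with Weyl's inequality. The only cosmetic difference is that the paper packages the case split using an indicator function and a positive-part $(\cdot)_+$, whereas you spell the two cases out separately.
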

	\noindent
	\begin{proof}
		By definition of $\alpha^*$, we have
		$$
		1-\alpha^*
		= (\epsilon - \lambda_{\min}(\widehat{\bSigma}^{\rm IPW})) / (\mu - \lambda_{\min}(\widehat{\bSigma}^{\rm IPW})) {\rm I}(\lambda_{\min}(\widehat{\bSigma}^{\rm IPW}) \le 0).
		$$
		Now, we observe
		$$
		\begin{array}{rcl}
		\dfrac{\epsilon - \lambda_{\min}(\widehat{\bSigma}^{\rm IPW})}{\mu - \lambda_{\min}(\widehat{\bSigma}^{\rm IPW})}{\rm I}(\lambda_{\min}(\widehat{\bSigma}^{\rm IPW}) \le 0) &\le& \dfrac{(\epsilon - \lambda_{\min}(\widehat{\bSigma}^{\rm IPW}))_+}{\mu} \\
		&\le& \dfrac{(\lambda_{\min}(\bSigma) - \lambda_{\min}(\widehat{\bSigma}^{\rm IPW}))_+}{\mu} \\
		&\le&\dfrac{\|\widehat{\bSigma}^{\rm IPW} - \bSigma\|_2}{\mu}
		\end{array}
		$$
		where we use Weyl's inequality in the last inequality.
	\end{proof}
	\noindent
	By applying Lemma \ref{lem:alpha_bound}, we get
	\begin{equation}\label{eq:BG_third}
	(1-\alpha^*)\|\mu\bI - \bSigma \|_{\infty, \mathcal{A}} \le \|\widehat{\bSigma}^{\rm IPW} - \bSigma\|_2\dfrac{\|\mu\bI - \bSigma \|_{\infty, \mathcal{A}}}{\mu} \le 
	\|\widehat{\bSigma}^{\rm IPW} - \bSigma\|_2\left( 1 + \dfrac{\|\bSigma \|_{\infty, \mathcal{A}}}{\mu}\right),
	\end{equation}
	From Theorem \ref{thm:IPW_spectral}, if the sample size condition $n > \pi_{\max}^{(4)} (\alpha + 1)^3 \log^3(p \vee n)$ is satisfied, it holds with probability at least $1-3/p^{\alpha}$ that 
	\begin{equation}
	||\widehat{\bSigma}^{\rm IPW} - \bSigma ||_2 \le 
	C \tr(\bSigma) \max\{(K^x)^2, 1\}
	\sqrt{\frac{\pi_{\max}^{(4)}(\alpha + 1) \log p}{n}},
	\end{equation}
	where $C>0$ is some numerical constant. This concludes that if $n > 16\pi_{\max}^{(4)} \log^3(p \vee n)$
	$$
	\begin{array}{l}
	{\rm P}_{\delta,x}\bigg(
	(1-\alpha^*)\|\mu\bI - \bSigma \|_{\infty, \mathcal{A}} \ge 
	C \tr(\bSigma) \max\{(K^x)^2, 1\}\\
	\qquad\qquad\qquad\qquad\qquad\qquad\qquad \times \left( 1 + \dfrac{\|\bSigma \|_{\infty, \mathcal{A}}}{\mu}\right)
	\sqrt{\frac{2\pi_{\max}^{(4)} \log p}{n}}
	\;\bigg)
	\le
	3/p.
	\end{array}
	$$
\end{proof}

\section{Additional details/results of simulation study}

\subsection{The corrected cross-validation}\label{app:corr_cv}
For the cross-validation, we split data into $K$ folds. Let $\widehat{\bbeta}_k(\lambda)$ be the solution of any penalized regression estimated with tuning parameter at $\lambda$ and with all samples but in the $k$-th fold. Given a set $\Lambda$ of candidates, we aim to find the best one that minimizes the prediction error on the $k$-th fold:
$$
\hat{\lambda}_{opt} = \argmin_{\lambda \in \Lambda} 
\sum_{k=1}^K (\widehat{\bbeta}_k(\lambda))^\top (\widehat{\bSigma}^{\rm IPW}_k)_+ \widehat{\bbeta}_k(\lambda) - 2 \hat{\brho}_k \widehat{\bbeta}_k(\lambda).
$$
Here, we define
$$
(\widehat{\bSigma}^{\rm IPW}_k)_+ = \begin{cases}
\mu \alpha \widehat{\bSigma}^{\rm IPW}_k + (1-\alpha) \bI, & \text{for cases of LPD, NCL}\\
\min\limits_{\bSigma \succeq 0} \left\| \widehat{\bSigma}^{\rm IPW}_k - \bSigma \right\|_{\max}, & \text{for cases of CoCo},
\end{cases}
$$
and $\widehat{\bSigma}^{\rm IPW}_k$ is the IPW estimate calculated over samples in the $k$-th fold, and $\hat{\brho}_k$ is similarly defined.

\subsection{Method comparison}\label{sec:supp_method}

We focus on comparing a list of variants of LPD. For spectral norm and $\ell_\infty$-norm, any value over some lower bound, say $\mu_{lb}$, will do, so we suggest trying $k \cdot \mu_{lwr}$, $k=1,3, 5$, to see how much their performances are different. Considering these variants, we name our proposals by LPD-\textit{norm}-\textit{k} where $\text{\textit{norm}} \in \{\text{S}, \text{F}, \text{I}, \text{E}\}$ and $k \in \{1,3,5\}$, resulting 8 estimators (LPD-S-1, LPD-S-3, LPD-S-5, 
LPD-F-1,
LPD-I-1, LPD-I-3, LPD-I-5,
LPD-E-1).

\begin{table}[H]
	\tiny
	\centering
	\begin{tabular}{|c|c|c|c|c|c|c|}
		\hline & \multicolumn{6}{|c|}{$ p=200,s=0.05 $}\\
		\cline{2-7}
		& PE & MSE & pAUC & F$_1$ & TP & FP \\
		\hline TL & 1.915 (0.609)& 3.656 (1.145)& 0.953 (0.031)& 0.439 (0.071)& 9.680 (0.513)& 25.560 (7.484) \\ 
		NL & 3.694 (1.034)& 6.160 (1.638)& 0.879 (0.063)& 0.396 (0.069)& 8.620 (1.086)& 25.720 (7.420) \\ 
		CoCo & 3.385 (0.927)& 6.441 (1.772)& 0.830 (0.065)& 0.400 (0.076)& 8.440 (1.163)& 24.460 (6.102) \\ 
		NCL & 5.158 (1.222)& 6.292 (1.601)& 0.508 (0.075)& 0.453 (0.093)& 8.140 (1.309)& 19.060 (10.442) \\ 
		LPD-E-1 & 3.290 (0.840)& 6.308 (1.659)& 0.879 (0.054)& 0.369 (0.070)& 8.780 (0.996)& 29.840 (7.313) \\ 
		LPD-F-1 & 3.608 (0.927)& 6.534 (1.708)& 0.881 (0.053)& 0.350 (0.063)& 8.880 (0.982)& 32.920 (7.948) \\ 
		LPD-L-1 & 3.311 (0.867)& 6.262 (1.640)& 0.879 (0.053)& 0.370 (0.066)& 8.800 (1.050)& 29.640 (7.551) \\ 
		LPD-L-3 & 3.242 (0.844)& 6.131 (1.548)& 0.878 (0.056)& 0.377 (0.062)& 8.780 (1.036)& 28.320 (5.223) \\ 
		LPD-L-5 & 3.260 (0.806)& 6.182 (1.515)& 0.880 (0.054)& 0.376 (0.066)& 8.820 (1.004)& 28.780 (6.075) \\ 
		LPD-S-1 & 3.256 (0.828)& 6.181 (1.572)& 0.879 (0.055)& 0.376 (0.067)& 8.780 (0.996)& 28.680 (6.149) \\ 
		LPD-S-3 & 3.251 (0.817)& 6.165 (1.530)& 0.878 (0.054)& 0.376 (0.064)& 8.800 (1.050)& 28.680 (5.527) \\ 
		LPD-S-5 & 3.300 (0.839)& 6.282 (1.578)& 0.878 (0.055)& 0.363 (0.067)& 8.780 (0.996)& 30.560 (7.654) \\ 
		\hline & \multicolumn{6}{|c|}{$ p=500,s=0.05 $}\\
		\cline{2-7}
		& PE & MSE & pAUC & F$_1$ & TP & FP \\
		\hline TL & 6.039 (1.193)& 11.825 (2.347)& 0.809 (0.048)& 0.420 (0.050)& 22.980 (1.286)& 62.980 (16.109) \\ 
		NL & 17.374 (4.272)& 27.698 (3.981)& 0.535 (0.081)& 0.278 (0.055)& 12.240 (2.966)& 50.440 (9.311) \\ 
		CoCo & 16.370 (2.833)& 31.179 (4.848)& 0.596 (0.046)& 0.276 (0.051)& 11.880 (2.847)& 49.060 (9.421) \\ 
		NCL & 28.492 (7.734)& 27.538 (3.863)& 0.504 (0.061)& 0.212 (0.055)& 14.560 (5.035)& 106.460 (55.869) \\ 
		LPD-E-1 & 18.634 (3.463)& 29.315 (4.630)& 0.703 (0.057)& 0.247 (0.044)& 14.760 (2.959)& 80.900 (19.125) \\ 
		LPD-F-1 & 26.511 (6.173)& 31.870 (5.696)& 0.702 (0.054)& 0.238 (0.045)& 14.920 (2.687)& 88.020 (25.206) \\ 
		LPD-L-1 & 14.017 (2.209)& 26.636 (3.549)& 0.703 (0.056)& 0.250 (0.045)& 14.580 (2.829)& 78.020 (17.977) \\ 
		LPD-L-3 & 14.030 (2.391)& 26.661 (4.044)& 0.704 (0.054)& 0.251 (0.044)& 14.560 (2.865)& 77.400 (17.331) \\ 
		LPD-L-5 & 13.869 (2.186)& 26.393 (3.570)& 0.704 (0.055)& 0.252 (0.043)& 14.540 (2.887)& 76.380 (14.380) \\ 
		LPD-S-1 & 13.923 (2.078)& 26.499 (3.362)& 0.704 (0.055)& 0.251 (0.042)& 14.440 (2.786)& 76.700 (17.765) \\ 
		LPD-S-3 & 13.853 (2.097)& 26.377 (3.434)& 0.703 (0.053)& 0.253 (0.043)& 14.520 (2.880)& 75.660 (15.904) \\ 
		LPD-S-5 & 14.129 (2.182)& 26.761 (3.763)& 0.703 (0.055)& 0.251 (0.047)& 14.600 (2.871)& 78.200 (21.832) \\ 
		\hline
	\end{tabular}
	\caption{Method comparison for $p=200, 500$ and $s=0.05, 0.1$. Each performance measures are averaged over $R=100$ repetitions (standard deviation in parenthesis).}
\end{table}

\begin{table}[H]
	\tiny
	\centering
	\begin{tabular}{|c|c|c|c|c|c|c|}
		\hline & \multicolumn{6}{|c|}{$ p=200,s=0.1 $}\\
		\cline{2-7}
		& PE & MSE & pAUC & F$_1$ & TP & FP \\
		\hline TL & 3.220 (0.763)& 6.251 (1.483)& 0.916 (0.034)& 0.532 (0.066)& 19.600 (0.606)& 35.220 (9.790) \\ 
		NL & 11.020 (3.241)& 15.799 (3.181)& 0.755 (0.061)& 0.434 (0.059)& 14.240 (2.273)& 31.440 (5.444) \\ 
		CoCo & 9.878 (2.507)& 17.890 (4.268)& 0.715 (0.053)& 0.431 (0.068)& 13.640 (2.145)& 29.980 (7.150) \\ 
		NCL & 17.212 (3.866)& 17.602 (2.613)& 0.614 (0.045)& 0.386 (0.100)& 14.280 (2.241)& 46.520 (27.309) \\ 
		LPD-E-1 & 9.085 (1.956)& 17.196 (3.661)& 0.765 (0.054)& 0.406 (0.056)& 14.880 (2.086)& 38.960 (9.167) \\ 
		LPD-F-1 & 10.020 (2.320)& 17.907 (3.941)& 0.765 (0.054)& 0.394 (0.054)& 14.900 (2.082)& 41.260 (8.689) \\ 
		LPD-L-1 & 8.914 (2.040)& 16.123 (3.352)& 0.764 (0.054)& 0.414 (0.056)& 14.700 (2.053)& 36.660 (7.176) \\ 
		LPD-L-3 & 8.868 (1.969)& 16.161 (3.436)& 0.768 (0.054)& 0.415 (0.055)& 14.780 (2.122)& 36.660 (6.394) \\ 
		LPD-L-5 & 8.916 (2.131)& 16.137 (3.395)& 0.765 (0.055)& 0.414 (0.056)& 14.780 (2.141)& 36.800 (6.958) \\ 
		LPD-S-1 & 8.819 (2.044)& 16.157 (3.432)& 0.765 (0.055)& 0.413 (0.052)& 14.740 (2.058)& 36.780 (6.538) \\ 
		LPD-S-3 & 8.840 (2.057)& 16.113 (3.424)& 0.764 (0.053)& 0.414 (0.056)& 14.700 (2.112)& 36.500 (6.519) \\ 
		LPD-S-5 & 9.045 (2.218)& 16.381 (3.655)& 0.764 (0.056)& 0.411 (0.059)& 14.760 (2.036)& 37.660 (8.277) \\ 
		\hline & \multicolumn{6}{|c|}{$ p=500,s=0.1 $}\\
		\cline{2-7}
		& PE & MSE & pAUC & F$_1$ & TP & FP \\
		\hline TL & 14.102 (2.010)& 27.752 (4.021)& 0.684 (0.045)& 0.474 (0.048)& 43.740 (2.284)& 92.480 (21.073) \\ 
		NL & 48.511 (11.754)& 75.830 (9.527)& 0.392 (0.062)& 0.272 (0.056)& 16.840 (3.966)& 56.320 (7.377) \\ 
		CoCo & 47.069 (8.296)& 90.279 (15.734)& 0.547 (0.032)& 0.254 (0.048)& 15.180 (3.336)& 53.820 (8.075) \\ 
		NCL & 76.743 (26.682)& 64.362 (9.807)& 0.492 (0.038)& 0.245 (0.038)& 25.380 (7.545)& 130.100 (42.421) \\ 
		LPD-E-1 & 59.310 (12.606)& 81.429 (11.177)& 0.606 (0.045)& 0.260 (0.047)& 20.820 (4.341)& 89.180 (17.235) \\ 
		LPD-F-1 & 93.961 (23.197)& 91.393 (14.167)& 0.606 (0.044)& 0.252 (0.044)& 21.160 (4.560)& 96.360 (18.729) \\ 
		LPD-L-1 & 37.572 (5.268)& 72.016 (9.589)& 0.601 (0.044)& 0.261 (0.044)& 20.900 (4.273)& 89.580 (15.831) \\ 
		LPD-L-3 & 37.343 (5.633)& 71.308 (10.009)& 0.606 (0.043)& 0.263 (0.047)& 20.620 (4.125)& 86.680 (17.115) \\ 
		LPD-L-5 & 37.214 (5.183)& 71.073 (9.155)& 0.606 (0.044)& 0.263 (0.047)& 20.800 (4.536)& 87.240 (14.981) \\ 
		LPD-S-1 & 37.091 (4.728)& 70.722 (8.250)& 0.603 (0.042)& 0.264 (0.046)& 20.600 (4.267)& 85.180 (16.184) \\ 
		LPD-S-3 & 36.894 (4.797)& 70.567 (8.786)& 0.604 (0.045)& 0.264 (0.049)& 20.600 (4.290)& 85.440 (14.098) \\ 
		LPD-S-5 & 36.937 (5.200)& 70.630 (9.674)& 0.605 (0.046)& 0.264 (0.048)& 20.420 (4.121)& 84.700 (15.538) \\ 
		\hline
	\end{tabular}
	\caption{Method comparison for $p=200, 500$ and $s=0.05, 0.1$. Each performance measures are averaged over $R=100$ repetitions (standard deviation in parenthesis).}
\end{table}
Among four matrix norms considered here, $\ell_\infty$-norm (LPD-L) and spectral norm (LPD-S) perform best, while different $\mu$ values do not result in any significant changes in practice. The other two norms do not achieve comparative results when the dimension increases to $p=500$.

\subsection{Missng mechanism}\label{sec:supp_missing}

Also, we fix the multiplicative factor $k=1$ for all matrix norms in LPD.



\begin{table}[H]
	\tiny
	\centering
	\begin{tabular}{|c|c|c|c|c|c|c|}
		\hline & \multicolumn{6}{|c|}{$ \theta=0.9,\text{ MAR} $}\\
		\cline{2-7}
		& PE & MSE & pAUC & F$_1$ & TP & FP \\
		\hline TL & 1.860 (0.536)& 3.558 (1.059)& 0.948 (0.039)& 0.455 (0.063)& 9.700 (0.544)& 23.640 (5.784) \\ 
		NL & 3.654 (1.052)& 5.989 (1.528)& 0.866 (0.067)& 0.389 (0.076)& 8.500 (1.074)& 26.220 (7.731) \\ 
		CoCo & 3.229 (0.861)& 6.179 (1.627)& 0.832 (0.064)& 0.387 (0.084)& 8.340 (1.171)& 25.980 (8.482) \\ 
		NCL & 4.823 (1.126)& 6.149 (1.613)& 0.548 (0.091)& 0.428 (0.113)& 8.080 (1.275)& 23.260 (17.444) \\ 
		LPD-E-1 & 3.316 (0.907)& 6.227 (1.672)& 0.879 (0.058)& 0.346 (0.071)& 8.680 (0.935)& 32.940 (9.182) \\ 
		LPD-F-1 & 3.451 (0.937)& 6.240 (1.652)& 0.877 (0.059)& 0.343 (0.065)& 8.740 (0.944)& 33.660 (9.164) \\ 
		LPD-L-1 & 3.147 (0.836)& 5.934 (1.482)& 0.876 (0.060)& 0.371 (0.065)& 8.520 (1.054)& 28.240 (6.962) \\ 
		LPD-S-1 & 3.094 (0.815)& 5.893 (1.484)& 0.877 (0.060)& 0.366 (0.065)& 8.500 (1.015)& 28.760 (6.133) \\ 
		\hline & \multicolumn{6}{|c|}{$ \theta=0.7,\text{ MAR} $}\\
		\cline{2-7}
		& PE & MSE & pAUC & F$_1$ & TP & FP \\
		\hline TL & 1.828 (0.490)& 3.512 (0.991)& 0.956 (0.037)& 0.438 (0.076)& 9.740 (0.600)& 26.040 (7.982) \\ 
		NL & 9.796 (2.676)& 8.887 (1.463)& 0.718 (0.100)& 0.290 (0.073)& 5.600 (1.400)& 24.060 (9.646) \\ 
		CoCo & 6.027 (1.422)& 10.851 (2.433)& 0.666 (0.096)& 0.303 (0.075)& 5.480 (1.344)& 21.080 (5.606) \\ 
		NCL & 6.813 (1.513)& 10.039 (1.974)& 0.466 (0.081)& 0.312 (0.091)& 4.980 (1.363)& 17.500 (5.694) \\ 
		LPD-E-1 & 7.048 (3.141)& 11.014 (3.025)& 0.743 (0.093)& 0.253 (0.060)& 6.400 (1.539)& 34.400 (7.910) \\ 
		LPD-F-1 & 21.120 (34.859)& 14.843 (8.075)& 0.746 (0.096)& 0.235 (0.078)& 6.140 (2.204)& 36.020 (9.079) \\ 
		LPD-L-1 & 5.344 (1.177)& 9.132 (1.592)& 0.744 (0.096)& 0.285 (0.061)& 6.540 (1.216)& 29.960 (5.577) \\ 
		LPD-S-1 & 5.238 (1.050)& 9.163 (1.526)& 0.742 (0.093)& 0.283 (0.060)& 6.520 (1.233)& 30.180 (6.521) \\ 
		\hline & \multicolumn{6}{|c|}{$ \theta=0.9,\text{ MNAR} $}\\
		\cline{2-7}
		& PE & MSE & pAUC & F$_1$ & TP & FP \\
		\hline TL & 1.937 (0.558)& 3.697 (1.087)& 0.951 (0.033)& 0.430 (0.073)& 9.700 (0.463)& 26.700 (8.122) \\ 
		NL & 3.952 (1.097)& 6.682 (1.552)& 0.857 (0.063)& 0.369 (0.077)& 8.080 (1.412)& 26.500 (7.492) \\ 
		CoCo & 3.698 (1.010)& 7.055 (1.988)& 0.817 (0.066)& 0.361 (0.075)& 8.060 (1.219)& 27.820 (8.578) \\ 
		NCL & 5.062 (1.149)& 6.917 (1.581)& 0.584 (0.070)& 0.372 (0.109)& 7.720 (1.325)& 28.600 (19.799) \\ 
		LPD-E-1 & 3.624 (0.817)& 6.807 (1.588)& 0.852 (0.063)& 0.341 (0.065)& 8.200 (1.229)& 30.840 (7.980) \\ 
		LPD-F-1 & 3.679 (0.758)& 6.784 (1.474)& 0.851 (0.064)& 0.336 (0.050)& 8.320 (1.186)& 31.680 (6.485) \\ 
		LPD-L-1 & 3.470 (0.893)& 6.602 (1.685)& 0.850 (0.064)& 0.351 (0.064)& 8.220 (1.217)& 29.360 (7.331) \\ 
		LPD-S-1 & 3.478 (0.786)& 6.586 (1.509)& 0.851 (0.061)& 0.353 (0.066)& 8.220 (1.282)& 29.300 (8.117) \\ 
		\hline & \multicolumn{6}{|c|}{$ \theta=0.7,\text{ MNAR} $}\\
		\cline{2-7}
		& PE & MSE & pAUC & F$_1$ & TP & FP \\
		\hline TL & 1.927 (0.536)& 3.708 (1.036)& 0.945 (0.039)& 0.426 (0.064)& 9.700 (0.505)& 27.000 (8.732) \\ 
		NL & 10.107 (3.407)& 9.440 (1.697)& 0.688 (0.080)& 0.286 (0.089)& 5.280 (1.371)& 22.620 (6.648) \\ 
		CoCo & 6.750 (2.215)& 12.217 (4.246)& 0.660 (0.072)& 0.286 (0.082)& 5.080 (1.226)& 21.100 (5.486) \\ 
		NCL & 7.116 (1.667)& 10.195 (2.007)& 0.472 (0.073)& 0.306 (0.093)& 4.820 (1.466)& 17.400 (7.741) \\ 
		LPD-E-1 & 6.930 (2.367)& 10.865 (2.421)& 0.759 (0.082)& 0.251 (0.064)& 6.320 (1.362)& 35.020 (7.878) \\ 
		LPD-F-1 & 10.617 (5.046)& 13.477 (4.554)& 0.759 (0.084)& 0.234 (0.067)& 6.500 (1.821)& 39.740 (11.940) \\ 
		LPD-L-1 & 5.384 (1.176)& 9.481 (1.686)& 0.756 (0.083)& 0.255 (0.063)& 6.320 (1.504)& 33.760 (7.224) \\ 
		LPD-S-1 & 5.351 (1.223)& 9.491 (1.843)& 0.760 (0.082)& 0.260 (0.066)& 6.300 (1.432)& 32.740 (6.452) \\ 
		\hline
	\end{tabular}
	
	\caption{Sensitivity analysis for $\theta=0.7, 0.9$ and different missing mechanisms. Each performance measures are averaged over $R=100$ repetitions (standard deviation in parenthesis).}
	
\end{table}
\end{document}